\newtheorem{theorem}{Theorem}
\newtheorem{lemma}[theorem]{Lemma}
\newtheorem{remark}{Remark}
\newtheorem{define}{Definition}
\newtheorem{example}{Example}
\def\bref#1{(\ref{#1})}
\def\P{{\bf P}}
\def\pp{{\mathbf{p}}}
\def\qq{{\mathbf{q}}}
\def\Z{{\mathbb{Z}}}
\def\R{{\mathbb{R}}}
\def\V{{\mathbb{V}}}
\def\IR{{\mathbb{I}\mathbb{R}}}
\def\O{{\tilde{\mathcal{O}}}}
\begin{document}
\title{Certified Numerical Real Root Isolation of Zero-dimensional Multivariate Real Nonlinear Systems\footnote{This work was partially supported by the National Key Research and Development Program of China grant 2022YFC3802102.}}

\author{
Jin-San Cheng\footnote{Corresponding author: jcheng@amss.ac.cn}\,\,\, Junyi Wen \\
  KLMM, Institute of Systems Science, Academy of Mathematics and Systems Science,  CAS\\
 School of Mathematical Sciences, University of Chinese Academy of Sciences\\
}

\date{}

\maketitle

\begin{abstract}
Using the local geometrical properties of a given zero-dimensional square multivariate nonlinear system inside a box, we provide a simple but effective and new criterion for the uniqueness and the existence of a real simple zero of the system inside the box.
Based on the result, we design an algorithm based on subdivision and interval arithmetics to isolate all the real zeros of a general real nonlinear system inside a given box. Our method is complete for systems with only finite isolated simple real zeros inside a box. A termination precision is given for general zero-dimensional systems. Multiple zeros of the system are output in bounded boxes. A variety of benchmarks
show the effectivity and efficiency of our implementation (in C++). It works for polynomial systems with Bezout bound more than 100 million. It also works for non-polynomial nonlinear systems. We also discuss the limitations of our method.


\end{abstract}

{\bf Keywords}: Real root isolation; real nonlinear system; opposite monotone system; subdivision method; uniqueness and existence.

{\bf Mathematics Subject Classification}: 13P15 $\cdot$ 14Q30 $\cdot$ 65H10 $\cdot$ 65G40

\section{Introduction}
Real root isolation of equation systems is a fundamental problem in mathematics and engineering applications. There are many famous symbolic computation methods to solve the problem: the Gr\"{o}bner basis method \cite{GB,cox1}, the Ritt-Wu characteristic set method \cite{wu}, the cylindrical algebraic decomposition(CAD)\cite{CAD}  and the resultant method \cite{Gelfand} and so on. Though the size of the polynomial systems that can be solved is limited, symbolic methods can get the algebraic representation(s) of all the complex solutions, even including the multiplicity of the solutions.
Numerical methods, such as the homotopy continuation method \cite{homotopy}, can get all the isolated complex solutions of square polynomial systems (systems with $n$ variables and $n$ polynomials) with large sizes, even for over-determined systems or positive dimensional systems \cite{bertini}.
Though the solutions with traditional homotopy continuation method are lack of certification,
the certified homotopy method was proposed \cite{certifiedhomotopy,robustcertifiedhomotopy,reliablehomotopy,certifiedhomotopyjan,xuchee} in some literatures to overcome the shortcoming based on the famous $\alpha$-theory. To ensure that the output results are reliable, certification with interval Newton's method (see \cite{Krawczyk,Moore,Rump}) for possible real roots after homotopy continuation method is applied in \cite{xiahomoisolation}.

Subdivision method is also used to get the real solutions of the given systems, which are not limited to polynomial case \cite{Burr,Burr2,GS01,Imbach,hannielelber,Kearfott,Mantzaflaris,Mourrain,Neumaier,SP93,Xu}. There are many methods to exclude the domains without solutions \cite{GS01,Mourrain,SP93,Hong0,cheng5}.
More requirements are needed to ensure that the boxes are isolating boxes. There are three main methods for certifying a real root: Miranda theorem with Jacobian test, the interval Newton method, and the $\alpha$-theory.

Miranda theorem \cite{Miranda} and its practical version MK test \cite{GS01, Kioustelidis,Moore2} is used for checking the existence of a real zero of a system inside a box. The MK test works well for linear systems \cite{Moore2}.
 Jacobian test \cite{Aberth} is used for certifying that a system has at most one real zero inside a box, so does the methods in \cite{hannielelber}.  Thus, 
 the termination of the subdivision method based on the MK test and Jacobian test is guaranteed in a theoretical sense for square systems with 
simple roots.
 %
 When MK test is used for systems with more than two variables, it seldom succeeds which can also be found in our experiments. We also analyze the reason in the experiments section.
  Besides that, the interval Newton method \cite{Krawczyk,Moore,Rump} and $\alpha$-theory \cite{Smale} can work for testing the uniqueness and existence of the complex (or real) zeros. The interval Newton method can verify that a box is an isolating box. In \cite{Imbach,Kearfott,Neumaier,Hkapur}, the authors use the interval Newton method for real root finding. But the termination of the method is not ensured for certifying a box containing a simple zero or not by successive subdivision. Since an isolating box of a square system may not satisfy the existence and uniqueness condition, the termination condition of subdivision is absent.   Thus the theory for root isolation of equation systems based on interval Newton method is not complete even for systems with only simple roots. The $\alpha$-theory is slightly different from the other two methods, which computes derivatives with high orders, and the verified domain of $\alpha$-theory is a ball, not a box. 
  Similar as the interval Newton method, the termination of the $\alpha$-theory method for root isolation of a square system is also not complete.  In \cite{Mantzaflaris}, the authors present the concept of the $\alpha$-inclusion box and use it for seeking the real roots of a square system.  In \cite{cheng5}, we presented a new method which was based on the geometrical property, the so-called orthogonal monotonicity inside a box for a bivariate polynomial system to certify the existence and uniqueness of a real root inside the box. We used bounding polynomials to exclude the regions which contained no roots. The termination of the method is guaranteed. Thus it can be used for real root isolation inside a box containing only simple real zeros. The method was extended to bivariate nonlinear systems in the journal version \cite{chengwenJSC}. We extend the method to general zero-dimensional equation systems in this paper.

In this paper, we present a new existence criterion of a simple real root of a zero-dimensional square system inside a box, which is much easier to succeed than Miranda based criterion. Based on the new criterion, we propose an algorithm to isolate the real roots of a zero-dimensional real nonlinear square system $F=(f_1,\ldots,f_n)$ inside a given box $B=[a_1,b_1]\times \ldots \times [a_n,b_n]$, where $f_i$ and $\frac{\partial f_i}{\partial x_j}$ are well defined in $B$ for $1\le i,j\le n$. In order to analyze the roots of the system locally inside a box, we give the concept of the opposite monotone system (O-M system for short) in an $n-$D box and give a criterion to check the existence and uniqueness of a simple real zero of the given system in the given box based on properties of the opposite monotone system. Though the O-M system is firstly presented in \cite{cheng5} for 2-D case (It is called orthogonal monotone in \cite{cheng5}), the O-M condition for $n$-D systems are more complicated than that for 2-D systems.

Simply speaking, for a system $F$ and an $n$-D box $B$, 
we transform locally the original system $F$ into a new system $G=(g_1,\ldots,g_{n})$ with the same zeros as $F$ such that the curve $S=\V(g_1,\ldots,g_{n-1})$ is monotone in $B$ and $S$ intersects transversally with the hypersurface $\V(g_n)$ in $B$ (see Definition \ref{OM}). The evaluation of the functions of the tangent vector of $S$ on $B$ does not contain zeros ensures its monotonicity inside $B$. The direction of the tangent vector of $S$ and the normal vector of $\V(g_n)$ are almost identical or opposite when evaluating on $B$, which ensures the uniqueness of the root inside $B$. 
The existence of a real zero of $G$ inside the box $B$ can be determined by the change of the signs of the evaluations of $g_n$ on the two endpoints of $S$ which are the intersection of $S$ and the boundaries of $B$. Some local transformation techniques in MK test \cite{Kioustelidis,Moore2} are modified and used in our method. The new system in MK test is $J_F(m_B)^{-1}F^T$, here $J_F(m_B)^{-1}$ is the inverse of the Jacobian matrix of $F$ at the middle point of $B$ and $F^T$ is the transpose of $F$. In our method, $G=U\,J_F(m_B)^{-1}F^T$, where $U$ is an invertible matrix under some requirements.
We prove that the termination of the subdivision process for finding all simple real zeros of a system inside a box. So our method is complete for real root isolation of a square nonlinear system inside a bounded box.

Since the existence condition based on the opposite monotone method is used recursively, we revise the original conditions for the opposite monotone system and propose the concept of the strong monotone (S-M) system to avoid constructing opposite monotone systems repeatedly.
For boxes which contain multiple real zeros of systems, our method is invalid, thus we give a terminate precision for subdivision process. Therefore, we may get some suspected boxes which reach the terminate precision and do not satisfy the conditions of our method. 
We give a heuristic verification method to deal with those suspected boxes. Based on our theory, we design an algorithm to isolate the real zeros of a multivariate equation system. We also analyze the complexity of our algorithm.
 We implement our algorithm in C++. Our experiments show the effectivity and efficiency of our method. We compare our method with some existing methods and analyze some aspects of the methods. Notice that our method can be used for complex root isolation since a complex nonlinear system can be transformed into a real nonlinear system. 

The rest of this paper is organized as follows. We introduce some notations and preliminaries in the next section. In Section 3, we give the concepts of O-M system and S-M system in a box and prove the uniqueness and existence theorem, then we show how to match the uniqueness and existence conditions for a given system and a given box. The algorithm of the method is also given in this section. The complexity analysis is also given there. In Section 4,
some experiment results are given and an analysis based on the results is shown. We draw a conclusion in the last section.

\section{Notations and Preliminaries}
In this section, we will give some notations, definitions and basic results.
\subsection{Notations}
Let $C^i(\Omega)$ denote the set of all $i$-order continuous differentiable functions defined in $\Omega$, where $\Omega\subset\R^n$ and $\R$ is the field of real numbers.  Let $B=[a_1,b_1]\times \ldots \times [a_n,b_n]$ be an $n$-D box in $\R^n$. Let
$$F_i^l(B)=\{(p_1,\ldots,p_n)\in B| p_i=a_i\},i=1,\ldots,n,$$
$$F_i^r(B)=\{(p_1,\ldots,p_n)\in B| p_i=b_i\}, i=1,\ldots,n.$$
We call $F_i^l(B)$ or $F_i^r(B)$ a {\bf face} of $B$ for any $1\le i \le n$. Let $m(B)=(\frac{a_1+b_1}{2},\ldots,\frac{a_n+b_n}{2})$ be the middle point of the box $B$ and $w(B)=max\{b_1-a_1,\ldots,b_n-a_n\}$ be the width of $B$. Let $\partial B=\bigcup_{i=1}^n F_i^l(B) \cup \bigcup_{i=1}^n F_i^r(B)$ be the boundaries of the box and $v(B)=\{(p_1,\ldots,p_n)\in B| p_i=a_i \text{\ or \ } b_i, i=1,\ldots,n\}$ be the set of the vertexes of $B$.

Let $F=(f_1(X),\cdots,f_n(X))$ be a function system, where $X=\{x_1,\cdots,x_n\}$ are variables and $f_i(X)\in C^1(B)$. We denote $F=0$ as the equation system $\{f_1=0,\cdots,f_n=0\}$. If a point $\pp\in \R^n$ satisfies $f_1(\pp)=\cdots=f_n(\pp)=0,$ then we call $\pp$ is a real zeros of the function system $F$ or a real root of equation system $F=0$. We denote all the real zeros of $F$ as $\V(F)$. Let $J_F^X$ be the Jacobian matrix of $F$ with respect to $X$ (simply for $J_F$ without misunderstanding). Denote $\IR,\IR^n$ and $\IR^{n\times n}$ by the set of real intervals, $n$-D interval vectors and $n\times n$ interval matrices, respectively. For a (an interval) matrix $M\in \R^{n\times n}$ ($\IR^{n\times n}$), we let $M_{i,:}$ ($M_{:,i}$) denote the $i-$th row (column) of $M$.

Let $f(X)\in C^1(B)$ be a real function and $B$ an $n$-D box. For a subset $B'\subset B$, we denote $f(B')=\{f(\mathbf{p})| \mathbf{p}\in B'\}$ and we say $f(B')>0 (<0)$ if $\forall \mathbf{p}\in B', f(\mathbf{p})>0$ $(<0)$. Similarly, for an interval $I\subset \R$, we say $I>0$ $(<0)$ if $\forall a\in I, a>0$ $(<0)$. We define a sign function of $f(B')$ as following:
$$  \mathrm{Sign}(f(B'))=\left\{
\begin{aligned}
1,&  \text{ if } f(B')>0, \\
-1,& \text{ if } f(B')<0, \\
 0,& \text{ otherwise. }
\end{aligned}
\right.
$$

\subsection{Interval Analysis}
Using our method, we need to compute the evaluation of a function $f(x_1,\ldots,x_n)\in C^1(B)$ on a box $B=I_1\times \cdots \times I_n\subset \R^n$: $f(B)=\{f(\pp)|\pp\in B\}$. However, $f(B)$ is usually difficult to be computed exactly. Interval analysis \cite{Moore0} is a useful tool to compute the enclosure of the range of a function over a box. A real function $f$ can be extended to an {\bf interval function} by interval analysis. The basic arithmetic operations over intervals are as below. Let $I_1=[a,b]\subset \R, I_2=[c,d]\subset \R$.
\begin{eqnarray*}
I_1+I_2&=&[a+c,b+d],\\
I_1-I_2&=&[a-d,b-c],\\
I_1*I_2&=&[\min\{a*c,a*d,b*c,b*d\},\max\{a*c,a*d,b*c,b*d\}],\\
I_1/I_2&=&[a,b]*[1/d,1/c],0\not\in I_2.
\end{eqnarray*}

Let $\Box f$ denote the interval function of $f$, it has two properties \cite{Lien}:
\begin{enumerate}
\item $f(B)\subset \Box f(B)$,
\item ${\lim\limits_{i \to \infty}}\Box f(B_i)=f(\lim\limits_{i \to \infty}B_i)$,
\end{enumerate}
where $B,B_i\subset \R^n$ and $\lim\limits_{i \to \infty}B_i=\pp, \pp$ is a point in $\R^n$.
There are many different forms of $\Box f$,  for polynomial case, a simple way is just using interval arithmetic \cite{Moore0}. Fox example, $g=x^2-x, B=[0,1]$, then $\Box g(B)=[0,1]\cdot [0,1]-[0,1]=[0,1]-[0,1]=[-1,1]$. Since $g(B)=[-\frac{1}{4},0]$, we can find that $\Box g(B)$ is much bigger than $g(B)$.
Notice that most of the real functions (such as exp, sin, cos, etc.) are also easy to be extended to interval functions. 

Since we usually can not get the exact representation of $f(B)$, we use it to represent $\Box f(B)$ for simplicity if there is no doubt in the rest of the paper.

\section{Uniqueness and Existence}
We will give the uniqueness and existence conditions of a square system containing a simple zero inside a box in this section.

Let $\pp\in \R^n$ be an isolated zero of a system $F$. We always assume that the functions in this section are $C^1$ inside the domain we consider. We call $\pp\in \V(F)$ a {\bf simple zero} of $F$ if $\det(J_F(\pp))\ne 0$, otherwise we say $\pp$ is a {\bf singular} or {\bf multiple zero} of $F$.
\subsection{An opposite monotone (O-M) system in a box}
The concept of the opposite monotone system inside a box for $2$-D case is first presented in \cite{cheng5}. We extend it to $n$-D case which is much more complicated. Let $G=(g_1,\ldots,g_n)$ be a nonlinear system and $G'=(g_1,\ldots,g_{n-1})$, our O-M condition is based on the geometric properties of $\V(G')$. Generally speaking, we know that $\V(G')$ is a one-dimensional curve in $\R^n$ if it exists. 
We denote the tangent vector of $\V(G')$ at $\pp\in \V(G')$ as below: 
\begin{equation} \label{eqn-Ti}
{\bf T_p}=(\det(T_1(\pp)),\ldots,(-1)^{i+1}\det(T_i(\pp)),\ldots,(-1)^{n+1}\det(T_n(\pp))),
\end{equation} 
where $X_i=X\backslash x_i=\{x_1,\ldots,x_{i-1},x_{i+1},\ldots,x_n\}$ and $T_i=J_{G'}^{X_i}, i=1,\ldots,n.$

We introduce some definitions and lemmas below and then give the concept of the monotonicity of $S=\V(G')$ in $B$.

\begin{define}
Let ${\bf U}=(u_i)\in \IR^n$  and ${\bf V}=(v_i)\in \IR^n$ be $n$-dimensional interval vectors. We say ${\bf U}$ and ${\bf V}$ are {\bf matched}, if
\begin{enumerate}[(1)]
\item for any $i\in \{1,\ldots,n\}$, $0\notin u_i$ and $0\notin v_i$.
\item for any $i,j \in \{1,\ldots,n\}$, $u_i \cdot v_i$ and $u_j \cdot v_j$ have the same signs.
\end{enumerate}
\end{define}

For example, $([1,2],[2,3],[-2,-1])$ and $([-4,-3],[-2,-1],[1,2])$ are matched. Let ${\bf U_1,U_2,U_3}$ be interval vectors. It is easily to see that the following properties hold.
\begin{enumerate}
\item If ${\bf U_1}$ and ${\bf U_2}$ are matched, ${\bf U_2}$ and ${\bf U_3}$ are matched, then ${\bf U_1}$ and ${\bf U_3}$ are matched.
 \item If ${\bf U_1}$ and ${\bf U_2}$ are matched, then ${\bf U_1}$ and ${\bf -U_2}$ are matched.
\end{enumerate}

The following lemma is well-known and can be found in many text books (see Chap. 2 Part. III in \cite{edwards} for example).
\begin{lemma}\label{MeanValue} (Mean Value Theorem) Let $\Omega\subset \R^n$ be a convex set, $f=f(x_1,\ldots,x_n)$ is a differentiable function defined on $\Omega$. Then $\forall p_1,p_2\in \Omega, \pp_1\ne \pp_2$, $\exists \theta \in (0,1)$ s.t.
$$f(\mathbf{p}_1)-f(\mathbf{p}_2)=\nabla f(\mathbf{p}_2+\theta (\mathbf{p}_1-\mathbf{p}_2))(\mathbf{p}_1-\mathbf{p}_2).$$
\end{lemma}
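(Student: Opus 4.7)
The plan is to reduce this $n$-variable statement to the classical single-variable Mean Value Theorem by restricting $f$ to the straight line segment joining $\mathbf{p}_2$ and $\mathbf{p}_1$. First I would introduce the auxiliary function $\phi : [0,1] \to \R$ defined by $\phi(t) = f(\mathbf{p}_2 + t(\mathbf{p}_1 - \mathbf{p}_2))$. The convexity of $\Omega$ is invoked precisely here: it guarantees that the whole segment $\{\mathbf{p}_2 + t(\mathbf{p}_1-\mathbf{p}_2) : t \in [0,1]\}$ lies inside $\Omega$, so $\phi$ is well defined, and it inherits continuity on $[0,1]$ as well as differentiability on $(0,1)$ from the corresponding properties of $f$.

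Next I would apply the chain rule to the composition $\phi = f \circ \gamma$, where $\gamma(t) = \mathbf{p}_2 + t(\mathbf{p}_1-\mathbf{p}_2)$ has constant derivative $\mathbf{p}_1-\mathbf{p}_2$. This yields the explicit formula $\phi'(t) = \nabla f(\mathbf{p}_2 + t(\mathbf{p}_1-\mathbf{p}_2))\cdot(\mathbf{p}_1 - \mathbf{p}_2)$ for every $t \in (0,1)$. Invoking the one-variable Mean Value Theorem for $\phi$ on $[0,1]$ then produces some $\theta \in (0,1)$ with $\phi(1) - \phi(0) = \phi'(\theta)$. Rewriting both sides in terms of $f$ via the definition of $\phi$ and the chain-rule formula for $\phi'$ gives exactly the claimed identity $f(\mathbf{p}_1) - f(\mathbf{p}_2) = \nabla f(\mathbf{p}_2 + \theta(\mathbf{p}_1 - \mathbf{p}_2))(\mathbf{p}_1 - \mathbf{p}_2)$.

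The main obstacle is essentially absent, since this is a standard textbook reduction; the authors themselves flag the result as well known. The only subtle points worth making explicit are that (i) convexity of $\Omega$ is used to keep the entire segment inside the domain of $f$, without which $\phi$ is not even defined, and (ii) the hypothesis $\mathbf{p}_1 \ne \mathbf{p}_2$ ensures $(0,1)$ is a genuine open interval on which the one-variable Mean Value Theorem applies nontrivially (otherwise both sides of the target identity are trivially zero for any choice of $\theta$).
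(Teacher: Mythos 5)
Your proof is correct and is exactly the standard textbook reduction to the one-variable Mean Value Theorem via the auxiliary function $\phi(t)=f(\mathbf{p}_2+t(\mathbf{p}_1-\mathbf{p}_2))$; the paper itself gives no proof, merely citing a textbook for this well-known fact, so there is nothing to contrast against.
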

\begin{define} Let $G'=(g_1,\ldots,g_{n-1})$ and $S=\V(G')=\{\pp\in \R^n| g_1(\pp)=\cdots=g_{n-1}(\pp)=0\}$. We say $S$ is {\bf strong monotonous} in $B$ if $0\notin \det(T_i(B)), i=1,\ldots, n$ (see \bref{eqn-Ti} for $T_i$).
 \end{define}

\begin{example} $G'=(x^2/2+y^2-2*z^2,x^2/2+y^2/2-z^2/2-1/2)$ and $B=[0.10,0.11]\times[0.10,0.11]\times[0.10,0.11]$. We have
$$T_1=J_{G'}^{\{y,z\}}=\left(
                         \begin{array}{cc}
                           2*y & -4*z \\
                           y & -z \\
                         \end{array}
                       \right), T_2=J_{G'}^{\{x,z\}}=\left(
                         \begin{array}{cc}
                           x & -4*z \\
                           x & -z \\
                         \end{array}
                       \right),T_3=J_{G'}^{\{x,y\}}=\left(
                         \begin{array}{cc}
                           x& 2*y  \\
                           x&y  \\
                         \end{array}
                       \right).$$
Thus,
                       \begin{eqnarray*}\det(T_1(B))&=& \det(\left(
                         \begin{array}{cc}
                           2*\mathrm{[0.10,0.11]} & -4*\mathrm{[0.10,0.11]}  \\
                           \mathrm{[0.10,0.11]} & -1*\mathrm{[0.10,0.11]}  \\
                         \end{array}
                       \right) )\\
                       &=&-2*[0.10,0.11]*[0.10,0.11]+4*[0.10,0.11]*[0.10,0.11]\\
                       &=&-2*[0.01,0.0121]+4*[0.01,0.0121]\\
                       &=&[-0.0242,-0.02]+[0.04,0.0484]\\
                       &=&[0.0158,0.0284].
                       \end{eqnarray*}
Similarly, we have $\det(T_2(B))=[0.0279,0.0384], \det(T_3(B))=[-0.0142,-0.0079]$. So $S=\V(G')$ is strong monotonous in $B$ since $0\not\in \det(T_i(B))$ for $i=1,2,3$.
\end{example}

Then, we will prove some nice properties of $S$ if it is strong monotonous in $B$. Let $\pp=(p_1,\cdots,p_n)\in \R^n$, we define $\Pi_i(\pp)=p_i$, for any $i=1,\ldots,n$.
\begin{lemma}\label{monolemma} Let $G'=(g_1,\cdots,g_{n-1})$ and $S=\V(G')$. If $S$ is strong monotonous in $B=[a_1,b_1]\times \cdots \times [a_n,b_n]$, then we have:
 \begin{enumerate}[(a)]
\item $\forall \hat{x_i}\in [a_i,b_i], i=1,\ldots,n$, the hyperplane $x_i=\hat{x_i}$ intersects $S$ at most once in $B$. Moreover, the hyperplane and $S$ are not tangent.
\item $S$ can not be a loop in $B$.
\item $\forall \pp,\qq\in S\cap B$ and $ \pp\ne \qq$, $(\Pi_i(\pp-\qq))_{1\le i \le n}$ and $((-1)^{i+1}\det(T_i(B))_{1\le i \le n}$ are matched.
\end{enumerate}
\begin{proof} (a) We prove only that $\forall \hat{x_1}\in I_1$, the hyperplane $x_1=\hat{x_1}$ intersects $S$ at most once in $B$. The case $i=2,\ldots,n$ can be proved similarly. Assume that the hyperplane $x_1=\hat{x_1}$ intersects $S$ at two points $\pp, \pp'$ in $B$. Let $\pp-\pp'=\Delta x=(\Delta x_1,\Delta x_2,\ldots,\Delta x_n)$. Using mean value theorem for every $g_j$, we have that there exists a point $\qq_j\in B$ s.t.
$$\sum_{i=1}^n \frac{\partial g_j}{\partial x_i}(\qq_j)\Delta x_i=g_j(\pp)-g_j(\pp')=0, j=1,\ldots,n-1.$$
Since $\Delta x_1=0$, thus we have the following equation
\[\begin{pmatrix} 1 &  0 & \cdots & 0 \\ \frac{\partial g_1}{\partial x_1}(\qq_1) &  \frac{\partial g_1}{\partial x_2}(\qq_1) & \cdots &\frac{\partial g_1}{\partial x_{n}}(\qq_1) \\ \vdots & \vdots & \ddots & \vdots \\ \frac{\partial g_{n-1}}{\partial x_1}(\qq_{n-1}) &  \frac{\partial g_{n-1}}{\partial x_2}(\qq_{n-1}) & \cdots &\frac{\partial g_{n-1}}{\partial x_{n}}(\qq_{n-1})\end{pmatrix} \begin{pmatrix} \Delta x_1 \\ \Delta x_2 \\ \vdots \\ \Delta x_n \end{pmatrix}={\bf 0}. \]
Let $A$ denote the above matrix, i.e., $A \Delta x^T={\bf 0}$. Since $\Delta x$ is nonzero, 
it implies $\det(A)=0$. However, since $S$ is strong monotonous in $B$, we have $0\notin \det(T_1(B))$, then we have $\det(A)=1 \det(M_1)\ne 0$, where $M_1\in T_1(B)$. It is a contradiction. Moreover, if $x_1=\hat{x_1}$ and $S$ are tangent at point $\pp$ in $B$, then $T_1(\pp)=0$. It is also a contradiction since $S$ is strong monotonous in $B$.

(b) If $S$ is a loop in $B$, then $\exists \hat{x_i}\in I_i$, the hyperplane $x_i=\hat{x_i}$ must intersect $S$ at two points in $B$. It is a contradiction with (a).

(c) For any two point $\pp, \qq$, let $\pp-\qq=\Delta x=(\Delta x_1,\Delta x_2,\ldots,\Delta x_n)$. By Lemma \ref{monolemma} (a), we know that $\Delta x_i\ne 0, \forall i=1,\ldots,n$. Then, we need only to prove that for $i=2,\ldots,n$, $\Delta x_1 \det(T_1(B))$ and $\Delta x_i (-1)^{i+1}\det(T_i(B))$ have the same signs.

We consider the following equation system
\[\begin{pmatrix} \Delta x_i & 0 &\cdots & 0 & -\Delta x_1 & 0 & \cdots & 0 \\ \frac{\partial g_1}{\partial x_1}(\qq_1) & \cdots & \cdots & \cdots & \frac{\partial g_1}{\partial x_i}(\qq_1) & \cdots & \cdots &\frac{\partial g_1}{\partial x_{n}}(\qq_1) \\ \vdots & \vdots & \vdots & \vdots & \vdots & \vdots & \vdots & \vdots \\ \frac{\partial g_{n-1}}{\partial x_1}(\qq_{n-1}) & \cdots & \cdots & \cdots &\frac{\partial g_{n-1}}{\partial x_i}(\qq_{n-1}) & \cdots &\cdots & \frac{\partial g_{n-1}}{\partial x_n}(\qq_{n-1}) \end{pmatrix} \begin{pmatrix} \Delta x_1 \\ \Delta x_2 \\ \vdots \\ \Delta x_n \end{pmatrix}={\bf 0}. \]
Let $A'$ denote the above matrix, whose last $n-1$ rows are the same as $A$, we have
\[\det(A')=\Delta x_i \det(M_1)- \Delta x_1 (-1)^{i+1} \det(M_i),\]
where $M_1\in T_1(B), M_i\in T_i(B)$.
If \[\Delta x_1 \det(T_1(B)) \cdot \Delta x_i (-1)^{i+1}\det(T_i(B))<0,\]
we have $\Delta x_i \det(M_1)$ and $\Delta x_1 (-1)^{i+1} \det(M_i)$ have different signs, then $\det(A')\ne 0$, and $\Delta x={\bf 0}$. It is a contradiction with $\pp,\qq$ are two different points.
Therefore,
\[\Delta x_1 \det(T_1(B)) \cdot \Delta x_i (-1)^{i+1}\det(T_i(B))>0.\]
Notice that $\Delta x_i$ and $\det(M_i)$ are all nonzero for $i=1,\ldots,n$.
Then we know that for any $j=1,\ldots,n$, $\Delta x_j (-1)^{j+1}\det(T_j(B))$ have the same signs, i.e., $(\Pi_i(\pp-\qq))_{1\le i \le n}$ and $((-1)^{i+1}\det(T_i(B))_{1\le i \le n}$ are matched.
\end{proof}
\end{lemma}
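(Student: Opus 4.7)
The plan is to reduce all three parts to the Mean Value Theorem applied to each $g_j$, and then use strong monotonicity to force the relevant cofactors to be nonzero.

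For (a), I would argue by contradiction. If two distinct points $\pp,\pp'\in S\cap B$ both lie on $x_i=\hat{x_i}$, set $\Delta x=\pp-\pp'$; then $\Delta x_i=0$. Convexity of $B$ lets me apply Lemma \ref{MeanValue} to each $g_j$ along the segment from $\pp'$ to $\pp$, yielding $\qq_j\in B$ with $\nabla g_j(\qq_j)\cdot\Delta x=0$ for $j=1,\ldots,n-1$. Augmenting these $n-1$ equations with the trivial row $e_i^T$ (encoding $\Delta x_i=0$) gives a square homogeneous system with nontrivial kernel, so its determinant must vanish. Expanding along the augmented row produces $\pm\det(M_i)$ where $M_i$ has entries chosen from the interval matrix $T_i(B)$; hence this scalar lies in $\det(T_i(B))$, which does not contain zero by strong monotonicity — contradiction. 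The non-tangency claim follows immediately because tangency at $\pp$ would force the $i$-th coordinate of ${\bf T_p}$, namely $(-1)^{i+1}\det(T_i(\pp))$, to vanish, again contradicting $0\notin\det(T_i(B))$.

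Part (b) is then an easy corollary of (a): any loop of $S$ in $B$ is a compact connected closed curve on which $\Pi_1$ is continuous, so $\Pi_1$ attains a strict maximum and any slightly lower level is attained at two distinct points of the loop, violating (a). For (c), I would reuse the MVT rows from (a) but replace the augmented row by $(\Delta x_i,0,\ldots,0,-\Delta x_1,0,\ldots,0)$ with $-\Delta x_1$ placed in position $i$. This new row also annihilates $\Delta x$, so the resulting mixed matrix $A'$ still has the nonzero $\Delta x$ in its kernel, giving $\det A'=0$. Laplace expansion along the augmented row produces
\[
\Delta x_i\,\det(M_1)\;=\;\Delta x_1\,(-1)^{i+1}\det(M_i),
\]
with $\det(M_1)\in\det(T_1(B))$ and $\det(M_i)\in\det(T_i(B))$. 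If $\Delta x_1\det(T_1(B))$ and $\Delta x_i(-1)^{i+1}\det(T_i(B))$ had opposite signs, the two sides above would have opposite nonzero signs, which is impossible. Hence they agree for every $i=2,\ldots,n$, and transitivity of the matched relation yields the full claim.

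The main subtlety I expect is that the matrix built from the MVT rows is not a single Jacobian but a \emph{mixed} matrix whose rows are evaluated at different $\qq_j\in B$. What rescues the argument is that each entry still lies in the corresponding interval-Jacobian entry, so every minor obtained by cofactor expansion along the augmented row is enclosed in the interval determinant $\det(T_k(B))$, which strong monotonicity keeps away from zero. Tracking the signs in (c), particularly the $(-1)^{i+1}$ from the cofactor position and the sign produced by placing $-\Delta x_1$ in column $i$, is the step where I would be most careful.
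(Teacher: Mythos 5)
Your proposal is correct and follows essentially the same route as the paper's own proof: Mean Value Theorem rows for each $g_j$, augmentation by a row that annihilates $\Delta x$ (the standard basis row $e_i^T$ in (a), the mixed row $(\Delta x_i,\ldots,-\Delta x_1,\ldots)$ in (c)), forcing the determinant to vanish, then expanding along the augmented row to contradict $0\notin\det(T_k(B))$. The only cosmetic difference is that your part (b) fleshes out the topological step slightly more than the paper does.
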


Now, we will give the definition of an O-M system in a box $B$.

\begin{define}\label{OMdef} Let $M\in \IR^{n\times n}$. For any $i=1,\ldots,n$, we denote $M_{\{n,i\}}\in \IR^{(n-1) \times (n-1)}$ as a sub-matrix of $M$ by deleting the $n$-th row and $i$-th column and we denote $M_{n,:}$ as $n-$th row of $M$.  We say $M$ is an O-M matrix if $((-1)^{i+n}\det(M_{\{n,i\}}))_{1\le i\le n}$ and $M_{n,:}$ are matched.
\end{define}
\begin{remark} If $M\in \R^{n\times n}$ is a matrix and satisfies the above conditions, we also regard $M$ as an O-M matrix.
\end{remark}

\begin{example} Let $M=\begin{pmatrix} [1,2] & [3,4] & [-1,1] \\ [3,4] & [-1,1] & [5,6] \\ [1,2] & [-2,-1] & [-2,-1]  \end{pmatrix}$. We have
\begin{eqnarray*}
(-1)^{1+3}\det(M_{\{3,1\}})&=&[14,25],\\
(-1)^{2+3}\det(M_{\{3,2\}})&=&[-16,-1],\\
(-1)^{3+3}\det(M_{\{3,3\}})&=&[-18,-7],
\end{eqnarray*}
and
\[M_{3,:}=([1,2], [-2,-1], [-2,-1]).\]
Therefore, $M$ is an O-M matrix.
\end{example}

\begin{define} Let $F=(f_1,\ldots,f_n)$, $F'=(f_1,\ldots,f_{n-1})$ and $S=\V(F')$. We say $F$ is an {\bf O-M system} in $B$ if
\begin{enumerate}[(1)]
\item $S$ is strong monotonous in $B$.
\item $((-1)^{i+n} T_i(B))_{1\le i\le n}$ and $(\frac{\partial f_n}{\partial x_i}(B))_{1\le i\le n}$ are matched, where $T_i=J_{F'}^{X_i}$ (see \bref{eqn-Ti} for more details).
\end{enumerate}
i.e, $J_F^X(B)$ is an O-M matrix.
\end{define}
We can use interval evaluation to get $J_F^X(B)$ and check whether it is an O-M matrix or not.

\begin{theorem}\label{OM} Let $F=(f_1,\ldots,f_n)$ and $B$ a box. If $F$ is an O-M system in $B$, then $F$ has at most one zero in $B$.
\begin{proof} Assume that $F$ has two different real zeros $\pp=(p_1,\ldots,p_n), \qq=(q_1,\ldots,q_n)$ in $B$. Using Mean Value Theorem for $f_n$, there exists a point $\pp'\in B$ such that
\[\sum_{i=1}^n\frac{\partial f_n}{\partial x_i}(\pp') \cdot \Pi_i(\pp-\qq)=f_n(\pp)-f_n(\qq)=0.\]

However, by Lemma \ref{monolemma}, since $S$ is strong monotonous in $B$, we have $(\Pi_i(\pp-\qq))_{1\le i\le n}$ and $((-1)^{i+1}\det(T_i(B))_{1\le i\le n}$ are matched. By the definition of O-M system, we have $((-1)^{i+n} T_i(B))_{1\le i\le n}$ and $(\frac{\partial f_n}{\partial x_i}(B))_{1\le i\le n}$ are matched. Therefore, $(\Pi_i(\pp-\qq))_{1\le i\le n}$ and $(\frac{\partial f_n}{\partial x_i}(B))_{1\le i\le n}$ are matched. Then, we know that
\[\forall \qq'\in B, \sum_{i=1}^n\frac{\partial f_n}{\partial x_i}(\qq') \cdot \Pi_i(\pp-\qq)\ne 0.\]
It is a contradiction. Hence $F$ has at most one zero in $B$.
\end{proof}
\end{theorem}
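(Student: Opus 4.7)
The plan is to argue by contradiction. Suppose $F$ has two distinct real zeros $\pp,\qq\in B$. Since $f_n(\pp)=f_n(\qq)=0$, applying the Mean Value Theorem (Lemma \ref{MeanValue}) to $f_n$ on the segment joining the two points produces a point $\pp'\in B$ with
\[\sum_{i=1}^n \frac{\partial f_n}{\partial x_i}(\pp')\,\Pi_i(\pp-\qq) = f_n(\pp)-f_n(\qq) = 0.\]
The whole strategy is then to show that, under the O-M hypothesis, every term in this sum is nonzero and shares a common sign, so the sum cannot vanish.

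To reach that conclusion I plan to chain two ``matched'' relations. First, both $\pp$ and $\qq$ lie on $S=\V(F')$, which is strong monotonous in $B$ by condition (1) of the O-M definition; Lemma \ref{monolemma}(c) then gives that $(\Pi_i(\pp-\qq))_{1\le i\le n}$ is matched with $((-1)^{i+1}\det(T_i(B)))_{1\le i\le n}$. Second, condition (2) of the O-M definition supplies that $((-1)^{i+n}\det(T_i(B)))_{1\le i\le n}$ is matched with $(\partial f_n/\partial x_i(B))_{1\le i\le n}$. For each index $i$ the two sign twists on $\det(T_i(B))$ differ by the common factor $(-1)^{i+1}\cdot(-1)^{i+n}=(-1)^{n+1}$, which is independent of $i$. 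Hence the two twisted tangent vectors are either identical or globally negated, and in either case they are matched, using the elementary observation that matched pairs are closed under global negation of one side.

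Transitivity of matching then yields that $(\Pi_i(\pp-\qq))_{1\le i\le n}$ and $(\partial f_n/\partial x_i(B))_{1\le i\le n}$ are matched. Because $\pp'\in B$ we have $\partial f_n/\partial x_i(\pp')\in \partial f_n/\partial x_i(B)$, so the real vector of partials at $\pp'$ is still matched with $(\Pi_i(\pp-\qq))_i$; in particular each product $\frac{\partial f_n}{\partial x_i}(\pp')\cdot\Pi_i(\pp-\qq)$ is nonzero and all such products share a common sign over $i$. Their sum therefore cannot be zero, contradicting the Mean Value identity above, which forces $F$ to have at most one zero in $B$.

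The main obstacle I expect is the bookkeeping of the parity factors $(-1)^{i+1}$ (from Lemma \ref{monolemma}(c)) and $(-1)^{i+n}$ (from condition (2) of the O-M definition): one has to notice that the per-coordinate mismatch collapses to a single global sign $(-1)^{n+1}$ and can be absorbed via closure of matching under negation. Once that alignment is in hand, the remaining ingredients — the Mean Value Theorem, transitivity of matching, and the inclusion of point evaluations in their interval extensions — combine routinely.
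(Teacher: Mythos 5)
Your proof is correct and follows essentially the same route as the paper: contradiction via the Mean Value Theorem for $f_n$, then chaining Lemma \ref{monolemma}(c) with O-M condition (2) by transitivity of the matched relation. Your explicit observation that the two parity twists $(-1)^{i+1}$ and $(-1)^{i+n}$ differ only by the global factor $(-1)^{n+1}$ (so that the two twisted tangent vectors are matched via closure under negation) is a useful detail the paper leaves implicit, but it is not a different method.
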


\subsection{Preconditioner}
In this subsection, we will find an equivalent system of the original system inside a box, that is, two systems have the same solution(s) inside the box, such that the new system satisfies the O-M condition inside the box.

We give the following lemma first. Though it is clear, we give the proof below.
\begin{lemma}\label{transformsys} Let $F=(f_1,\cdots,f_n)$ and $M\in \R^{n\times n}$ be an $n\times n$ invertible matrix. Then $\V(F)=\V(MF^T)$, where $F^T$ is the transpose of $F$.
\begin{proof} On one hand, $\forall p\in \V(F)$, we have $F(\pp)=0$, then $MF^T(\pp)=0$ i.e., $\pp\in \V(MF^T)$.
On the other hand, $\forall p\in \V(MF^T)$, we have $MF^T(\pp)=0$. Since $M$ is an invertible matrix, then we have $F^T(\pp)=M^{-1}MF^T(\pp)=0$ i.e., $\pp\in \V(F)$. Thus, we have $\V(F)=\V(MF^T)$.
\end{proof}
\end{lemma}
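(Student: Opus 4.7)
The plan is to establish the set equality $\V(F)=\V(MF^T)$ by proving the two inclusions separately, using only the defining property that $M$ is invertible. Since $\V(F)$ is by definition the common real zero set of $f_1,\ldots,f_n$, a point $\pp\in\R^n$ lies in $\V(F)$ exactly when the column vector $F^T(\pp)\in\R^n$ is the zero vector, and similarly $\pp\in\V(MF^T)$ iff $M\cdot F^T(\pp)={\bf 0}$. So the whole lemma reduces to the linear-algebra fact that for an invertible matrix $M$, the equation $M v={\bf 0}$ is equivalent to $v={\bf 0}$.

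First I would handle the forward inclusion $\V(F)\subseteq\V(MF^T)$: take $\pp\in\V(F)$, so $F^T(\pp)={\bf 0}$, and then by ordinary matrix multiplication $MF^T(\pp)=M\cdot{\bf 0}={\bf 0}$, giving $\pp\in\V(MF^T)$. This direction does not even use invertibility of $M$.

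For the reverse inclusion $\V(MF^T)\subseteq\V(F)$, I would take $\pp\in\V(MF^T)$, so $MF^T(\pp)={\bf 0}$, and then left-multiply by $M^{-1}$, which exists precisely because $M$ is invertible, to get $F^T(\pp)=M^{-1}MF^T(\pp)=M^{-1}\cdot{\bf 0}={\bf 0}$. Hence $\pp\in\V(F)$.

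There is no real obstacle here; the only thing to be careful about is notational: $F^T(\pp)$ should be read as the column vector whose $i$-th entry is $f_i(\pp)$, so that matrix-vector multiplication by $M$ is well-defined. Combining the two inclusions yields $\V(F)=\V(MF^T)$, which is exactly the claim.
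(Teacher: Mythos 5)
Your proof is correct and follows exactly the same two-inclusion argument as the paper: the forward direction is immediate from matrix multiplication, and the reverse direction left-multiplies by $M^{-1}$, which exists by invertibility. No differences worth noting.
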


We give the preconditioner which transforms locally a square system $F$ into a new system $U J_F^{-1}(\pp)\cdot F^T$, where $U\in \R^{n\times n}$ is an O-M matrix (In the rest of the paper, $U$ always denotes an O-M matrix) and $\pp\in B$. In general, we choose $\pp$ as $m(B)$. The idea of multiplying $J_F^{-1}(\pp)$ to the original system originates from \cite{Kioustelidis} and used by \cite{GS,Moore2,Mourrain}.
They just transform locally the system $F$ into $J_F^{-1}(\pp)\cdot F^T=(\tilde{f_1},\ldots,\tilde{f_n})^T$ s.t. $\V(\tilde{f_i})$ are almost orthogonal to each other in the neighborhood of $\pp$. Then the Miranda theorem can be used to check the existence of a real zero of the system. But we further do a rotation on $J_F^{-1}(\pp)\cdot F^T$ by multiplying the matrix $U$ to make the system $U J_F^{-1}(\pp)\cdot F^T$ becoming an O-M system in $B$. See the example below for illustration. A similar example to illustrate the same problem can be found in \cite{cheng5}.

\begin{example}
\label{ex3} Let $F=(2\,{y}^{2}-{z}^{2}+2\,x+ 0.16,{x}^{2}+x+3\,y-z- 0.02,3\,{x}^{2}- 5.08 \,x- 0.2492-4\,{y}^{2}-6\,z+3\,{z}^{2}+3\,y )$, $B=[-0.09,-0.04]\times[0.01,0.06]\times[0.01,0.06], U=\begin{pmatrix} 3 & 1 &1 \\ 1 & -3&1\\1&1&-3 \end{pmatrix}$. $F$ has a unique zero $(-0.080966,0.049827, 0.055071)$ in $B$. However, we can find that $F$ is not an O-M system in $B$. The preconditioner in \cite{Kioustelidis} produces  $G'=J_F^{-1}(m(B))\cdot F^T= \small{\left[ \begin {array}{c}  1.09864\,{y}^{2}- 0.570758\,{z}^{2}+
 1.11937\,x+ 0.0864214- 0.0667392\,{x}^{2}+ 0.131059\,z- 0.0715789\,y
\\ \noalign{\medskip}- 2.42325\,{y}^{2}+ 1.62329\,{z}^{2}- 2.93426\,x-
 0.163387+ 1.16858\,{x}^{2}- 2.40357\,z+ 1.03577\,y
\\ \noalign{\medskip}- 2.02830\,{y}^{2}+ 1.22491\,{z}^{2}- 2.45787\,x-
 0.143304+ 0.430319\,{x}^{2}- 1.06258\,z+ 0.026427\,y\end {array}
 \right]
}
$, but our preconditioner generates $G=U J_F^{-1}(m(B))\cdot F^T$ $= \\ \small{\left[ \begin {array}{c} - 1.15563\,{y}^{2}+ 1.13593\,{z}^{2}-
 2.03402\,x- 0.047427+ 1.39868\,{x}^{2}- 3.07297\,z+ 0.847460\,y
\\ \noalign{\medskip} 6.34009\,{y}^{2}- 4.21572\,{z}^{2}+ 7.46428\,x+
 0.433278- 3.14216\,{x}^{2}+ 6.27919\,z- 3.15246\,y
\\ \noalign{\medskip} 4.76029\,{y}^{2}- 2.62220\,{z}^{2}+ 5.55872\,x+
 0.352946- 0.18912\,{x}^{2}+ 0.91523\,z+ 0.884910\,y\end {array}
 \right]}$.  We can find that both $F$ and $G'$ do not satisfy the condition of Miranda theorem. But it is easy to check that $G$ is an O-M system in $B$.
\begin{figure}[htbp]
\centering
\begin{minipage}{0.3\textwidth}
\centering
\includegraphics[scale=0.35]{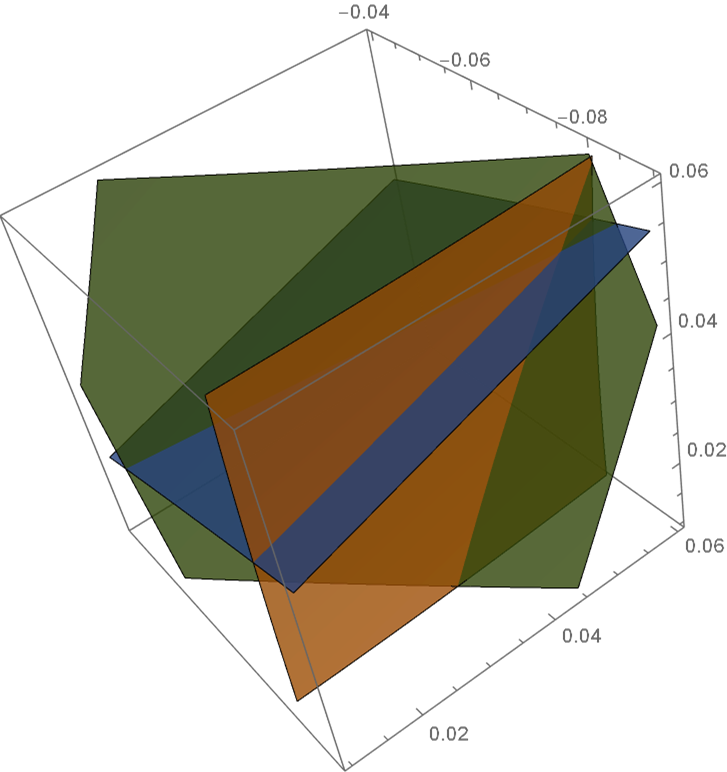}
\end{minipage}
\centering
\begin{minipage}{0.3\textwidth}
\centering
\includegraphics[scale=0.35]{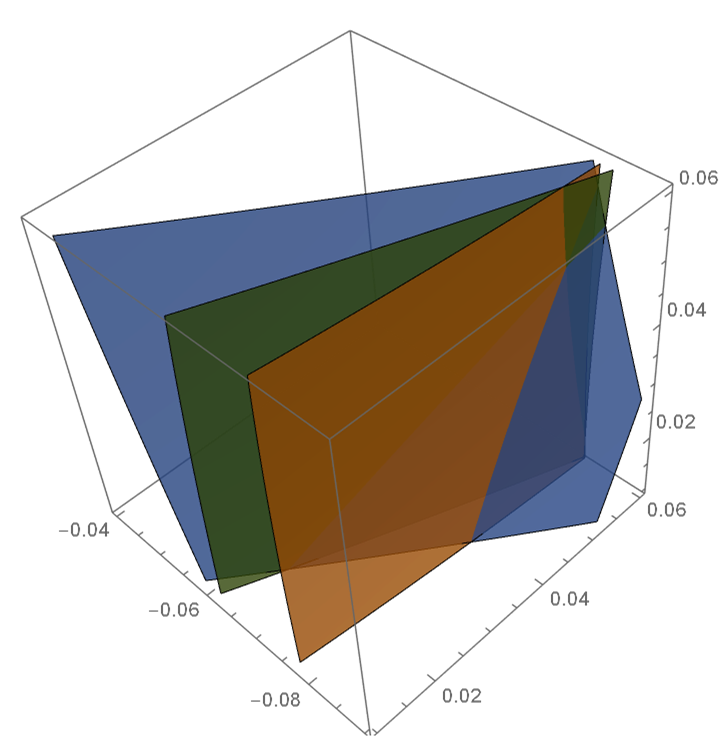} 
\end{minipage}
\centering
\begin{minipage}{0.3\textwidth}
\centering
\includegraphics[scale=0.35]{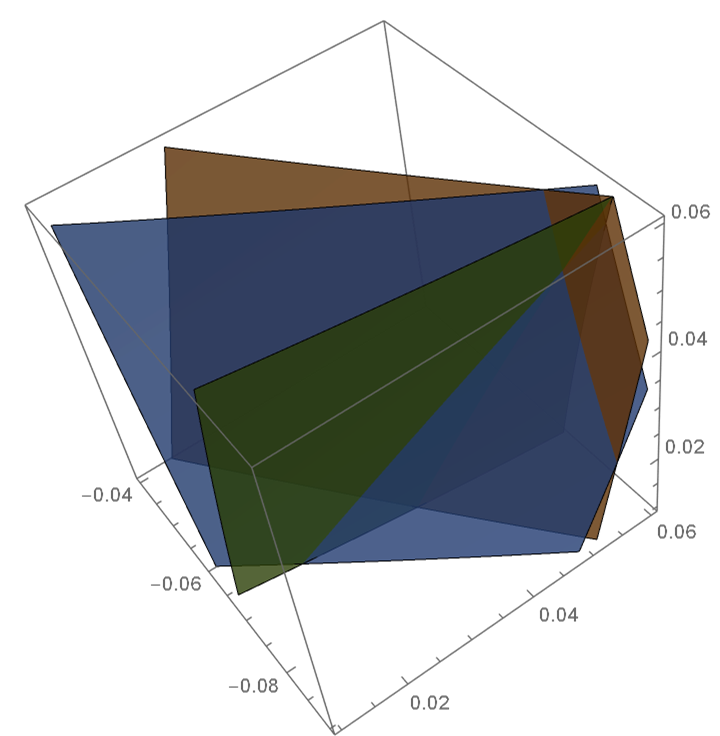} 
\end{minipage}
\caption{The left figure is of $F=0$, the middle one is of $G'=0$, the right one is of $G=0$.} \label{fig-trans}
\end{figure}
\end{example}

\noindent{\bf Remark}: We have many choices for the O-M matrix $U$. For example, when $n=2$, $\begin{pmatrix} 1 & 1 \\ 1 & -1 \end{pmatrix}$ and $\begin{pmatrix} 2 & 1 \\ 1 & -2 \end{pmatrix}$ are both O-M matrices. When $n=3$, $\begin{pmatrix} 2 & 1 & 0 \\ 1 & 2 &  1\\ 1 & -1 & 2 \end{pmatrix}$, $\begin{pmatrix} 3 & 1 & 1 \\ 1 & -3 &  1\\ 1 & 1 & 3 \end{pmatrix}$ and $\begin{pmatrix} 3 & 1 & 1 \\ 1 & -3 &  1\\ 1 & 1 & -3 \end{pmatrix}$ are all O-M matrices. We will discuss how to select a better rotation matrix later.

For each simple zero $\pp^*$ of $F$, we will prove that there always exists a small box $B$ containing $\pp^*$ s.t. $U J_F^{-1}(m(B))\cdot F^T$ is an O-M system in $B$.

For a point $\pp\in \R^n$ and a positive number $\delta>0$, we define a set of box as $B(\pp,\delta)=\{B | B$ is a box and $\pp\in B, w(B)<\delta \}$. Then, we have the following theorem:
\begin{theorem}\label{preconditioner} Let $F=(f_1,\ldots,f_n), f_i\in C^1(B), 1\le i\le n$ and $\pp^*$ a simple zero of $F$. Then, $\exists \delta>0$ s.t. $\forall B\in B(\pp^*,\delta)$, $U J_F^{-1}(m(B))\cdot F^T$ is an O-M system in $B$.
\begin{proof}  Let $G=U J_F^{-1}(m(B))\cdot F^T$, when $B\to \pp^*$, we have
$\lim \limits_{B\to \pp^*} J_{G}(B)=\lim \limits_{B\to \pp^*} U J_F^{-1}(m(B))\cdot J_F(B) =U J_F^{-1}(\pp^*)\cdot J_F(\pp^*)=U$. Since $U$ is an O-M matrix, we prove the theorem.
\end{proof}
\end{theorem}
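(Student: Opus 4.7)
The plan is to reduce everything to the fact that $J_G(\pp^*)=U$, together with the observation that the O-M condition is defined by strict sign inequalities and is therefore preserved under sufficiently small perturbations.

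First I set $G = U\, J_F^{-1}(m(B))\cdot F^T$ and record that $J_G(X) = U\, J_F^{-1}(m(B))\cdot J_F(X)$, since $U$ and $J_F^{-1}(m(B))$ are constants with respect to $X$. Because $\pp^*$ is a simple zero, $J_F(\pp^*)$ is invertible; by continuity of matrix inversion, for $B\in B(\pp^*,\delta)$ with $\delta$ small enough $J_F(m(B))$ is invertible and $J_F^{-1}(m(B))\to J_F^{-1}(\pp^*)$ as $w(B)\to 0$. Property~(2) of the interval extension $\Box$ ensures that the interval matrix $J_F(B)$ shrinks to the point matrix $J_F(\pp^*)$. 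Hence the interval evaluation $J_G(B)$ converges entrywise to $U\, J_F^{-1}(\pp^*)\cdot J_F(\pp^*) = U$.

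Next I would verify that all the inequalities defining ``$G$ is an O-M system in $B$'' already hold when $G$ is evaluated at the point $\pp^*$, where $J_G(\pp^*)=U$. Indeed, setting $G'=(g_1,\ldots,g_{n-1})$ and $T_i=J_{G'}^{X_i}$, at $\pp^*$ one has $\det(T_i(\pp^*))=\det(U_{\{n,i\}})$ and $\partial g_n/\partial x_i(\pp^*)=U_{n,i}$. So the requirements that $\det(T_i(\cdot))\neq 0$ and that $((-1)^{i+n}\det(T_i(\cdot)))_{i}$ be matched with $(\partial g_n/\partial x_i(\cdot))_{i}$ reduce at $\pp^*$ to the defining conditions of ``$U$ is an O-M matrix'' in the sense of Definition~\ref{OMdef}, which are part of the hypothesis.

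The main obstacle is then to argue that these strict sign inequalities persist once the point $\pp^*$ is inflated to a small box $B$. The quantities $\det(T_i(B))$ and the products that check matching are polynomial expressions in the entries of $J_G(B)$, and interval arithmetic on polynomials is continuous in the widths of its inputs. Combined with the convergence $J_G(B)\to U$ from the first step, one can choose $\delta>0$ so that for every $B\in B(\pp^*,\delta)$ each relevant interval excludes $0$ and has the same sign as the corresponding quantity evaluated at $U$. Hence $G$ is an O-M system in $B$, which is exactly the claim.
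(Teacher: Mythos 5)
Your argument is correct and follows the same route as the paper: establish that $J_G(B)\to U$ as the box shrinks to $\pp^*$, then observe that the O-M matrix property is defined by strict sign conditions and is therefore preserved for $J_G(B)$ once the box is small enough. The paper's proof is just a terser version of yours, leaving the ``open condition persists under small perturbation'' step implicit.
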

\noindent{\bf Remark}: If $\pp^*$ is a singular root, then for any box $B$ containing $\pp^*$, we can not transform $F$ into an O-M system in $B$ since $0=\det(J_F(\pp^*))\in \det(J_F(B))$. Hence, our method is invalid for singular roots.

 For an O-M system in $B$, it has at most one real zero in $B$ by Theorem \ref{OM}. We will discuss how to determine whether $F$ does have a zero in $B$ in the next subsection.

\subsection{Existence}
In this subsection, we will give a method to determine whether an O-M system has one zero inside a box or not.

%


\begin{lemma}\label{monolemma2} Let $G'=(g_1,\cdots,g_{n-1})$ and $S=\V(G')$. If $S$ is strong monotonous in $B=[a_1,b_1]\times \cdots \times [a_n,b_n]$, then we have that $B$ contains at most one connected component of $S$.
\begin{proof}  Assume that $B$ contains two connected components of $S$: $L,L'$. By Lemma \ref{monolemma} (a) and (b), we know that $L$ intersects the boundaries of $B$ at most twice. We prove the lemma in the following two cases:

1. If $L$ intersects the boundaries of $B$ twice, there are two cases:

\quad (1.1) $L$ passes $B$ from $F_i^l(B)$ to $F_i^r(B)$. For a point $\qq\in L'$, we have $\pi_i(\qq)\in [a_i,b_i]$. Since $L$ is a connected component, $\exists \pp\in L$ such that $\Pi_i(\pp)=\Pi_i(\qq)$. It is a contradiction with Lemma \ref{monolemma}.

\quad (1.2) $L$ passes $B$ from $F_i^l(B)$ to $F_j^r(B), i\ne j$. Let $L\cap F_i^l(B)=\{\pp_1\}$ and $L\cap F_j^r(B)=\{\pp_2\}$, then we have $\Pi_i(\pp_1)=a_i$ and $\Pi_j(\pp_2)=b_j$. For a point $\qq\in L'$, since $L$ is a connected component and Lemma \ref{monolemma} (a), we have $\Pi_i(\qq)\notin[a_i,\Pi_i(\pp_2)]$ i.e., $\Pi_i(\qq)>\Pi_i(\pp_2)$ and obviously $\Pi_j(\qq)<b_j=\Pi_j(\pp_2)$. Then we have
$$\Delta x_i=\Pi_i(\pp_1)-\Pi_i(\pp_2)<0, \Delta x_j=\Pi_j(\pp_1)-\Pi_j(\pp_2)<0,$$
$$\Delta x'_i=\Pi_i(\qq)-\Pi_i(\pp_2)>0, \Delta x'_j=\Pi_j(\qq)-\Pi_j(\pp_2)<0.$$
 Thus we have $\Delta x_i\cdot \Delta x_j>0$ and $\Delta x'_i\cdot \Delta x'_j<0$.  However, by Lemma \ref{monolemma} (c), $\forall \pp\ne \qq\in S\cap B$, $\Pi_i(\pp-\qq) (-1)^{i+1} \det(T_i(B)) \Pi_j(\pp-\qq) (-1)^{j+1}\det(T_j(B))>0$. Since the sign of $(-1)^{j+1} \det(T_i(B))$ $(-1)^{j+1}\det(T_j(B))$ is unchanged, we know the sign of $\Pi_i(\pp-\qq) \Pi_j(\pp-\qq)$ is unchanged too. Therefore we get a contradiction. Notice that $L$ passes $B$ from $F_i^r(B)$ to $F_j^l(B)$ ($F_i^l(B)$ to $F_j^l(B)$, $F_i^r(B)$ to $F_j^r(B)$) can be proved in a similar way.

2. If $L$ and $L'$ both intersect the boundaries of $B$ only once,  we know that the intersection points must be vertexes of $B$. Let $L\cap\partial B=\{\pp\}, L'\cap\partial B=\{\pp'\}$.  Without loss of generality, assume that $\pp=(a_1,\ldots,a_n)$, then by Lemma \ref{monolemma} (a), $\pp'$ must be $(b_1,\ldots,b_n)$. Considering the tangent vector of $L$ at $\pp$, by Lemma \ref{monolemma} (c), we have $(\Pi_i(\pp-\pp'))$ and $((-1)^{i+1}\det(T_i(B))$ are matched, therefore the sign of the tangent vector $(Sign((-1)^{i+1}\det(T_i(\pp)))=(1,1,\ldots,1)$ or $(-1,-1,\ldots,-1)$. That is to say $L$ will go forward to the inside of $B$. Then $L$ must go out of $B$ and intersects the boundaries of $B$ at another point. It is contradiction with the assumption. Notice that if $\pp$ is another vertex, we can prove it in a similar way. Thus, we prove the lemma.
\end{proof}
\end{lemma}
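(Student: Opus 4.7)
The plan is a proof by contradiction. Suppose $L$ and $L'$ are two distinct connected components of $S\cap B$; I would extract an impossible sign configuration from Lemma \ref{monolemma}(a)--(c).

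First I would pin down the topology of each component. Since $\det(T_i(B))\ne 0$ for every $i$, the Jacobian $J_{G'}$ has full rank $n-1$ throughout $B$, so $S$ is a smooth $1$-dimensional submanifold near every point of $B$. Each face of $B$ lies in a hyperplane $x_i=\mathrm{const}$, and the $i$-th component of the tangent vector of $S$ is $\pm\det(T_i)$, which is nonzero on $B$; hence $S$ meets every face transversely. Consequently $S\cap B$ is a compact $1$-manifold with boundary $S\cap\partial B$, and each connected component $L$ is either a circle (ruled out by Lemma \ref{monolemma}(b)) or a closed arc with exactly two endpoints on $\partial B$. By Lemma \ref{monolemma}(a), those two endpoints must lie on two distinct faces of $B$.

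I would then split into two cases. If the endpoints of $L$ sit on opposite faces $F_i^l(B)$ and $F_i^r(B)$, then $\Pi_i$ sweeps the whole interval $[a_i,b_i]$ along $L$; for any $\qq\in L'$ this produces a point of $L$ sharing the $i$-th coordinate of $\qq$, contradicting Lemma \ref{monolemma}(a). Otherwise the endpoints sit on non-opposite faces, say $\pp_1\in F_i^l(B)$ and $\pp_2\in F_j^r(B)$ with $i\ne j$; the remaining three non-opposite arrangements are handled by symmetric sign changes. In that case $\Pi_i(\pp_1-\pp_2)<0$ and $\Pi_j(\pp_1-\pp_2)<0$, so $\Pi_i(\pp_1-\pp_2)\,\Pi_j(\pp_1-\pp_2)>0$. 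For any $\qq\in L'$, the hyperplane $x_i=\Pi_i(\qq)$ meets $L$ at most once by (a), which forces $\Pi_i(\qq)\notin[a_i,\Pi_i(\pp_2)]$ and hence $\Pi_i(\qq-\pp_2)>0$, while $\Pi_j(\qq)<b_j=\Pi_j(\pp_2)$ gives $\Pi_j(\qq-\pp_2)<0$. Thus $\Pi_i(\qq-\pp_2)\,\Pi_j(\qq-\pp_2)<0$. But Lemma \ref{monolemma}(c) applied to the pairs $(\pp_1,\pp_2)$ and $(\qq,\pp_2)$ forces both of these products to share the fixed sign of $(-1)^{i+j}\det(T_i(B))\det(T_j(B))$, a contradiction.

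The main obstacle I anticipate is the sign bookkeeping in the non-opposite case. Property (a) has to be used in a slightly subtle way: it rules out a shared $\Pi_i$-value between $\qq$ and any point of $L$, and from this one must deduce the one-sided location $\Pi_i(\qq)>\Pi_i(\pp_2)$ (using that $\Pi_i$ covers $[a_i,\Pi_i(\pp_2)]$ along $L$). Then property (c) has to be translated into the statement that the product $\Pi_i(\mathbf{r}_1-\mathbf{r}_2)\,\Pi_j(\mathbf{r}_1-\mathbf{r}_2)$ has a fixed sign over all pairs $\mathbf{r}_1,\mathbf{r}_2\in S\cap B$; once that product rigidity is established, the sign flip above yields the contradiction. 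The opening topological step is conceptually clean but relies critically on the transversality guaranteed by strong monotonicity, which is the distinguishing hypothesis of the lemma.
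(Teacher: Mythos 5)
Your Case 1 analysis (opposite faces, then non-opposite faces via the product-sign rigidity from Lemma~\ref{monolemma}(c)) is correct and reproduces the paper's subcases (1.1) and (1.2). The gap is in the opening topological reduction. You assert that transversality of $S$ to each hyperplane $x_i=\mathrm{const}$ forces $S\cap B$ to be a compact $1$-manifold with boundary whose non-circular components are closed arcs with exactly two endpoints on $\partial B$. This does not follow: $\partial B$ is not a smooth hypersurface but has corners (strata of codimension $\ge 2$), and transversality to each face separately does not prevent a component of $S\cap B$ from being a single point lying on an edge or vertex of $B$. For instance with $n=2$, $g_1=x_1+x_2$ and $B=[0,1]^2$, the curve $S=\V(g_1)$ is strong monotonous in $B$ and yet $S\cap B=\{(0,0)\}$ is a lone boundary point, not a closed arc.

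Your case split therefore has no branch covering the configuration in which both $L$ and $L'$ meet $\partial B$ in exactly one point each, as lone corner contacts. That is precisely the paper's Case 2, which rules it out by applying Lemma~\ref{monolemma}(c) to a pair of points drawn from the two hypothesized components: the matching condition pins down the sign pattern of the tangent of $S$ at the corner, shows it is diagonal with respect to that corner's octant, and hence forces $S$ to actually enter the interior of $B$ there, so the contact cannot be a single point. To complete your proof you must supply this argument; transversality to the faces alone does not deliver the arc structure your reduction relies on.
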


By Lemma \ref{monolemma}, we know that if $S$ is strong monotonous in $B$, then $S$ intersects the boundaries of the box $\partial B$ at most twice, that is to say $\#(S\cap \partial B)\leq 2$, where $\#(A)$ denotes the number of elements of a set $A$. Next, we determine whether $F$ has a root or not in $B$ based on the three cases: $\#(S\cap \partial B)=0,1,2$.

If $\#(S\cap \partial B)=0$, since $S$ can not be a loop in $B$, then we know that $S$ dose not pass the box $B$. If $\#(S\cap \partial B)=1$, we know that $S$ intersects only the box at a point and this point must be a vertex of $B$. Next, we will analyze the last case: $\#(S\cap \partial B)=2$.

\begin{lemma}\label{case2} Let $F=(f_1,\ldots,f_n)$ be an O-M system in $B$. Let $F'=(f_1,\ldots,f_{n-1})$ and $S=\V(F')$. Assume that $S\cap \partial B=\{\pp_1,\pp_2\}$, we have:
  \begin{enumerate}[(1)]
\item If $f_n(\pp_1)f_n(\pp_2)\leqslant 0$, $F=0$ has a unique root in $B$.
\item If $f_n(\pp_1)f_n(\pp_2)> 0$, $F=0$ has no root in $B$.
\end{enumerate}
\begin{proof} Since $B$ contains a unique component of $S$, we can parameterize it as $\eta(t)=(x_1(t),\ldots,x_n(t))$, $t_1\leqslant t \leqslant t_2$, where $\eta(t_i)=\pp_i, i=1,2$. Since $S$ is strong monotone inside $B$, $\eta(t) (t_1\le t\le t_2)$ is strictly contained inside $B$. Consider the univariate function $g(t)=f_n(x_1(t),\ldots,x_n(t))$. If $f_n(\pp_1)f_n(\pp_2)\leqslant 0$, i.e. $g(t_1)g(t_2)\leqslant 0$, then $\exists t'\in [t_1,t_2]$ s.t. $g(t')=0$. Thus $(x_1(t'),\ldots,x_n(t'))$ is a root of $F=0$. Since $F$ is an O-M system in $B$, we know that $F=0$ has a unique root in $B$. If $f_n(\pp_1)f_n(\pp_2)>0$, i.e., $g(t_1)g(t_2)> 0$, then there are even number roots (counting multiplicity) of $g(t)=0$ in $[t_1,t_2]$. Since $F=0$ has most one root in $B$, we can know that $F=0$ has no root in $B$.
\end{proof}
\end{lemma}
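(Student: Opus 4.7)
The strategy is to parametrize the unique arc of $S$ inside $B$ and reduce both claims to a one-variable intermediate value argument for $f_n$ restricted to that arc. By Lemma \ref{monolemma2}, $S\cap B$ consists of a single connected component, and the hypothesis $S\cap\partial B=\{\pp_1,\pp_2\}$ forces this component to be an arc whose closure joins $\pp_1$ to $\pp_2$. Strong monotonicity of $S$ says $0\notin\det(T_i(B))$ for all $i$, hence $J_{F'}$ has rank $n-1$ throughout $B$; the implicit function theorem then gives local $C^1$ parametrizations, and Lemma \ref{monolemma} (no hyperplane $x_i=\hat{x_i}$ meets $S$ twice, and $S$ has no loop in $B$) lets me glue them into a single injective $C^1$ map $\eta:[t_1,t_2]\to\overline{B}$ with $\eta(t_i)=\pp_i$ and $\eta'(t)$ proportional to the tangent vector $\mathbf{T}_{\eta(t)}$ from \bref{eqn-Ti}.

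With the parametrization in hand, set $g(t)=f_n(\eta(t))$; this is a continuous (indeed $C^1$) function on $[t_1,t_2]$ with $g(t_i)=f_n(\pp_i)$, and zeros of $F$ in $B$ correspond bijectively to zeros of $g$ in $[t_1,t_2]$. For part (1), the hypothesis $f_n(\pp_1)f_n(\pp_2)\le 0$ gives $g(t_1)g(t_2)\le 0$, so the intermediate value theorem produces $t^*\in[t_1,t_2]$ with $g(t^*)=0$; then $\eta(t^*)$ is a zero of $F$ in $B$, and Theorem \ref{OM} (applicable since $F$ is O-M in $B$) guarantees that this is the only one.

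For part (2), I would exploit the O-M matching condition more directly. Since $\eta'(t)$ is proportional to $((-1)^{i+1}\det T_i(\eta(t)))_i$, and the definition of an O-M system asserts that $((-1)^{i+n}T_i(B))_i$ is matched with $(\partial f_n/\partial x_i(B))_i$ (a global sign flip by $(-1)^{n-1}$ being irrelevant to matching), each summand of
\[ g'(t)=\sum_{i=1}^n \frac{\partial f_n}{\partial x_i}(\eta(t))\cdot \eta_i'(t) \]
carries the same sign throughout $[t_1,t_2]$. Hence $g'$ never vanishes, $g$ is strictly monotonic on $[t_1,t_2]$, and the assumption $f_n(\pp_1)f_n(\pp_2)>0$ rules out a zero. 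This yields the authors' conclusion; alternatively, one can follow their parity-style argument, noting that any zero of $F$ in $B$ gives an interior zero of $g$ (the endpoints have $g\ne 0$), so $g(t_1)g(t_2)>0$ together with continuity would force at least two sign-change zeros, contradicting Theorem \ref{OM}.

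The main obstacle is the first step: rigorously promoting the qualitative information \emph{one connected component with two prescribed boundary points} into a clean global $C^1$ parametrization $\eta$. The local smoothness comes from the implicit function theorem, but to glue the pieces into a single arc that terminates exactly at $\pp_1$ and $\pp_2$ (rather than branching, self-intersecting, or leaving $B$ through a different face) one must combine Lemma \ref{monolemma2} with the monotonicity statements of Lemma \ref{monolemma}(a)--(c); everything after that is a straightforward IVT plus Theorem \ref{OM}.
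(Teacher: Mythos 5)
Your proof follows the same overall architecture as the paper's: invoke Lemma \ref{monolemma2} to get a unique arc of $S$ in $B$ joining $\pp_1$ to $\pp_2$, parametrize it by $\eta$, restrict $f_n$ to get $g(t)=f_n(\eta(t))$, and apply the intermediate value theorem plus Theorem \ref{OM} for part (1). Where you diverge is part (2). The paper argues that $g(t_1)g(t_2)>0$ forces an even number of roots of $g$ (counting multiplicity) in $[t_1,t_2]$, and then invokes the at-most-one-root conclusion of Theorem \ref{OM}; taken literally, "even number of roots counting multiplicity" is a parity statement that is not automatic for a general $C^1$ function (a zero could be a tangency, and the multiplicity bookkeeping needs justification). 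Your main argument instead shows directly that $g'(t)\ne 0$ on $[t_1,t_2]$: since $\eta'(t)$ is proportional to $((-1)^{i+1}\det T_i(\eta(t)))_i$, and the O-M matching of $((-1)^{i+n}T_i(B))_i$ with $(\partial f_n/\partial x_i(B))_i$ (unchanged by the global sign flip $(-1)^{n-1}$) makes every summand of $g'(t)=\sum_i \partial f_n/\partial x_i(\eta(t))\,\eta_i'(t)$ have the same nonzero sign, so $g$ is strictly monotone and $g(t_1)g(t_2)>0$ leaves no room for a zero. This is a cleaner and strictly stronger argument that also closes the small rigor gap in the parity count; it buys you transversality of the single zero in part (1) for free. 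You also correctly identify that the hard part is promoting the set-theoretic statement from Lemmas \ref{monolemma} and \ref{monolemma2} into a global injective $C^1$ parametrization terminating exactly at $\pp_1,\pp_2$, which the paper glosses over in the same way.
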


\subsection{Checking the existence}
We will show how to check the existence of a real zero of a system in a box in this subsection. 

Let $F=(f_1,\ldots,f_n)$ be an O-M system in $B$ and $F'=(f_1,\ldots,f_{n-1})$. By the definition of the O-M system, we know that $S=\V(f_1,\ldots,f_{n-1})$ is strong monotonous in $B$. Then by Lemma \ref{case2}, in order to check the existence, we need to compute the intersection points $\pp_1,\pp_2$ of $S$ and $\partial{B}$. However, it is not easy and unnecessary to get the points exactly, we need only to get two $(n-1)$-boxes containing $\pp_1, \pp_2$, say $B_{\pp_1},B_{\pp_2}$. Then by computing $ \mathrm{Sign}(f_n(B_{\pp_1})),  \mathrm{Sign}(f_n(B_{\pp_2}))$, we can get $\mathrm{Sign}(f_n(\pp_1)),  \mathrm{Sign}(f_n(\pp_2))$.  The signs of $f_n(B_{\pp_1}), f_n(B_{\pp_2})$ can be obtained by interval computation, since
$0\notin \frac{\partial f_n}{\partial x_i}(B)$, we can use the following lemma to compute the signs in an easier way.

\begin{lemma}\label{vertex} Let $f(x_1,\ldots,x_n)\in C^1(B)$ and $B=[a_1,b_1]\times \cdots \times [a_n,b_n]$ a box. If $0\notin \frac{\partial f}{\partial x_i}(B), i=1,\ldots,n$ and $f(v(B))>0(<0)$, then $f(B)>0(<0)$.
\begin{proof} WLOG, we assume that $f(v(B))>0$ since $f(v(B))<0$ can be proved similarly. We prove the lemma with mathematical induction:

We first prove the case $n=1$, i.e., $B=[a_1,b_1]$: Since $\frac{\partial f}{\partial x_1}(B)>0$, we know that $f$ is a monotonous univariate function, then we have $f(B)>0$ since $f(a_1)>0$ and $f(b_1)>0$.

Next we assume that the lemma is proved for $n=k-1$ and we are going to prove the case $n=k$, where $k$ is a positive integer. Let $B_i=[a_1,b_1]\times \cdots \times [a_{i-1},b_{i-1}] \times [a_{i+1},b_{i+1}] \times \cdots \times [a_n,b_n]\subset \R^{n-1}, i=1,\ldots,n$ and $t_i=a_i$ or $b_i$. $\forall 1\le i\le n$, since $f|_{x_i=t_i}(v(B_i))\subset f(v(B))$, we have $f|_{x_i=t_i}(v(B_i))>0$. Then by the assumption we have $f|_{x_i=t_i}(B_i)>0$ since the function $f|_{x_i=t_i}$ is with $k-1$ variables. In summary, we have $f(\partial(B))>0$.
$\forall p_1\in [a_1,b_1]$, we have $f|_{x_1=p_1}(v(B_1))\subset f(\partial(B))$, thus $f|_{x_1=p_1}(v(B_1))>0$. Then by the assumption we have $f|_{x_1=p_1}(B_1)>0$. By the arbitrariness of $p_1$, we have $f(B)>0$.
\end{proof}
\end{lemma}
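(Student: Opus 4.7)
My plan is to prove the lemma by induction on the dimension $n$. It suffices to treat the case $f(v(B))>0$, since the $f(v(B))<0$ case follows by applying the result to $-f$ (whose partial derivatives also satisfy $0\notin \frac{\partial(-f)}{\partial x_i}(B)$). For the base case $n=1$, the box is $[a_1,b_1]$ and $f'$ has constant nonzero sign, so $f$ is strictly monotone on $[a_1,b_1]$; its minimum therefore is attained at one of the endpoints $a_1,b_1$, both of which are positive by assumption, whence $f(B)>0$.

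For the inductive step, assume the claim holds in dimension $k-1$ and consider $B=[a_1,b_1]\times\cdots\times[a_k,b_k]$. The strategy is to first promote positivity from the vertex set $v(B)$ to the whole boundary $\partial B$, then from $\partial B$ to the interior by axis-parallel slicing. For each face, say $\{x_i=t_i\}$ with $t_i\in\{a_i,b_i\}$, the face is a $(k-1)$-dimensional box whose vertex set is a subset of $v(B)$, so $f|_{x_i=t_i}$ is positive on its vertices. The key point is that for the restriction, the remaining partials $\frac{\partial f}{\partial x_j}$ evaluated on the face are contained in $\frac{\partial f}{\partial x_j}(B)$ (since the face is a sub-box), and therefore still avoid $0$. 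The inductive hypothesis then yields $f>0$ on every face, and hence on all of $\partial B$.

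Next, for any $p_1\in[a_1,b_1]$ I would consider the slice $S_{p_1}=\{p_1\}\times[a_2,b_2]\times\cdots\times[a_k,b_k]$, a $(k-1)$-dimensional box. Its vertices have coordinates $(p_1,t_2,\ldots,t_k)$ with $t_j\in\{a_j,b_j\}$, so each lies on the face $\{x_2=t_2\}\subset\partial B$, and we have just shown $f>0$ there. The partials of $f|_{x_1=p_1}$ with respect to $x_2,\ldots,x_k$ are again sub-evaluations of the original partials and hence avoid zero. Applying the inductive hypothesis gives $f>0$ on $S_{p_1}$; letting $p_1$ vary over $[a_1,b_1]$ covers $B$ and yields $f(B)>0$.

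I do not anticipate any serious obstacle; the argument is a clean two-layer induction. The one subtle bookkeeping point to state precisely is that the hypothesis $0\notin\frac{\partial f}{\partial x_j}(B)$ is preserved under restriction to any sub-box, which is what allows the inductive hypothesis to be invoked on faces and on axis-parallel slices.
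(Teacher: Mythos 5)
Your proof is correct and follows essentially the same two-layer induction as the paper's: first lifting positivity from $v(B)$ to $\partial B$ by restricting to faces, then sweeping axis-parallel slices $\{x_1=p_1\}$ whose vertices lie on $\partial B$. The only addition is that you make explicit the (correct and necessary) observation that $0\notin\frac{\partial f}{\partial x_j}(B)$ is preserved under restriction to sub-boxes, a point the paper leaves implicit.
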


By Lemma \ref{vertex},  we need only to compute the signs of $f_n$ at the vertexes of $B_{\pp_1},B_{\pp_2}$. If $f_n(v(B_{\pp_1}))$, $f_n(v(B_{\pp_2}))$ are all positive or negative, then we know the signs of $f_n(B_{\pp_1}), f_n(B_{\pp_2})$. So do the signs of $f_n(\pp_1),f_n(\pp_2)$.

Next, we will show how to check the existence of a real zero of the system in $B$. Let $B=[a_1,b_1]\times \cdots \times [a_n,b_n]$ and $B_i=[a_1,b_1]\times \cdots \times [a_{i-1},b_{i-1}] \times [a_{i+1},b_{i+1}] \times \cdots \times [a_n,b_n]\subset \R^{n-1}, i=1,\ldots,n$. For all $p_i\in [a_i,b_i]$, let $F'|_{x_i=p_i}=(f_1|_{x_i=p_i}, \ldots,f_{n-1}|_{x_i=p_i})=$ $(f_1(x_1,\ldots,p_i,\ldots,x_n), \ldots$, $f_{n-1}(x_1,\ldots,p_i,\ldots,x_n))$. For convenience, we define the index set of the box $B$ with $2n$ elements as follows
$$Ind(B)=\{(i,a_i), (i,b_i) | i=1,\ldots,n\}.$$
If $S\cap \partial B=\{\pp_1, \pp_2\}$, it is easy to see that there exist $(i_1, t_1), (i_2, t_2)\in Ind(B)$ such that $F'|_{x_{i_1}=t_1}$, $F'|_{x_{i_2}=t_2}$ both have a unique zero in $B_{i_1}, B_{i_2}$ respectively. Therefore, for $(i, t)\in Ind(B)$, we need to know the square system $F'|_{x_i=t}$ has a zero in $B_i$ or not. We can solve those systems by our method recursively. Finally we need to isolate some bivariate systems, which is solved in \cite{cheng5}. Notice that for non-polynomial case, the method is also valid.  We give the algorithm structure of checking existence as following:

 \begin{enumerate}[(1)]
 \item Solve systems $F'|_{x_i=t}$ in $B_i$ where $(i, t)\in Ind(B)$.
 \item If $\forall (i, t)\in Ind(B)$, $F'|_{x_i=t}$ has no zero in $B_{i}$, it means that $S$ does not pass the box $B$ and $F$ has no zero in $B$.
 \item Else if $\exists (i_1, t_1), (i_2, t_2)\in Ind(B)$ such that both $F'|_{x_{i_1}=t_1}$ and $F'|_{x_{i_2}=t_2}$ have a unique zero in $B_{i_1}, B_{i_2}$ respectively. Then we compute the signs of $f_n|_{x_{i_1}=t_1}(v(B_{i_1}))$ and $f_n|_{x_{i_2}=t_2}(v(B_{i_2}))$.
      \begin{enumerate}[(a)]
      \item If $ \mathrm{Sign}(f_n|_{x_{i_1}=t_1}(v(B_{i_1})))  \mathrm{Sign}(f_n|_{x_{i_2}=t_2}(v(B_{i_2})))<0$, $F$ has a unique zero in $B$.
     \item Else, if $ \mathrm{Sign}(f_n|_{x_{i_1}=t_1}(v(B_{i_1})))  \mathrm{Sign}(f_n|_{x_{i_2}=t_2}(v(B_{i_2})))>0$, $F$ has no zero in $B$.
     \item Else if we can not determine the sign of $f_n|_{x_{i_1}=t_1}(v(B_{i_1}))$ or $f_n|_{x_{i_2}=t_2}(v(B_{i_2})$, we need to refine the boxes $B_{i_1},B_{i_2}$ and check the signs again.
      \end{enumerate}
 \item Else, we can not determine if $F$ has a real zero in $B$ or not. We need to subdivide the box and check again.
 \end{enumerate}

In Step(1), we will solve $2n$ systems with $n-1$ equations and $n-1$ variables in the worst case, that is to say, we need to solve $2n\times 2(n-1)\times \cdots \times 2\cdot 2=2^{n-1}\cdot n!$ univariate functions recursively in the worst case.

In Step(3-c), if we can not determine the signs of $f_n|_{x_{i_j}=t_j}(v(B_{i_j})), j=1,2$, we will refine the box $B_{i_j}$. That is to say, we will subdivide $B_{i_j}$ and find the unique box containing the zero of $F'|_{x_{i_j}=t_j}$ in those sub-boxes.  The worst case is that the width of refined box is very small, but we still can not determine the signs. This case happens when the zero $\pp^*$ of $F$ is on (or very close to) the boundaries of $B$, we will discuss the bad situation later.

Based on the above discussion, if a system and a box satisfy our O-M and existence conditions, we can determine that the box is an isolating box of the system. We can design an algorithm for real root isolation based on the O-M condition and the subdivision method. However, from the discussion above, we know that a recursive solving is required in our existence checking step. We need to ensure that the related system in the related box is an O-M system. In order to avoid constructing an O-M system for each one by one, we prefer to construct all these systems in their related boxes together.
In doing so, we introduce the Jacobian test.  One can find that the O-M condition implies the famous Jacobian test based on the following theorem. One can find it in many places, see the corollary of Theorem 12.1 in \cite{Aberth}.

\begin{theorem}\label{Jaco_Con} Let $G=(g_1,\ldots,g_n), g_i\in C^1(B)$. If $0\notin \det(J_G^X(B))$, then $G=0$ has at most one real root in $B$.
\end{theorem}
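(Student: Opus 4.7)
The plan is to argue by contradiction, exactly paralleling the two-point argument used in the proof of Theorem \ref{OM} but applied to all $n$ equations simultaneously. Suppose $G=0$ admits two distinct real roots $\mathbf{p},\mathbf{q}\in B$, and set $\Delta x=\mathbf{p}-\mathbf{q}\neq \mathbf{0}$. Since $B$ is convex, the segment joining $\mathbf{p}$ and $\mathbf{q}$ lies entirely in $B$, so Lemma \ref{MeanValue} can be applied to each component $g_i$: there exists $\boldsymbol{\xi}_i\in B$ such that
\[
0=g_i(\mathbf{p})-g_i(\mathbf{q})=\nabla g_i(\boldsymbol{\xi}_i)\cdot \Delta x,\qquad i=1,\ldots,n.
\]

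Stacking these $n$ scalar identities into a matrix equation yields $A\,\Delta x^T=\mathbf{0}$, where $A$ is the $n\times n$ real matrix whose $i$-th row is $\nabla g_i(\boldsymbol{\xi}_i)$. Because $\Delta x\neq \mathbf{0}$, this forces $\det(A)=0$. The key observation is now that $A$ is a particular real realization of the interval Jacobian matrix $J_G^X(B)$: entrywise, $A_{ij}=\frac{\partial g_i}{\partial x_j}(\boldsymbol{\xi}_i)\in \frac{\partial g_i}{\partial x_j}(B)$, since each $\boldsymbol{\xi}_i\in B$. By the inclusion property of interval arithmetic (property $f(B)\subset \Box f(B)$ recalled in Section 2.2), the determinant, being a polynomial in the entries, satisfies $\det(A)\in \det(J_G^X(B))$. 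Hence $0\in \det(J_G^X(B))$, contradicting the hypothesis.

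I expect no real obstacle here; the only subtle point worth stating carefully is that, although the different rows of $A$ are evaluated at possibly different points $\boldsymbol{\xi}_i$, this is harmless because the interval matrix $J_G^X(B)$ is computed entrywise and $\det$ is then evaluated by interval arithmetic on those independent entries, so any real matrix obtained by picking one real value from each entry's interval has its determinant inside $\det(J_G^X(B))$. This is essentially the same reasoning already used on the matrix $A$ in the proof of Lemma \ref{monolemma}(a), so no new machinery is required.
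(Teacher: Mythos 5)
Your proof is correct, and it fills a gap the paper leaves open: the paper does not actually prove Theorem~\ref{Jaco_Con}, it only cites a corollary in Aberth's book. Your argument is the standard one, and it is sound.

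One small inaccuracy in your framing: you say the argument ``exactly parallels'' the proof of Theorem~\ref{OM}, but Theorem~\ref{OM} applies the Mean Value Theorem to $f_n$ \emph{only}, leaning on the strong-monotone structure of the first $n-1$ components via Lemma~\ref{monolemma}. What you are actually paralleling is the proof of Lemma~\ref{monolemma}(a), where the paper stacks one Mean Value Theorem identity per component $g_j$ (at distinct intermediate points $\qq_j$) into a matrix $A$ and then appeals to $0\notin\det(T_1(B))$ to contradict $\det(A)=0$. Your proof is that same move applied to the full $n\times n$ Jacobian rather than to the $(n-1)\times(n-1)$ minor.

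The subtle point you flag --- that each row of $A$ is evaluated at a different $\boldsymbol\xi_i$ --- is genuinely the crux, and you resolve it correctly. It is worth being explicit that this is exactly why the hypothesis must be read as ``$0\notin\Box\det(J_G^X(B))$'' in the interval-arithmetic sense (i.e., every real matrix whose $(i,j)$-entry lies in the interval $\frac{\partial g_i}{\partial x_j}(B)$ has nonzero determinant), and not merely as ``$\det(J_G^X(\pp))\ne 0$ for all $\pp\in B$.'' The latter, weaker condition does \emph{not} imply uniqueness on a convex box (consider $G(x,y)=(e^x\cos(ky),e^x\sin(ky))$ on $[0,1]^2$ for $k$ large: the pointwise Jacobian determinant is $ke^{2x}>0$ everywhere, yet $G$ is not injective). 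Your proof only goes through because the interval evaluation, by inclusion monotonicity, bounds the determinants of all those ``mixed'' matrices $A$ with rows sampled at different points --- and that is precisely how the paper uses the condition, since $J_G^X(B)$ is computed by interval evaluation before the S-M check is applied.
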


It is clear that the Jacobian test is weaker than our conditions in Definition \ref{OM}. The Condition (1) in Definition \ref{OM} is for existence checking. The Condition (2) in Definition \ref{OM} is for uniqueness checking, but it is not necessary. We can replace it with a weaker condition as the Jacobian test. In fact, when we check the existence, for $(i,t)\in Ind(B)$,  we need to know whether the system $F'|_{x_i=t}$ has a zero in $B_{i}$ or not. In some cases, the system $F'|_{x_i=t}$ may not be an O-M system in the related box, thus we can not use the existence condition directly and need an $(n-1)$-D O-M matrix $U$ to help us construct O-M system again. This operation may take much time, which is not necessary. In order to improve the situation, we present the Strong Monotone (S-M) system revised from O-M system.

\subsection{Strong Monotone system (S-M)}

Giving an interval matrix $M\in \IR^{n\times n}$, we denote $M_i\in \IR^{i\times n} (1\le i\le n)$ as the first $i$ rows of $M$ and denote $S(M_i)$ as the set of all $i$-order sub-matrices of $M_i$. We give the following definition:

\begin{define}\label{OMdef} Let $M\in \IR^{n\times n}$ and $M=(m_{ij})_{1\le i,j\le n}$.  We say $M$ is an S-M matrix if
\begin{itemize}
\item $0\notin \det(A), \forall A\in S(M_i)$, $1\le i\le n$.
\end{itemize}
\end{define}

\begin{remark} If $M\in \R^{n\times n}$ is a matrix and satisfy the above conditions, we also regard $M$ as an S-M matrix.
\end{remark}

\begin{example} Let $M=(m_{ij})=\begin{pmatrix} [3,4] & [1,2] & [1,2] \\ [1,2] & [3,4] & [-2,-1] \\ [1,2] & [-2,-1] & [-2,-1]  \end{pmatrix}$. We can verify that $M$ is an S-M matrix. First, we have $0\notin m_{1j}$, $1\le j\le 3$. Then for $i=2$, we have that $M_2=\begin{pmatrix} [3,4] & [1,2] & [1,2] \\ [1,2] & [3,4] & [-2,-1]\end{pmatrix}$ and $S(M_2)=\{A_1,A_2,A_3\}=\{\begin{pmatrix} [1,2] & [1,2] \\ [3,4] & [-2,-1] \end{pmatrix}, \begin{pmatrix} [3,4] & [1,2] \\ [1,2] & [-2,-1] \end{pmatrix}$, $\begin{pmatrix} [3,4] & [1,2] \\ [1,2] & [3,4] \end{pmatrix} \}$. After computing, we have $\det(A_1)=[-12,-4],\det(A_2)=[-12,-4]$ and $\det(A_3)=[5,15]$. For $i=3,$ we have $M_3=M$, $S(M_3)=\{M\}$ and $\det(M)=[-72,-16]$. Therefore, $M$ is an S-M matrix.
\end{example}
In the following, we always denote $V$ as an S-M matrix, then we will introduce the definition of the S-M system.
\begin{define}\label{SM} Let $F=(f_1,\ldots,f_n), f_i\in C^1(B), 1\le i\le n$ and $B$ a box. We call $F$ an {\bf S-M system} in $B$ if $J_F^X(B)$ is an S-M matrix.
\end{define}

\begin{theorem}\label{thm-OM} Let $F=(f_1,\ldots,f_n), f_i\in C^1(B), 1\le i\le n$. If $F$ is an S-M system in $B$, then $F=0$ has at most one root in $B$.
\begin{proof} Since $F$ is an S-M system in $B$, we have $0\notin \det(J_F^X(B))$. Then by Theorem \ref{Jaco_Con}, we know that $F=0$ has at most one root in $B$.
\end{proof}
\end{theorem}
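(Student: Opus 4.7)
The plan is to reduce the statement directly to the Jacobian test, Theorem \ref{Jaco_Con}, which has already been quoted in the excerpt. The intuition is that the S-M condition is strictly stronger than the non-vanishing of the full Jacobian determinant: the auxiliary requirements on sub-matrices of the first $i < n$ rows are there to support the recursive existence checks on the subsystems $F'|_{x_i=t}$, not the uniqueness statement proved here. So uniqueness should come out essentially for free.

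Concretely, the first step is to unpack Definition \ref{OMdef} at the top index $i=n$. The set $S(M_n)$ consists of all $n$-th order square sub-matrices formed from the first $n$ rows of $M = J_F^X(B)$, and since $M$ itself is an $n \times n$ matrix, the only element of $S(M_n)$ is $M$. Hence the S-M hypothesis, restricted to $i=n$, reads exactly
\[
0 \notin \det\bigl(J_F^X(B)\bigr).
\]
The second step is then to invoke Theorem \ref{Jaco_Con}: together with the standing assumption $f_i \in C^1(B)$, the non-vanishing of $\det(J_F^X(B))$ on the interval evaluation implies that $F=0$ has at most one real root in $B$, which is the desired conclusion.

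There is essentially no obstacle here; the hard work was done earlier in stating the Jacobian test and in arranging the definition of an S-M matrix so that the $i=n$ case collapses to that test. The only remark I would add is a pointer forward, to emphasise that the additional $i < n$ conditions of Definition \ref{OMdef} are not consumed by this proof but will be used in subsequent sections: they guarantee that every subsystem $F'|_{x_i=t}$ arising in the recursive existence check inherits, on its projected box, the same Jacobian-type non-vanishing property, so that one S-M certificate on the full Jacobian simultaneously certifies uniqueness at every level of the recursion without having to rebuild a rotation matrix $U$ for each subproblem.
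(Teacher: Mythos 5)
Your proposal is correct and follows essentially the same route as the paper: extract $0 \notin \det(J_F^X(B))$ from the $i=n$ case of the S-M definition and then apply the Jacobian test (Theorem \ref{Jaco_Con}). The only difference is that you spell out the step $S(M_n)=\{M\}$ explicitly, which the paper leaves implicit.
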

For simplification, we set $\quad A=${\bf IsSMSys}$(F,X,B)$ as the algorithm to check whether $J_F^X(B)$ is an S-M system or not.  Interval evaluation can be used here to get $J_F^X(B)$.
%

Similar as Theorem \ref{preconditioner}, we have the following result.
\begin{theorem}\label{preconditioner2} Let $F=(f_1,\ldots,f_n), f_i\in C^1(B), 1\le i\le n$ and $\pp^*$ a simple zero of $F$. Then, $\exists \delta>0$ s.t. $\forall B\in B(\pp^*,\delta)$, $V J_F^{-1}(m(B))\cdot F^T$ is an S-M system in $B$, where $V$ is a S-M matrix.
\end{theorem}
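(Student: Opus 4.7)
The plan is to mimic the proof of Theorem \ref{preconditioner} almost verbatim, the only real change being that the structural condition ``$U$ is an O-M matrix'' gets replaced by ``$V$ is an S-M matrix'', and the verification uses continuity of finitely many sub-determinants rather than the O-M pattern.

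First, because $\pp^*$ is a simple zero, $\det(J_F(\pp^*))\neq 0$, so by continuity of the determinant and property (2) of the interval extension, there exists $\delta_0>0$ such that $J_F(m(B))$ is invertible for every $B\in B(\pp^*,\delta_0)$. For such $B$ the system $G:=V\,J_F^{-1}(m(B))\cdot F^T$ is well defined, and by Lemma \ref{transformsys} it has the same zeros as $F$ inside $B$. Computing its Jacobian with respect to $X$ gives
\[
J_G^X(B)=V\,J_F^{-1}(m(B))\cdot J_F^X(B).
\]
Using property (2) of the interval extension on each entry of $J_F^X$ and the continuity of matrix inversion at $J_F(\pp^*)$, as $B\to\pp^*$ the interval matrix $J_G^X(B)$ tends (entrywise) to
\[
V\,J_F^{-1}(\pp^*)\cdot J_F(\pp^*)=V.
\]

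Next, observe that being an S-M matrix is an open condition: it amounts to requiring, for $i=1,\ldots,n$, that every $i\times i$ minor of the first $i$ rows has determinant not containing $0$. This is a finite list of conditions of the form $0\notin \det(A)$ for $A\in S(M_i)$. Since each such determinant is a polynomial in the entries of $M$, and since the entries of $J_G^X(B)$ converge (as intervals) to the corresponding entries of $V$, the interval-valued determinants $\det(A)$ for $A\in S((J_G^X(B))_i)$ converge to the (nonzero) real numbers $\det(\tilde A)$ for $\tilde A\in S(V_i)$. Because $V$ is an S-M matrix, each of these limit values is strictly bounded away from $0$.

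Therefore we may choose $\delta\in(0,\delta_0]$ small enough that, simultaneously for all finitely many relevant sub-matrices and all $B\in B(\pp^*,\delta)$, the interval determinants $\det(A)$ stay in a neighborhood of the corresponding $\det(\tilde A)$ that excludes $0$. This says exactly that $J_G^X(B)$ is an S-M matrix, i.e.\ $G$ is an S-M system in $B$, as required. The only mildly delicate point is choosing a single $\delta$ that works for all sub-matrix conditions at once, but since the number of such conditions is finite and each is an open condition, a common $\delta$ exists by intersecting the finitely many neighborhoods.
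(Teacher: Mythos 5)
Your proof is correct and takes the same route the paper intends: the paper states Theorem \ref{preconditioner2} with only the remark ``Similar as Theorem \ref{preconditioner}'', and you indeed reproduce that limit argument $J_G^X(B)\to V$ as $B\to\pp^*$, merely spelling out the (correct) observation that the S-M property is an open condition on finitely many minors, so a single $\delta$ suffices.
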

Now, we will show how to check the existence for an S-M system inside a box. we give the following lemma first.
\begin{lemma}\label{recursion} Let $F=(f_1,\ldots,f_n), f_i\in C^1(B), 1\le i\le n$, $B=I_1\times \cdots \times I_n$, $F'=(f_1,\ldots,f_{n-1})$. If $F$ is an S-M system in $B$, then for any $\hat{x_i}\in I_i, i=1\ldots,n$, $F'|_{x_i=\hat{x_i}}$ is still an S-M system in $I_1\times \cdots \times I_{i-1} \times I_{i+1} \times \cdots \times I_n$.
\begin{proof} It can be derived from Definition \ref{SM} directly.
\end{proof}
\end{lemma}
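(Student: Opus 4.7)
The plan is to unwind the two definitions (S-M matrix and S-M system) and show that the S-M condition on $J_F^X(B)$ is preserved by (i) deleting one column (corresponding to the frozen variable $x_i$), (ii) dropping the last row (corresponding to removing $f_n$), and (iii) restricting the interval evaluation from $B$ to the smaller box $B'=I_1\times\cdots\times I_{i-1}\times I_{i+1}\times\cdots\times I_n$. All three operations only make the collection of relevant interval determinants smaller (as subsets of the reals), so non-containment of $0$ is inherited.

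More concretely, I would first write down the target matrix. Set $M'=J_{F'|_{x_i=\hat x_i}}^{X\setminus\{x_i\}}(B')$; this is an $(n-1)\times(n-1)$ interval matrix whose $(j,\ell)$-entry is the interval evaluation of $\partial f_j/\partial x_{\sigma(\ell)}$ on $B'$, where $\sigma\colon\{1,\ldots,n-1\}\to\{1,\ldots,n\}\setminus\{i\}$ is the order-preserving bijection. Because $B'\subseteq B$ (we simply freeze one coordinate to $\hat x_i\in I_i$), the fundamental monotonicity property of interval evaluation gives that each entry of $M'$ is contained in the corresponding entry of the submatrix of $J_F^X(B)$ obtained by deleting column $i$.

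Next, I would verify the S-M definition row-by-row. Fix $k$ with $1\le k\le n-1$; let $M'_k$ be the first $k$ rows of $M'$, and let $A\in S(M'_k)$ be any $k\times k$ subinterval-matrix. By construction, $A$ is (entrywise) contained in the $k\times k$ subinterval-matrix $\tilde A$ of the first $k$ rows of $J_F^X(B)$ obtained by selecting exactly those $k$ columns whose indices lie in $\{1,\ldots,n\}\setminus\{i\}$ that were picked for $A$. Thus $\tilde A\in S((J_F^X(B))_k)$, and since $F$ is S-M in $B$ and $k\le n$, we have $0\notin\det(\tilde A)$. By interval arithmetic, $\det(A)\subseteq\det(\tilde A)$, hence $0\notin\det(A)$.

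This verifies the defining condition of Definition \ref{OMdef} for $M'$, so $F'|_{x_i=\hat x_i}$ is S-M on $B'$. There is no real obstacle here: the only thing to watch is the indexing, specifically the two independent facts that (a) the row-index bound $k\le n-1$ is automatically allowed because the S-M hypothesis for $F$ supplies non-vanishing determinants for all $k\le n$, and (b) the column-deletion lifts $k$-subsets of $\{1,\ldots,n\}\setminus\{i\}$ bijectively to $k$-subsets of $\{1,\ldots,n\}$, so every submatrix of $M'_k$ genuinely corresponds to a submatrix used in the S-M hypothesis on $J_F^X(B)$.
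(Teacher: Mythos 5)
Your proof is correct and is essentially the expansion of what the paper leaves as ``derived from Definition \ref{SM} directly'': the Jacobian of $F'|_{x_i=\hat x_i}$ on the slice is a submatrix of $J_F^X$ (drop the last row and column $i$), evaluated on a subset of $B$, so by inclusion monotonicity of interval evaluation every $k\times k$ minor in $S(M'_k)$ is contained in a corresponding minor already certified nonzero by the S-M hypothesis on $F$. The indexing remarks you make at the end are exactly the points worth spelling out, and they are handled correctly.
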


Let $F$ be an S-M system in $B$ and $S=\V(F')$. By the definition of S-M system, we know that $S$ is strong monotone in $B$, meanwhile, we know that $F=0$ has at most one root in $B$. Thus, in Lemma \ref{case2}, if we replace the O-M system by S-M system, the lemma is still hold:
\begin{lemma}\label{SMcase2} Let $F=(f_1,\ldots,f_n)$ be an S-M system. Let $F'=(f_1,\ldots,f_{n-1})$ and $S=\V(F')$. Assume that $S\cap \partial B=\{\pp_1,\pp_2\}$, we have:
  \begin{enumerate}[(1)]
\item If $f_n(\pp_1)f_n(\pp_2)\leqslant 0$, $F=0$ has a unique root in $B$.
\item If $f_n(\pp_1)f_n(\pp_2)> 0$, $F=0$ has no root in $B$.
\end{enumerate}
\end{lemma}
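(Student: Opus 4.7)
The plan is to mirror the argument used for Lemma \ref{case2}, but replace appeals to the O-M structure by appeals to the weaker S-M structure together with the global uniqueness statement in Theorem \ref{thm-OM}. First I would verify that the S-M hypothesis on $F$ already forces $S=\V(F')$ to be strong monotonous in $B$. Indeed, applying Definition \ref{OMdef} at level $i=n-1$, every $(n-1)\times(n-1)$ submatrix of the first $n-1$ rows of $J_F^X(B)$, i.e.\ of $J_{F'}^X(B)$, has non-vanishing determinant; these submatrices are precisely the $T_j(B)$ for $j=1,\ldots,n$. Hence Lemmas \ref{monolemma} and \ref{monolemma2} apply to $S$ in $B$.

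Next I would use those lemmas to conclude that $S\cap B$ consists of a single monotone arc whose two boundary points are exactly $\pp_1,\pp_2$. I can then parameterize this arc as $\eta(t)=(x_1(t),\ldots,x_n(t))$ for $t\in[t_1,t_2]$ with $\eta(t_i)=\pp_i$, and set $g(t)=f_n(\eta(t))$, a continuous univariate function on $[t_1,t_2]$ with $g(t_i)=f_n(\pp_i)$. This reduces both statements of the lemma to a sign analysis of $g$.

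For case (1), the hypothesis $f_n(\pp_1)f_n(\pp_2)\le 0$ is exactly $g(t_1)g(t_2)\le 0$, so the Intermediate Value Theorem gives $t^*\in[t_1,t_2]$ with $g(t^*)=0$; then $\eta(t^*)\in B$ is a root of $F$, and uniqueness follows from Theorem \ref{thm-OM} since $F$ is S-M in $B$. For case (2), Theorem \ref{thm-OM} already guarantees at most one root of $F$ in $B$; any such root must lie on $S\cap B$, hence on the arc $\eta([t_1,t_2])$, and yields a zero $t^*$ of $g$. Because the S-M condition forces $\det J_F^X(B)\ne 0$, the gradient $\nabla f_n(\eta(t^*))$ is transversal to the tangent of $S$ at that point, so $g'(t^*)\ne 0$ and $g$ strictly changes sign there, contradicting $g(t_1)g(t_2)>0$; therefore $F$ has no root in $B$.

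The step that requires the most care is the non-existence half of case (2): one must rule out a tangential encounter of $g$ with zero. I expect the cleanest way is the transversality argument above, using that S-M implies a simple (non-degenerate) Jacobian everywhere in $B$. A stylistic alternative, matching the tone of the proof of Lemma \ref{case2}, is to observe that a continuous univariate function with $g(t_1)g(t_2)>0$ can only possess roots of even total multiplicity, while Theorem \ref{thm-OM} caps the count at one, forcing zero roots. Either way, the remaining content is routine once strong monotonicity of $S$ has been extracted from the S-M hypothesis.
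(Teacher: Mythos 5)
Your proof is correct and follows the route the paper intends: the paper does not re-prove Lemma \ref{SMcase2}, but simply asserts that the proof of Lemma \ref{case2} carries over once one notes that the S-M hypothesis contains strong monotonicity of $S$ (your identification of $S(M_{n-1})$ with $\{T_1(B),\ldots,T_n(B)\}$) and that uniqueness is supplied by Theorem \ref{thm-OM}; your write-up spells out exactly the observations the paper leaves implicit and then reruns the published parameterization/IVT argument. One place where you actually tighten the paper's reasoning: for case (2), the proof of Lemma \ref{case2} claims that $g(t_1)g(t_2)>0$ forces an even number of zeros of $g$ ``counting multiplicity,'' which is not a well-posed statement for a general $C^1$ function. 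Your transversality argument --- at any zero $t^*$ the gradient $\nabla f_n(\eta(t^*))$ cannot annihilate the tangent $\eta'(t^*)$ because all $n$ gradients would then lie in an $(n-1)$-dimensional subspace, contradicting $\det J_F^X\ne 0$ on $B$, so $g'(t^*)\ne 0$ and $g$ strictly crosses zero --- closes that gap cleanly, and is preferable to the ``stylistic alternative'' you mention, which merely reproduces the paper's imprecise phrasing.
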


Therefore, the existence condition for an O-M system can be applied for an S-M system. We still consider the index set:
$$Ind(B)=\{(i,a_i), (i,b_i) | i=1,\ldots,n\}.$$
For $(i, t)\in Ind(B)$, we also need to solve $F'|_{x_i=t}$ in $B_i$. Compared with the O-M system, the advantages of the S-M system are as follows:
\begin{enumerate}
  \item By Lemma \ref{recursion}, $F'|_{x_i=t}$ is still S-M system in $B_i$.
  \item Let $F'|_{x_i=t}=(f'_1,\ldots,f'_{n-1})$. If $f'_j(v(B_i))>0$ or $<0$ for some $j\in \{1,\ldots,n-1\}$, then by Lemma \ref{vertex}, we have $f'_j(B_i)>0$ or $<0$. Thus $F'|_{x_i=t}$ has no zero in $B_i$, we do not need to solve $F'|_{x_i=t}$.
\end{enumerate}

We give the following example to show how to check the existence.
\begin{example}\label{exam-existence} Let $F=(f_1,f_2,f_3)=(x-y+z,y^2+x+y+2z,x^2+yz-3x-y+z)$ and $B=[-0.1,0.1]\times[-0.1,0.1]\times[-0.1,0.1]$, $B_1,B_2,B_3$ are all $[-0.1,0.1]\times[-0.1,0.1]$. One can verify that $F$ is an S-M system in $B$. Next we check the existence consition. Let $F'=(f_1,f_2)$ and other notations are as above, we need to check whether these systems $F'|_{x_i=t_i}$ have a real zero in $B_i$ or not, where $x_1=x,x_2=y,x_3=z, t_i=-0.1,0.1$. (Some systems can be checked easily, for example, let $g=f_2|_{z=0.1}=y^2+x+y+0.2$, we have $g(0.1,0.1)>0$, $g(0.1,-0.1)>0$, $g(-0.1,0.1)>0$ and $g(-0.1,-0.1)>0$, i.e., $g(v(B_3))>0$. By Lemma \ref{vertex}, we know that $g(B_3)>0$, hence $F'|_{z=0.1}$ has no zero in $B_3$ and we do not need to solve the system). Finally we find both the two systems $F'|_{x=0.1}$ and $F'|_{x=-0.1}$ have a unique zero in $B_1$, see the left and right figures in Figure \ref{fig1}. Then we check the condition of Lemma \ref{SMcase2}, by simple evaluation, we have $f_3|_{x=0.1}(v(B_1))<0$ and $f_3|_{x=-0.1}(v(B_1))>0$. Therefore we know that $F=0$ has a unique zero in $B$.
\begin{figure}[ht]
\centering
\begin{minipage}{0.29\textwidth}
\centering
\includegraphics[scale=0.21]{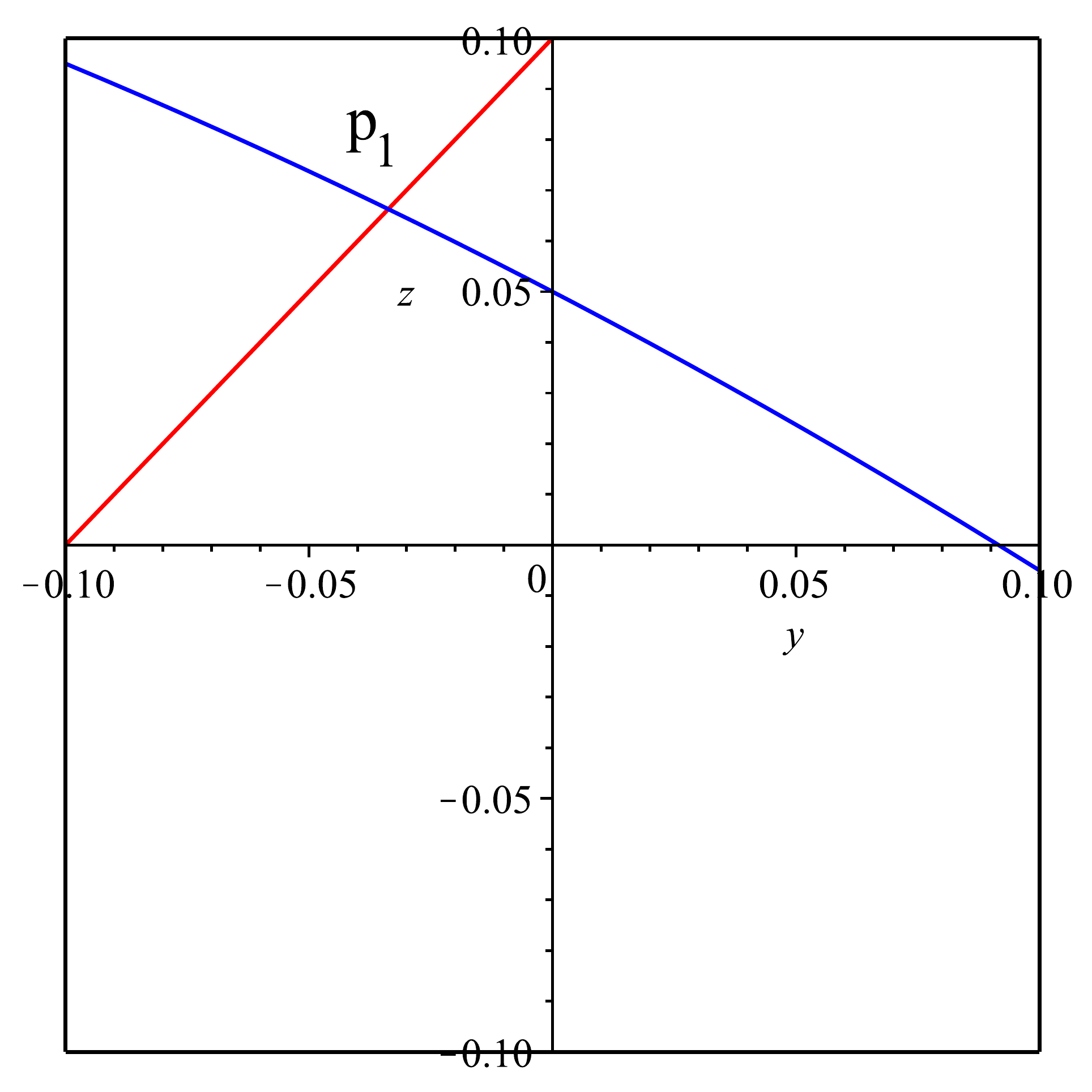}
\end{minipage}
\centering
\begin{minipage}{0.4\textwidth}
\centering
\includegraphics[scale=0.33]{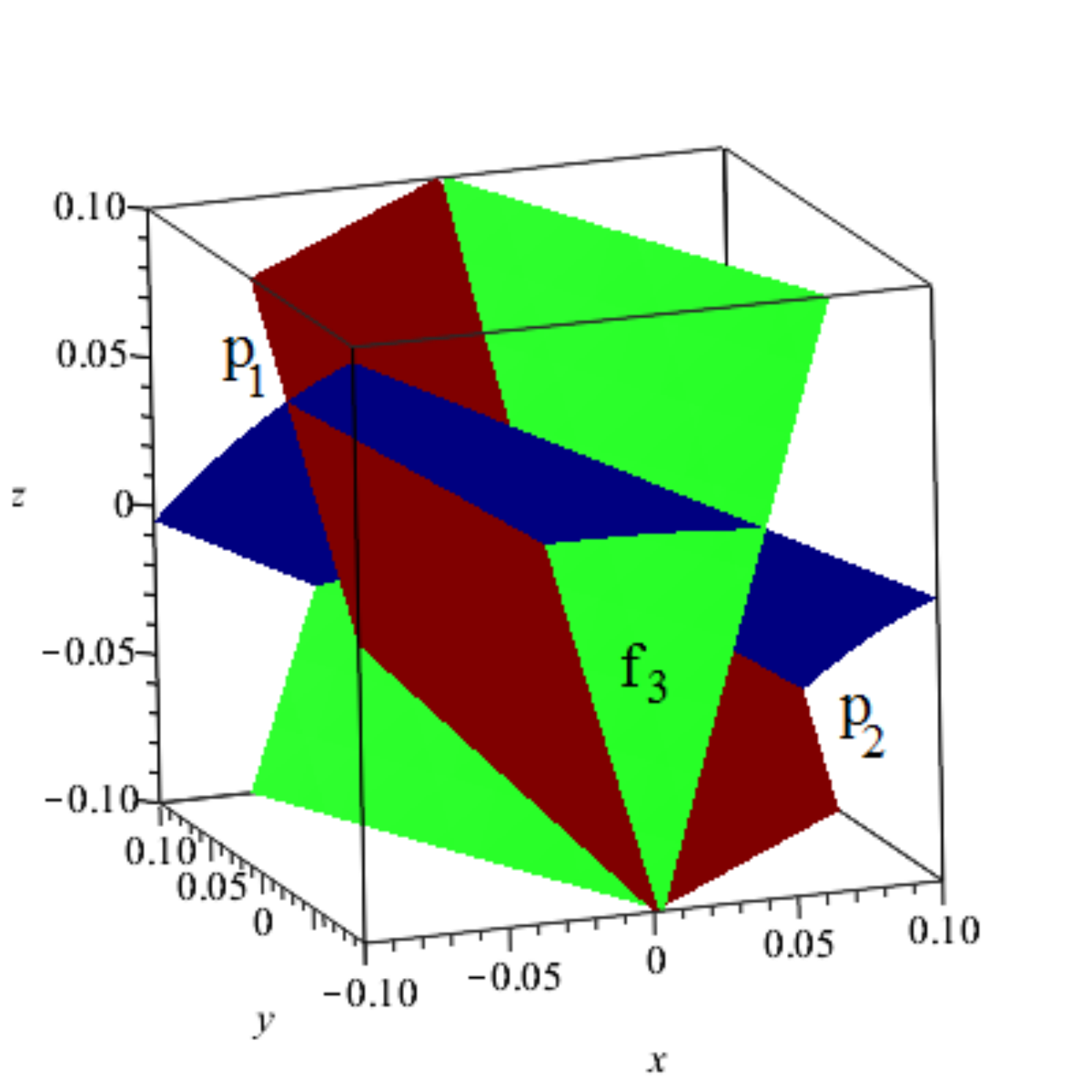}
\end{minipage}
\centering
\begin{minipage}{0.29\textwidth}
\centering
\includegraphics[scale=0.21]{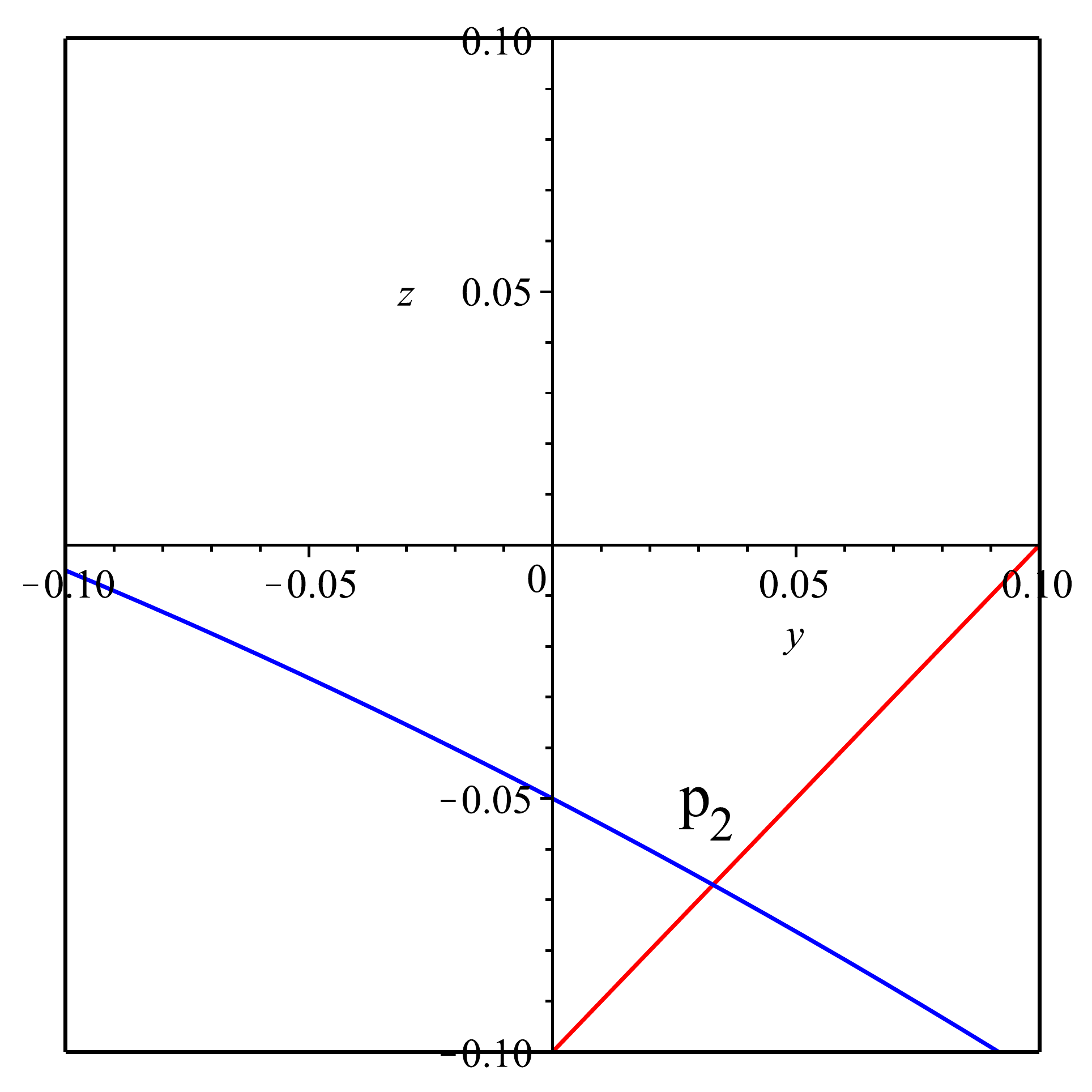}
\end{minipage}
\caption{The figures for Example \ref{exam-existence}}\label{fig1}
\end{figure}
\end{example}

Based on the discussion above, we give an algorithm {\bf Existence} below.
\begin{algorithm}[H] \label{alg-existence}
	\caption{$\quad A=${\bf Existence}$(F,B,\epsilon_b):$}
 \begin{enumerate}
\item $\textbf{Input}$: An S-M system $F=(f_1,\ldots,f_n)$, a box $B$, a given precision $\epsilon_b>0$ for refinement, where  $X=\{x_1,\ldots,x_n\}$ are variables of $F$.
\item $\textbf{Output}$: 1,0,or {\bf Unknown}.
\end{enumerate}

 \begin{enumerate}[1.]

     \item $F' \leftarrow (f_1,\ldots,f_{n-1}), T\leftarrow Ind(B), num\leftarrow0$, $S=\{\}$.
     \item While $T\neq \emptyset$ and $num <2$
     \begin{itemize}
         \item Take $(i,t)$ from $T$ and $T=T\setminus \{(i,t)\}$.
         \item $b\leftarrow{\bf Existence}(F'|_{x_i=t},B_i,\epsilon_b)$.
         \item If $b=${\bf Unknown}, return {\bf Unknown}.
         \item Else if $b=1$, $num=num+b$, $S\leftarrow S\bigcup \{(i,t)\}$.
     \end{itemize}
 \item If $num=0$, return 0.
 \item Else, $\{(i,t),(j,t')\}\leftarrow S$ 
        \begin{enumerate}[a.]
        \item $t_1\leftarrow  \mathrm{Sign}(f_n|_{x_i=t}(v(B_i))), t_2\leftarrow  \mathrm{Sign}(f_n|_{x_j=t'}(v(B_j)))$.
         \item While $w(B_i)>\epsilon_b$ and $t_1=0$ do
         \begin{itemize}
         \item  Refine $B_i$ w.r.t. $F'|_{x_i=t}$, and still denote the refined boxes as $B_i$.
         \item  $t_1\leftarrow  \mathrm{Sign}(f_n|_{x_i=t}(v(B_i)))$.
         \end{itemize}
         \item While $w(B_j)>\epsilon_b$ and $t_2=0$ do
         \begin{itemize}
         \item  Refine $B_j$ w.r.t. $F'|_{x_j=t'}$, and still denote the refined boxes as $B_j$.
         \item  $t_2\leftarrow  \mathrm{Sign}(f_n|_{x_j=t'}(v(B_j)))$.
         \end{itemize}
         \item If $t_1 t_2<0$, return 1.
         \item Else if $t_1 t_2>0$, return 0.
         \item Else, return {\bf Unknown}.
        \end{enumerate}
 \end{enumerate}
\end{algorithm}

\noindent{\bf Remarks for the algorithm}:
\begin{enumerate}
\item $A=1$ means that the system $F$ has a unique zero in $B$, $A=0$ means that the system $F$ has no zero in $B$.
\item When the algorithm returns Unknown, it means that we do not find two systems which have a unique zero in their corresponding boxes or can not determine the sign of $f_n|_{x_i=t}(v(B_i)) \cdot f_n|_{x_j=t'}(v(B_j))$. This case happens when the zero is on (or very close to) a $k$-dimensional ($k\le n-2$) the boundaries of $B$, so that our precision $\epsilon_b$ can not handle this case. This also explains why there are at most two $(n-1)$-dimensional faces of $B$ intersects $S$ in Step 4.  We can avoid this case by changing the length of the box or combining some adjacent boxes with the same output ``Unknown" to form a new box. Subdividing the new box and checking the conditions again, one usually succeeds in finding the results. The {\bf Existence} for bivariate systems are presented  in \cite{cheng5}.
\item In Steps 3.b, 3.c, refining the boxes $B_i, B_j$ means to refine the root of the systems $F'$ for $x_i=t, x_j=t'$ in the boxes to get smaller boxes. Notice that we know there exists and only exists one unique real zero in the related boxes. We can use interval-Newton method for the refinement. $B_i(B_j)$ or part of it can be set as the original box for the iteration.
\item The correctness and termination of the algorithm is based on theories before.
\end{enumerate}

\subsection{Choosing a proper S-M matrix $V$}
A better S-M matrix helps us reducing some unnecessary computation. In this subsection, we will discuss how to choose a ``better" S-M matrix $V$. As mentioned before, we may meet some bad cases: although both $F'|_{x_i=t}$ and $F'|_{x_j=t'}$ have a unique zero in $B_i$ and $B_j$, we can not determine the signs of $f_n|_{x_i=t}(v(B_i))$ and $f_n|_{x_j=t'}(v(B_j))$. Thus, we need to refine the boxes $B_i$ and $B_j$. Therefore, we want to select some ``nice" S-M matrices $V$ to avoid bad cases as much as possible. See the following example first.

\begin{example} Consider the example in \cite{cheng5}. Let $F=(y-x^2,x-2y)$, $B=[-0.1,0.1]\times[-0.1,0.1]$.
 Let $V_1=\begin{pmatrix} 1 & 1 \\ 1 & -1 \end{pmatrix}$ and $V_2=\begin{pmatrix} 2 & 1 \\ 1 & -2 \end{pmatrix}$. We have $G_1=V_1 J_F^{-1}(m(B))\cdot F^T=(-3x^2+x+y,-x^2+x-y)^T=(g_1^{(1)},g_2^{(1)})^T$ and $G_2=V_2 J_F^{-1}(m(B))\cdot F^T=(-5x^2+2x+y,x-2y)^T=(g_1^{(2)},g_2^{(2)})^T$. It is easy to check that $G_1$ and $G_2$ are both S-M systems in $B$. Next we consider the existence condition. For the system $G_1$, $g_1^{(1)}|_{x=0.1}$ has a unique zero in $B_1=[-0.1,0.1]$, however, $g_2^{(1)}|_{x=0.1}(v(B_1))$ are not all positive(negative), then we need to refine $B_1$, see the left figure in Figure \ref{fig_chooseU}.  For the system $G_2$, both $g_1^{(2)}|_{y=0.1}$ and $g_1^{(2)}|_{y=-0.1}$ have a unique zero in $B_2=[-0.1,0.1]$, then we can get $g_2^{(2)}|_{y=0.1}(v(B_2)<0$ and $g_2^{(2)}|_{y=-0.1}(v(B_2))>0$ immediately, thus we know the existence without refinement, see the right figure in Figure \ref{fig_chooseU}. It means that $V_2$ is ``better" than $V_1$.
\begin{figure}[htbp]
\centering
\begin{minipage}{0.3\textwidth}
\centering
\includegraphics[scale=0.2]{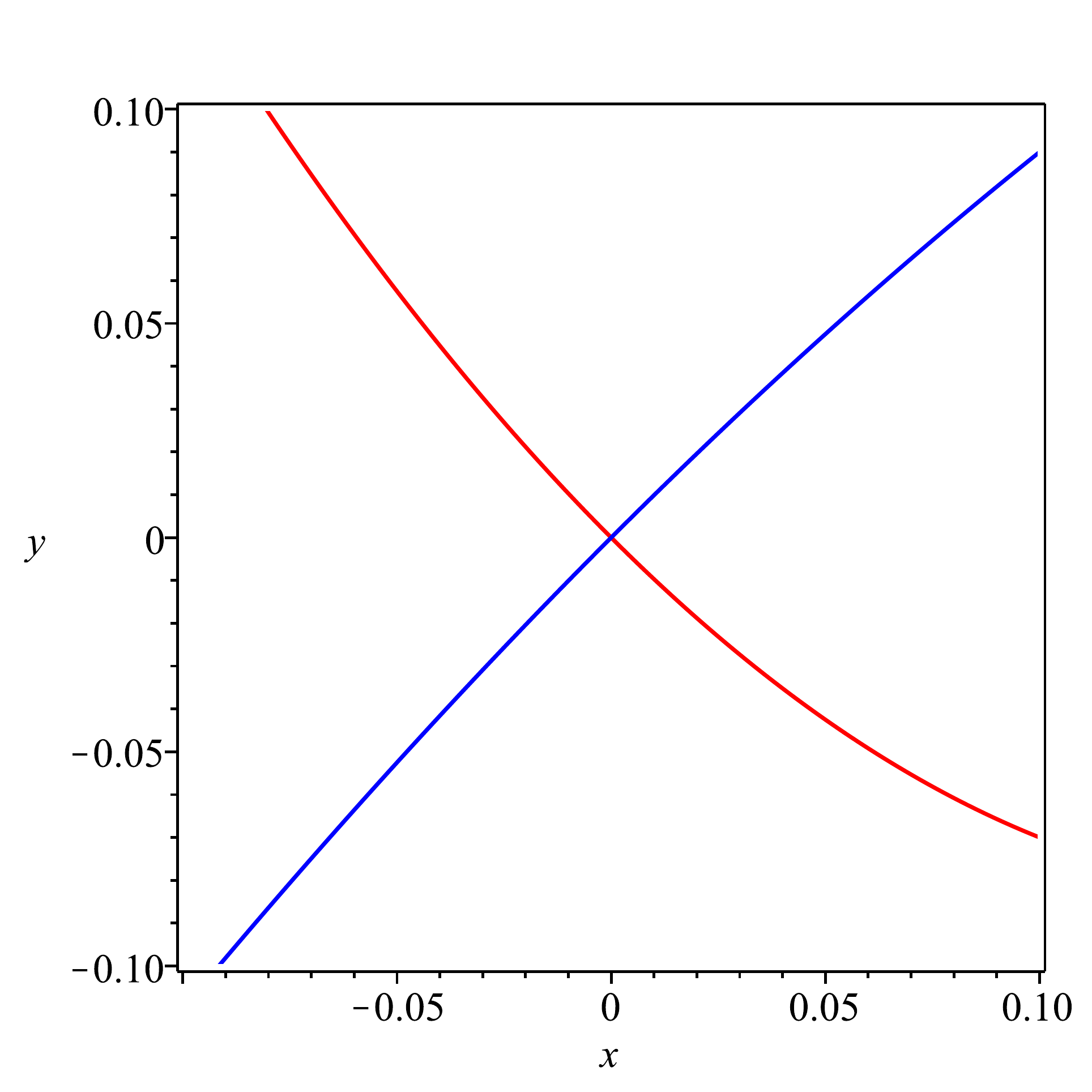}
\end{minipage}
\centering
\begin{minipage}{0.3\textwidth}
\centering
\includegraphics[scale=0.2]{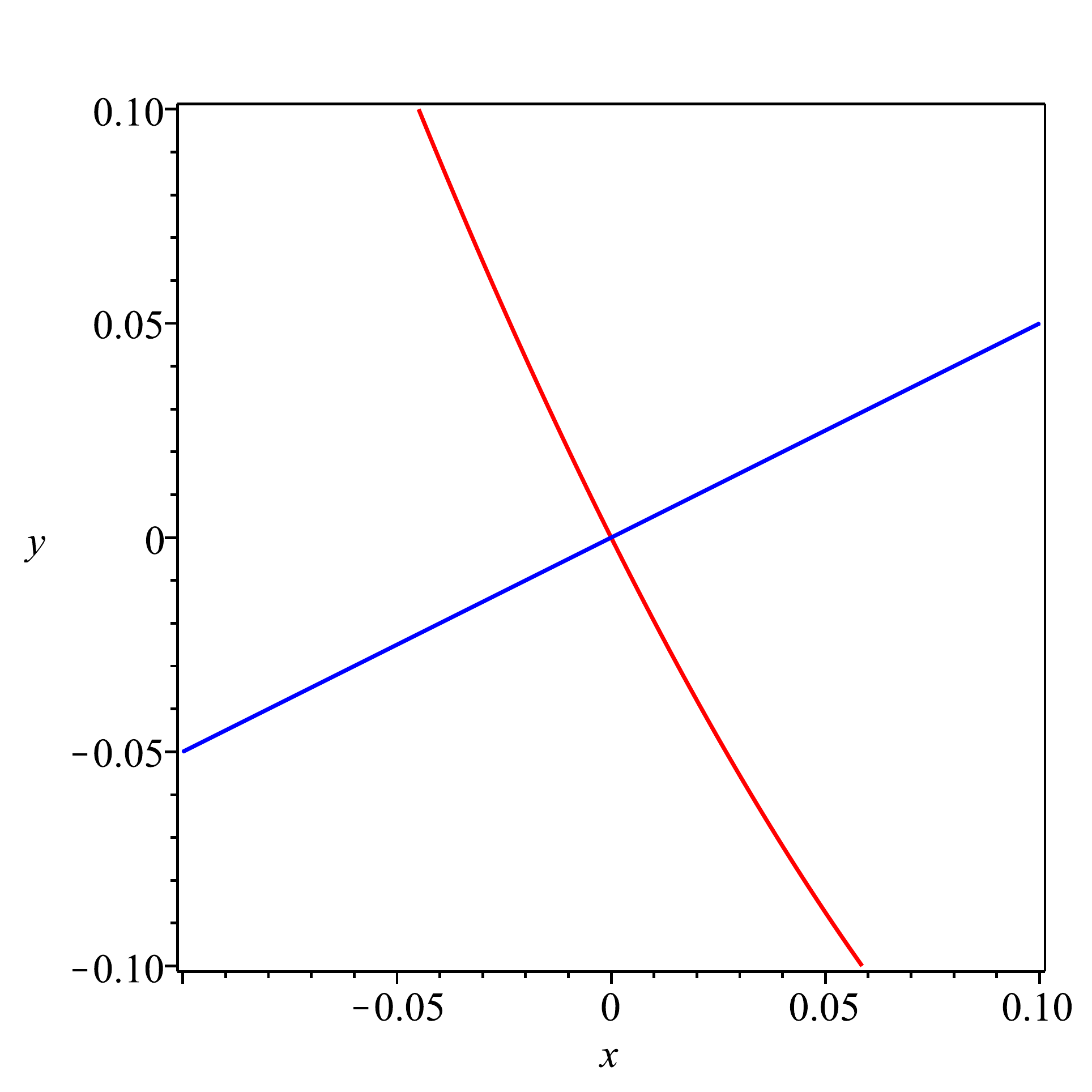}
\end{minipage}
\caption{The influlence of the different S-M matrixes to the same system.
} \label{fig_chooseU}
\end{figure}
\end{example}

Give a system $F=(f_1,\ldots,f_n)$ and a box $B$, we assume that $B$ is an isolating box of $F$ (i.e., $B$ contains only one simple zero $\pp$ of $F$). Let $G=V J_F^{-1}(m(B))\cdot F^T=(g_1,\ldots,g_n)^T$ and $G'=(g_1,\ldots,g_{n-1})$. Our goal is to choose an S-M matrix $V$ such that $S=\V(G')$ goes through $F_n^l(B)$ and $F_n^r(B)$, i.e., both $G'|_{x_n=a_n}$ and $G'|_{x_n=b_n}$ have a unique zero in $B_n$, meanwhile, $g_n|_{x_n=a_n}(v(B_n))\cdot g_n|_{x_n=b_n}(v(B_n))<0$. Then we immediately know $B$ is an isolating box of $F$. Notice that when $w(B)$ is small, i.e., $m(B)$ is close to the simple zero $\pp=(p_1,p_2,\dots,p_n)$, we know that
$J_F^{-1}(m(B))\cdot F^T\approx(x_1-p_1,\ldots,x_n-p_n)^T$ and
$G\approx V\cdot (x_1-p_1,\ldots,x_n-p_n)^T$ in $B$.
That is to say $g_i$ is approximately a hyperplane in $B$ and $\nabla g_i(m(B)) \approx V_{i,:}$ for $i=1,\ldots,n$. 

Based on the analysis above, we assume that $\pp=\bf{0}$, $F=(x_1+h_1,\ldots,x_n+h_n)$ and $B=[-1,1]^n$ is a unit box, where $h_i\in C^1(B)$ and the Taylor expansion of $h_i$ at $\pp=\bf{0}$ has only terms with degree greater than 1, $i=1,\ldots,n$. Then we show how to choose $V$ such that the system $G=(g_1,\ldots,g_n)^T=V\cdot F^T$ satisfying the following conditions:
\begin{enumerate}[(1)]
\item $V$ is an S-M matrix.
\item $(g_1,\ldots,g_{n-1})|_{x_n=a_n}$ and $(g_1,\ldots,g_{n-1})|_{x_n=b_n}$ have a unique zero in $F_n^l(B)$ and $F_n^r(B)$.
\item $  \mathrm{Sign}(g_n(F_n^l(B))) \cdot   \mathrm{Sign}(g_n(F_n^r(B)))<0$.
\end{enumerate}

For example, $V_{2d}=\begin{pmatrix} N & 1 \\ 1 & -N\end{pmatrix}$ for $N=2$ or a larger positive integer and $V_{3d}=\begin{pmatrix} N & 1 & 1 \\ 1 & -N & 1 \\ 1 & 1 & N \end{pmatrix}$ for $N=3$ or a larger positive integer satisfy our conditions, see Figure \ref{chooseU2}.

\begin{figure}[htbp]
\centering
\begin{minipage}{0.4\textwidth}
\centering
\includegraphics[scale=0.22]{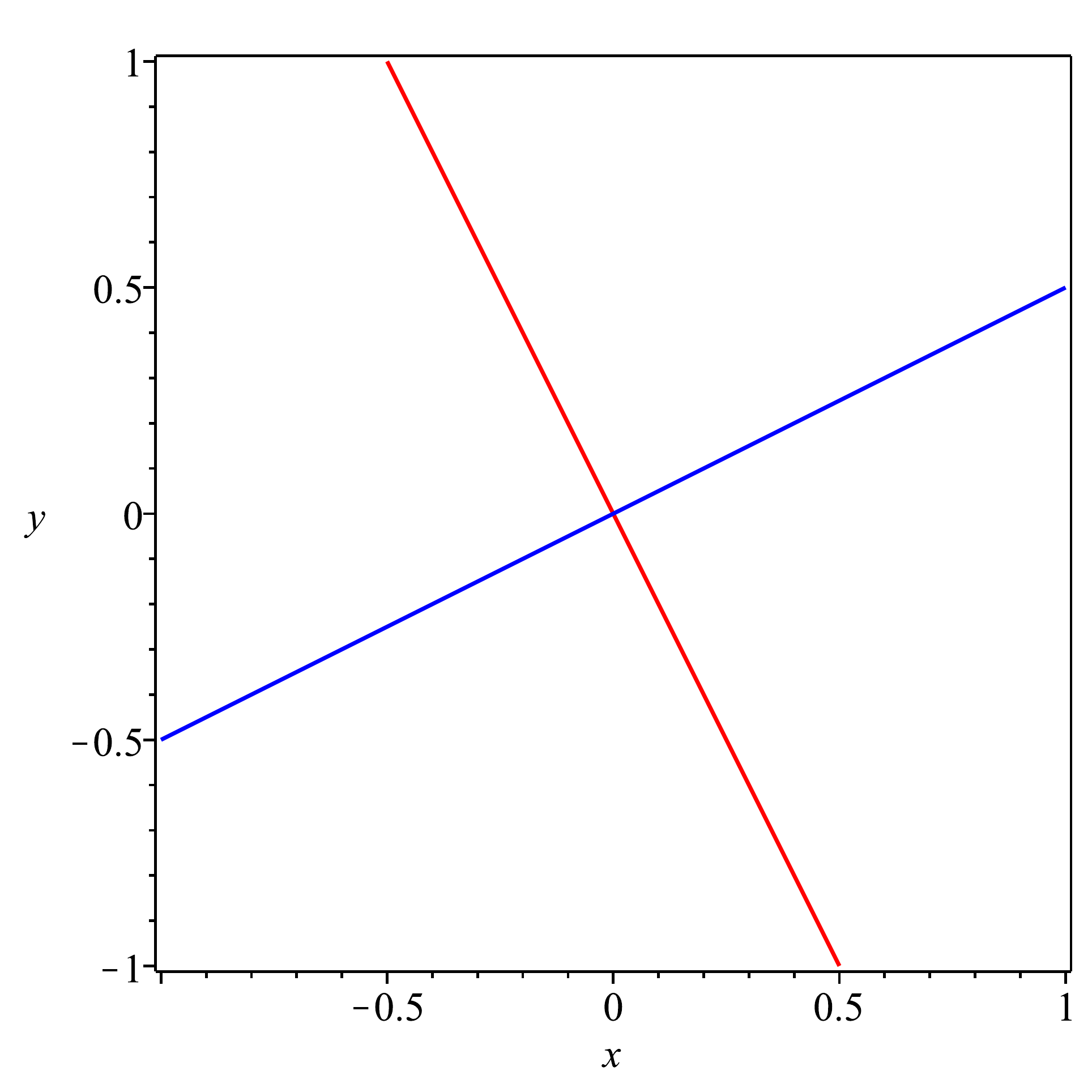}
\centerline{$V_{2d}\cdot (x,y)^T$}
\end{minipage}
\centering
\begin{minipage}{0.4\textwidth}
\centering
\includegraphics[scale=0.33]{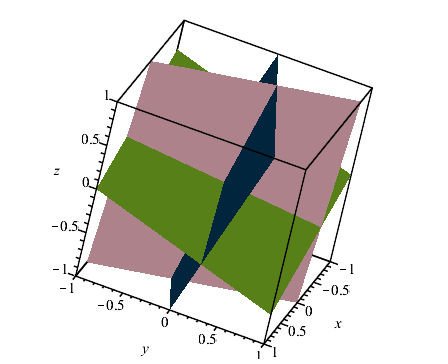}
\centerline{$V_{3d}\cdot (x,y,z)^T$}
\end{minipage}
\caption{Examples for choosing $V$.} \label{chooseU2}
\end{figure}
For general cases,
\[V_{nd}=\begin{pmatrix} \alpha_{11} & \alpha_{12} & \cdots & \alpha_{1n} \\ \alpha_{21} & \alpha_{22} & \cdots & \alpha_{2n} \\ \vdots & \vdots  & \vdots & \vdots \\ \alpha_{n1} & \alpha_{n2} & \cdots & \alpha_{nn} \end{pmatrix}.\]
Let $\alpha_{ii}=(-1)^{i+1}N$ and $|\alpha_{ij}|\le 1, i\ne j$, by generating $\alpha_{ij}$ randomly, we can find an S-M matrix $V_{nd}$ satisfies the above conditions with high probability, where $N$ can be $n$ or another integer larger than $n$. We can also choose $\alpha_{ii}$ for different sign(s) for more choices.  Usually, it is easier to get an S-M matrix for $\frac{\partial G}{\partial X}(B)$ when $N$ is larger. This phenomenon can be obtained by observing the matrix of $\frac{\partial G}{\partial X}(B)$.
We can also check whether it is a matrix we want by Definition \ref{OMdef} and we ensure that it is an S-M matrix.

\subsection{Algorithm}
In this subsection, we give the main algorithm for real zero isolation of real nonlinear systems. Our method is a subdivision method, we need the exclusion test which is based on the following famous box predicate $C_0(f,B)$\cite{PV}:
\begin{equation}\label{eqn-exclusion}
C_0(f,B):= 0\notin \Box f(B).
\end{equation}
Given a function $f$ and a box $B$, we say $C_0(f,B)$ is true if $0\notin \Box f(B)$. Based on the box predicate, we can write it as an algorithm
$A=${\bf Exclusion}$(F,B)$.
%
%
Obviously, if the algorithm {\bf Exclusion} returns 1, $F=0$ must have no root in $B$.

If the given system is a polynomial system, we can isolate the real zeros of bounding polynomials to exclude boxes \cite{cheng5}.
We rewrite a multivariate polynomial as below.
$$f_i(x_1,\ldots,x_n)=\sum_{j=0}^{d_i} t_{i,j}(x_1,\ldots,x_{n-1}) x_n^j.$$
Let $B=I_1\times\cdots\times I_n=B_1\times I_n$. We split $B_1$ into small boxes with a given length. For each these small $(n-1)$-D box $\mathbf{b}$, we evaluate $f_i(x_1,\ldots,x_n)$ on $\mathbf{b}$ to get a sleeve polynomial \cite{Cheng1}.
$$f_i(\mathbf{b},x_n)=\sum_{j=0}^{d_i} t_{i,j}(\mathbf{b}) x_n^j=\sum_{j=0}^{d_i} [a_{i,j},b_{i,j}] x_n^j.$$
Isolating the real zeros of $f_1(\mathbf{b},x_n)$ in $I_n$ (see \cite{cheng5} for details), we can get a list of intervals, say $J_1,\ldots,J_m$. Continuing to isolate the real zeros of $f_2(\mathbf{b},x_n)$ in $J_1,\ldots,J_m$, we can get a list of intervals, say $J'_1,\ldots,J'_{m'}$, or an empty set. If we get an empty set, the box $\mathbf{b}\times I_n$ can be thrown away. Else, doing so for $f_3,\ldots,f_n$, we can get a list of intervals, say $\bar{J}_1,\ldots,\bar{J}_{m''}$ or an empty set. Then we can get candidate boxes $\mathbf{b}\times \bar{J}_k (1\le k\le m'')$ or throwing away the box $\mathbf{b}\times I_n$. Thus we exclude some sub-boxes of $B$. 
We can recursively do so on the boxes to exclude some sub-domain of $B$. When $I_n$ is large, this method may be more efficient to compute the possible candidate boxes than the method in (\ref{eqn-exclusion}).
In fact, this method also works for non-polynomial systems. But the way to construct the upper (lower) bounding function is a little more complicated than polynomial case and we need to isolate the real zeros of non-polynomial univariate equations. We denote it as an algorithm $A=${\bf Candidate}$(F,B)$. Our implementation uses mainly this method for polynomial systems.

Based on the discussion above, we have the main algorithm {\bf Realrootfinding}.

\begin{algorithm}[H] \label{algo-main}
    \caption{$\quad \mathbf{R},\mathbf{SR}=${\bf Realrootfinding}$(F,B_0,\epsilon):$}
\begin{enumerate}
\item $\textbf{Input}$: A system $F=(f_1,\ldots,f_n)$, an initial box $B_0$, a termination precision $\epsilon>0$.
\item $\textbf{Output}$: An isolating box set $\mathbf{R}$ and a suspected root box set $\mathbf{SR}$.
\end{enumerate}
 \begin{enumerate}[1.]
 \item $\mathbf{R}\leftarrow\emptyset, \mathbf{SR}\leftarrow \emptyset$ and $\mathbf{BS} \leftarrow\{B_0\}$.
 \item While $\mathbf{BS}\ne \emptyset$
       \begin{enumerate}[(1)]
              \item Take $B\in \mathbf{BS}$, $\mathbf{BS}\leftarrow \mathbf{BS}\setminus\{B\}$, $A \leftarrow ${\bf Exclusion}$(F,B).$
              \item If $A=0$,
              \begin{enumerate}[(a)]
                    \item $G\leftarrow V_{nd} J_F^{-1}(m(B))\cdot F^T.$
                    \item $A'\leftarrow${\bf IsSMSys}$(G,B).$
                     \item If $A'=1$,  
                                   \begin{itemize}
                                   \item $A''\leftarrow${\bf Existence}$(G,B,\epsilon)$.
                                   \item If $A''=1$,  $\mathbf{R}\leftarrow \mathbf{R}\bigcup\{B\}$.
                                   \item If $A''=${\bf Unknown}, $\mathbf{SR}\leftarrow \mathbf{SR}\bigcup\{B\}$.
                                   \end{itemize}
                     \item Else if $w(B)>\epsilon$, 
                                   \begin{itemize}
                                   \item Split $B$ into two similar parts and add them into $\mathbf{BS}$.
                                   \end{itemize}
                     \item Else, $\mathbf{SR}\leftarrow \mathbf{SR}\bigcup\{B\}$.

              \end{enumerate}
       \end{enumerate}
 \item Return $\mathbf{R}$,$\mathbf{SR}$.
 \end{enumerate}
\end{algorithm}

The correctness of the algorithm {\bf Realrootfinding} is guaranteed by Theorem \ref{SM} and Lemma \ref{SMcase2}. The termination of the algorithm is guaranteed by Theorem \ref{preconditioner2} and the given $\epsilon$.

\noindent{\bf Remarks for the Realrootfinding algorithm}:

1. For a system $F$ with only simple zeros, we can always get all the isolating boxes of all the zeros of $F$ in $B_0$ by recursively subdividing those suspected root boxes. Our theories ensure the termination of the algorithm.

2. If $F$ has singular zeros, we can not determine whether a box contains only a singular zero or not. Thus, we give the termination precision $\epsilon>0$ and repeat subdividing the boxes until the widths of the obtained boxes are less than $\epsilon$. Finally we get some suspected root boxes. Each suspected root box may contain several zeros (counting the multiplicities of the zeros) or no zero.

3. In order to get all the real zeros of a given system in a general given real box, we consider the coordinate transformation: $x_i\to \frac{1}{x_i}$. We map the interval $[-b,-1]$ to $[-1,-1/b]$, and the interval $[1,b]$ to $[1/b,1]$, where $b>1$. Hence, we need only to consider finding real zeros in $[-1,1]^n$. Doing this way, we take only interval computation inside $[-1,1]^n$ which need less interval evaluation. If the given system has only finite real zeros in a bounded box, we can get all its real zeros in the whole real space. For example, for a bivariate system $F=(f_1(x,y),f_2(x,y))$, we can get the isolating boxes or suspected boxes of real zeros of the original system in $[1,b]\times [-1,1]$ by isolating the real zeros of the system $(f_1(\frac{1}{x},y), f_2(\frac{1}{x},y))$ in $[1/b,1]\times[-1,1]$. If $F$ is a polynomial system, then we can get all the real zeros of $F$ if we take $b$ as its root bound.

4. In Step 2.(2) (d), we can find that some regions in $B$ may be computed for several times which waste much computing time. Thus we can split the given box $B_0$ into many smaller boxes at first.

5. In Step 2.(1), our aim is to find the candidate regions which may contain real zeros of the given system. For the case that the give system is a polynomial system, we can compute candidate regions as below (see more details in \cite{Cheng1,Cheng3,cheng5}).
Given a system $F=(f_1,\ldots,f_n)$ and a box $B=I_1\times \cdots \times I_n$, let $J=I_2\times \cdots \times I_n\subset \R^{n-1}$. We denote $T_i$ as the set of the real (interval) zeros of the interval polynomial $f_i(x_1,J)$ for $i=1,\ldots,n$, and let $\{t_1 \cap \cdots \cap t_n \cap I_1| t_i\in T_i,i=1,\ldots,n\}=\{I'_1,\ldots,I'_m\}$. We call $\{I'_1\times J,\ldots,I'_m\times J\}$ the {\bf candidates} of $F$ in $B$. It is obvious that all the real zeros of $F$ in $B$ are in the candidates. If the width $w(J)$ is large, we can split $J$ into $(n-1)^k$ parts equally (by splitting $I_i, i=2,\ldots,n$ into $k$ parts equally) and compute the candidates separately.

6.
For each suspect box we got,  it may contain no zero, a simple zero, a multiple zero, several simple zeros or, one or more simple zero(s) together with one or more multiple zero(s) of the given system. This happens because we set a termination precision for the subdivision precess since we are not sure if the given system has multiple real zeros or not. For some examples, there may be so many suspect boxes and most of them contain no roots. We need to remove the redundant boxes which contains no roots. For a given system, there may exist one or several cluster(s) of boxes. Each cluster of boxes may contain one (or several) multiple (or simple) root(s), or no roots. If there is a root, the Newton's method will converge to the root if the start point is chosen well, that is, the start point is in the basins of attraction of the system for the root \cite{SUSANTO20091084}. The convergent region for the root will intersect some the boxes inside the cluster of the boxes. If we choose properly some point(s) in each suspect box in the cluster of boxes as start point(s) for Newton's method for the system, we may get the root(s) inside the cluster of boxes and remove the redundant boxes. For the derived box(es) after computing with Newton's method, we can do only a heuristic verification of a suspect box by deflation methods (see \cite{deflation,multiplicity,Leykin2006NewtonsMW} and the methods mentioned therein). Notice that we may miss some root(s) or get more roots with this operation. For example, a root on (or very close to) the boundaries of a suspected box, the root may be missed or counted twice because of numerical computation. But it usually works well. We will show experiments for illustration with this step.

%
%

\subsection{Complexity analysis}
We analyze the bit complexity of isolating the real roots of a zero-dimensional polynomial system $\Sigma$ in this subsection.  We assume that there are $n$ variables and the degrees of the polynomials of the system are bounded by $d$, the number of their terms are bounded by $m$ and the bit sizes of their coefficients are bounded by $\tau$.

The complexity analysis for subdivision based algorithm of a single polynomial was considered by \cite{PVburr,PVcucker}. The complexity analysis for subdivision based algorithm for a polynomial system was given in \cite{Mantzaflaris}. Different from their work to find exact results, our analysis is for a given terminating precision without assuming that the system has only simple real roots.

We would like to mention that the condition number is an important parameter for the complexity analysis of subdivision based methods for root finding. There are a series of works about real root counting of polynomial systems with probatilistic numeric methods and related analysis based on condition number \cite{CUCKERrootcount1,CUCKERrootcount2,CUCKERrootcount3}.

In this paper, $\mathcal{O}$ means the bit complexity, $\O(\cdot)$ indicates that we omit poly-logarithmic factors.

Consider the real roots of $\Sigma$ inside $B=[-1,1]^n$. Notice that if we want to get all the real zeros of $\Sigma$, we can transform the original system into new system by replacing $x_i$ with $1/x_i$ and removing the denominators of the whole polynomials. We can get $2^n$ this kind of systems in $[-1,1]^n$. We assume that the termination precision for the boxes is $\rho $, that is, we stop subdividing the boxes when their lengths are less than $\rho$, where $\rho$ is a rational number such that $0<\rho<1$. So the number of the boxes in $B$ is bounded by $(\frac{2}{\rho})^n$. The bitsize of the endpoints of the boxes is bounded by $-log(\rho)$.

For each box, we take one exclusion test and one our existence test at most. We will analyse these two operations one by one.

\begin{lemma}[Multivariate~polynomial~interval~evaluation]\label{multivar-interval-evaluation}
  Let $g\in\Z[x_1,\ldots,x_n]$ be of magnitude $(d,\,\tau)$ and $m$ terms.
  $I_1,\ldots,I_n$ are intervals whose endpoints are rational numbers with bitsize
  $\sigma$, then evaluating
  $g(I_1,\ldots,I_n)$ has a bit complexity of
  $\O(d\,m\sigma+m\tau)$, and the bitsize of the endpoints of $g(I_1,\ldots,I_n)$ is $\mathcal
  {O}(d\sigma+\tau)$.
\end{lemma}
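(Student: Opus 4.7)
The plan is to evaluate $g$ monomial-by-monomial on the product of intervals $I_1 \times \cdots \times I_n$, carefully tracking the bitsize of the interval endpoints through each interval arithmetic operation and then summing the $m$ resulting intervals. Write $g = \sum_{\alpha} c_\alpha \, x_1^{e_{\alpha,1}} \cdots x_n^{e_{\alpha,n}}$ with at most $m$ nonzero terms, where $|c_\alpha|$ has bitsize at most $\tau$ and $\sum_i e_{\alpha,i} \le d$.

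First I would handle a single monomial. For each variable $x_i$, I compute $I_i^{e_{\alpha,i}}$ by repeated squaring in interval arithmetic; this uses $O(\log e_{\alpha,i})$ interval multiplications, each of which is a constant number of rational multiplications followed by a min/max. Starting from endpoints of bitsize $\sigma$ and applying the standard fact that an interval product of two intervals whose endpoints have bitsizes $s_1,s_2$ produces an interval with endpoint bitsize $s_1+s_2+O(1)$, the final bitsize of $I_i^{e_{\alpha,i}}$ is $O(e_{\alpha,i}\,\sigma)$. Using $\tilde{\mathcal{O}}$-time integer multiplication (Sch\"onhage--Strassen), the total bit cost of computing this single power is $\tilde{\mathcal{O}}(e_{\alpha,i}\,\sigma)$. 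Multiplying the $n$ powers together in sequence yields an interval of endpoint bitsize $O((\sum_i e_{\alpha,i})\sigma) = O(d\sigma)$ at cost $\tilde{\mathcal{O}}(d\sigma)$, and finally multiplying by the integer coefficient $c_\alpha$ (of bitsize $\le \tau$) yields an interval of endpoint bitsize $O(d\sigma + \tau)$ at cost $\tilde{\mathcal{O}}(d\sigma + \tau)$. So each of the $m$ monomials contributes $\tilde{\mathcal{O}}(d\sigma + \tau)$ to the total bit complexity.

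Next I would sum the $m$ monomial intervals. Each addition is of two intervals whose endpoints have bitsize $O(d\sigma + \tau)$ (the accumulator never grows faster than this up to an additive $O(\log m)$ swallowed by the $\tilde{\mathcal{O}}$ notation), so each addition costs $\tilde{\mathcal{O}}(d\sigma + \tau)$, and the $m$ additions cost $\tilde{\mathcal{O}}(m(d\sigma + \tau)) = \tilde{\mathcal{O}}(dm\sigma + m\tau)$. Adding this to the $\tilde{\mathcal{O}}(m(d\sigma+\tau))$ cost of producing the monomial intervals gives the claimed total complexity $\tilde{\mathcal{O}}(dm\sigma + m\tau)$, and the final endpoints inherit the bitsize bound $\mathcal{O}(d\sigma + \tau)$.

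The main obstacle I expect is bookkeeping the bitsize growth through interval operations and choosing the order of operations carefully so that no intermediate quantity exceeds the target bounds; in particular, I must be sure that repeated squaring, rather than naive iterated multiplication, is used for each $I_i^{e_{\alpha,i}}$, since the latter would accumulate a quadratic factor in $e_{\alpha,i}$, and I must verify that interval-specific subtleties (min/max over at most four endpoint products, handling intervals that may contain zero, and propagating signs) only introduce constant overhead per operation and therefore do not affect the stated $\tilde{\mathcal{O}}$ bound.
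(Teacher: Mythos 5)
Your overall decomposition --- evaluate each of the $m$ monomials, then sum the resulting intervals --- matches the paper's, and your use of repeated squaring for each $I_i^{e_{\alpha,i}}$ is sound and, as you note, needed to avoid a quadratic blowup in $e_{\alpha,i}$. The gap is in the claim that ``multiplying the $n$ powers together in sequence $\ldots$ costs $\O(d\sigma)$.'' After $k$ factors the accumulator already has endpoint bitsize $\bigl(\sum_{i\le k}e_{\alpha,i}\bigr)\sigma$, so the $(k{+}1)$-th interval multiplication costs $\O\bigl((\sum_{i\le k+1}e_{\alpha,i})\sigma\bigr)$ even with FFT-based integer multiplication. Summing along the chain gives $\sum_{k\ge 2}\O\bigl((\sum_{i\le k}e_{\alpha,i})\sigma\bigr)$, which in the worst case (e.g.\ equal exponents, or the multilinear monomial $x_1\cdots x_d$) is $\O\bigl(\min(n,d)\,d\sigma\bigr)$, not $\O(d\sigma)$; for $x_1\cdots x_d$ this is $\O(d^2\sigma)$. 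Carried through, your argument yields $\O\bigl(\min(n,d)\,dm\sigma+m\tau\bigr)$, which is strictly weaker than the lemma's $\O(dm\sigma+m\tau)$ because $\min(n,d)$ is not a polylogarithmic factor.

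The paper sidesteps this by viewing each monomial as an unstructured product of at most $d$ interval factors (the $I_i$ repeated according to their exponents) and multiplying them in a \emph{balanced binary tree}: at level $j$ there are about $d/2^j$ pairs of factors each of endpoint bitsize $2^{j-1}\sigma$, so each of the $\log d$ levels costs $\O(d\sigma)$ and the monomial product costs $\O(d\sigma)$ total. Your per-variable repeated squaring is itself a balanced tree on $e_{\alpha,i}$ copies of $I_i$, so the fix is simply to also use balanced scheduling for the cross-variable combination --- or to abandon the per-variable powering and feed all $\sum_i e_{\alpha,i}\le d$ interval factors into a single balanced product tree, which is exactly the paper's scheme. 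The rest of your argument (bitsize bookkeeping through squaring, multiplication by the $\tau$-bit coefficient giving $\mathcal{O}(d\sigma+\tau)$, and the $m$ final interval additions) is fine and matches the paper's accounting.
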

\begin{proof}
Consider one term of $g$ at first.  The operations here include the multiplications of $d$ intervals at most and one multiplication between the coefficient and the product of the intervals. We can divide $d$ intervals into $\frac{d}{2}$ pairs. Each pair usually contains two intervals. If $d$ is odd, the last pair contains only one interval. To get the product of them includes at most 4 multiplications. The total bit complexity for computing all the pairs is $\frac{d}{2}*4*\sigma=2\,d\sigma$. The bitsizes of the products are $2\sigma$. For these $\frac{d}{2}$ products, we divide them into $\frac{d}{4}$ pairs. Similarly, the total bit complexity for computing the products of all the pairs is $\frac{d}{4}*4*2\sigma=2\,d\sigma$. The bitsizes of the products are $4\sigma$. Doing  so in a similar way until the $k$ step such that $d=2^k$, that is, $k=\log(d)$, we can get the product of all the intervals.
So to get the product of all the intervals, we have the total bit complexity $2\,d\sigma*\log(d)$ and the bitsize of the final product is $2^{\log(d)}\sigma=d\,\sigma$. Considering the coefficient into the product, we have the total bit complexity of evaluating one term is $2\,d\sigma*\log(d)+d\,\sigma+\tau=\O(d\sigma+\tau)$.  The bitsize of the final product for one term is $d\sigma+\tau$.  So evaluating
  $g(I_1,\ldots,I_n)$ has a complexity of
  $\O(d\,m\sigma+m\tau)$, and the bitsize of the endpoints of $g(I_1,\ldots,I_n)$ is $\mathcal
  {O}(d\sigma+\tau)$.
\end{proof}


\begin{lemma}\cite{matrixinverse} \label{lem-minverse} Let $A=(a_{i,j})\in Z^{n\times n}$  be nonsingular. We denote by $\| A \| := \max |a_{i,j}|$ the maximum magnitude of
entries in $A$, and by $\kappa(A) := \|A\| \|A^{-1}\|$ the condition number of the matrix with respect
to the max norm. We give a Las Vegas probabilistic algorithm that has expected running
time $\O(n^3 (log \|A\| + log \kappa(A) ))$ bit operations to compute the exact inverse of $A$. Thus,
for a well conditioned $A$, with $\kappa(A)$ bounded by a polynomial function of $n log\|A\|$, this
cost estimate becomes $\O(n^3 log \|A\|)$.
\end{lemma}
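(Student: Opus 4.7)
The plan is to invoke Dixon's $p$-adic lifting strategy combined with rational number reconstruction, which is the standard route to the claimed bound. First, I would pick a random prime $p$ of bitsize $\Theta(\log \|A\|)$; since $\det(A)$ has bitsize $\O(n \log \|A\|)$ by Hadamard's inequality, a random prime of this magnitude avoids dividing $\det(A)$ with constant probability, giving the Las Vegas flavor, and in that case $A$ is invertible modulo $p$. Then I would precompute $B := A^{-1} \bmod p$ by Gaussian elimination over $\mathbb{F}_p$ at cost $\O(n^3 \log p) = \O(n^3 \log \|A\|)$ bit operations.

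Next, to produce $A^{-1}$ exactly, I would solve the $n$ systems $A x_j = e_j$ in parallel by Dixon lifting: set $r^{(0)} := e_j$, and for $k \ge 0$ compute $c^{(k)} := B\, r^{(k)} \bmod p$ and $r^{(k+1)} := (r^{(k)} - A\, c^{(k)})/p$, so that $\sum_{k=0}^{N-1} c^{(k)} p^k$ equals $x_j$ modulo $p^N$. Each step applies $B$ and $A$ to vectors whose entries have bitsize $\O(\log p)$ at cost $\O(n^2 \log p)$ per system, so $n$ systems cost $\O(n^3 \log p)$ per lifting step. Once $N$ is large enough, applying extended Euclidean rational reconstruction coordinatewise recovers $x_j$ exactly, and assembling the columns gives $A^{-1}$.

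The crux of the argument is bounding the number of lifting steps $N$. By Cramer's rule, each entry of $A^{-1}$ is a rational $p/q$ with $|p|, |q| \le |\det(A)|$, so $N$ must satisfy $p^N \gtrsim \max(|p|,|q|)^2$, i.e.\ $N = \O(\log \|A^{-1}\| + \log \|A\|) = \O(\log \kappa(A) + \log \|A\|)$ after folding in the definition $\kappa(A) = \|A\|\,\|A^{-1}\|$. Multiplying by the per-step cost $\O(n^3 \log p)$ yields the claimed $\O(n^3(\log \|A\| + \log \kappa(A)))$ bit operations, and the final rational reconstruction contributes only poly-logarithmic overhead absorbed into $\tilde{\mathcal{O}}$.

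The main obstacle is exactly this sharp bound on $N$: the naive Hadamard estimate on $\|A^{-1}\|$ would force $N = \O(n \log \|A\|)$ and inflate the total cost by a factor of $n$. Extracting the condition-number-based bound requires either an a posteriori bound through Cramer and the explicit inverse, or the shifted-number-system refinement in the cited reference; since the lemma is quoted from \cite{matrixinverse}, I would at this point defer the precise argument to that source rather than reprove it here.
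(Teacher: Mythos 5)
The paper does not prove this lemma at all; it is stated as a direct citation of \cite{matrixinverse} and used as a black box, so there is no in-paper proof to compare against. Your sketch is the right outline of the algorithm behind the cited bound: random prime modulus, Gaussian elimination over $\mathbb{F}_p$, Dixon $p$-adic lifting applied to all $n$ right-hand sides simultaneously, and rational reconstruction. You also correctly locate the one place where a naive argument breaks down: taking the common denominator $\det(A)$, whose bitsize is $\Theta(n\log\|A\|)$ by Hadamard, would force $N=\Theta(n)$ lifting steps and give $\O(n^4\log\|A\|)$ rather than $\O(n^3(\log\|A\|+\log\kappa(A)))$, and the passage from $\|A^{-1}\|$ being small to the numerators and denominators of the entries of $A^{-1}$ being small is exactly the non-trivial step (a rational can be small in magnitude and still have a huge denominator). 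One caution on the wording in your middle paragraph: the chain ``$|p|,|q|\le|\det(A)|$, so $N=\O(\log\|A^{-1}\|+\log\|A\|)$'' does not actually follow as written, and read literally it contradicts your own last paragraph; it should be phrased as the target bound to be justified by the shifted-number-system machinery, not as a consequence of the Cramer bound. With that clarification, deferring the shifted-number-system argument to \cite{matrixinverse} is entirely appropriate here, since the paper itself does exactly that.
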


\begin{lemma}\label{lem-nlcm}
We can compute the least common multiple of $n$ integers with bitsize $\sigma$ by $\O(n\sigma) $ bit complexity and the bit size of the least common multiple is $n\sigma$.
\end{lemma}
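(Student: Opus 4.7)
The plan is to reduce the $n$-ary LCM computation to repeated binary LCMs via the identity $\operatorname{lcm}(a,b) = ab/\gcd(a,b)$ and organize the work in a balanced binary tree. First I would dispose of the bitsize claim: since each $a_i$ has magnitude at most $2^\sigma$, the LCM divides the product $\prod_i a_i$, which is at most $2^{n\sigma}$, giving a bitsize of at most $n\sigma$.

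For the complexity, I would arrange the $n$ inputs at the leaves of a binary tree of depth $\lceil \log_2 n\rceil$. At level $k$ (counting the leaves as level $0$), each node stores a running LCM whose bitsize is at most $2^k\sigma$, since combining two integers of bitsize $b$ produces an LCM of bitsize at most $2b$. Each node at level $k$ is produced from its two children (each of bitsize at most $2^{k-1}\sigma$) by one GCD, one multiplication, and one exact division. Using fast (half-GCD-based) algorithms, each of these three operations on integers of bitsize $b$ can be done in $\tilde{\mathcal{O}}(b)$ bit operations, so a single node at level $k$ costs $\tilde{\mathcal{O}}(2^k\sigma)$.

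The number of nodes at level $k$ is at most $n/2^k$, so the total cost at level $k$ is $(n/2^k)\cdot \tilde{\mathcal{O}}(2^k\sigma) = \tilde{\mathcal{O}}(n\sigma)$. Summing over the $\lceil\log_2 n\rceil$ levels absorbs the extra $\log n$ factor into the $\tilde{\mathcal{O}}$ notation, yielding a total bit complexity of $\tilde{\mathcal{O}}(n\sigma)$, as claimed.

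The only nontrivial ingredient is the $\tilde{\mathcal{O}}(b)$ bound on GCD of two $b$-bit integers; I would simply cite the standard fast-GCD result (Knuth--Sch\"onhage half-GCD) rather than reprove it. The subtle point to check is that each intermediate LCM really has bitsize at most $2^k\sigma$ so that the cost accounting stays telescopic; this follows because $\operatorname{lcm}(u,v)\le uv$. With this in hand the lemma follows directly.
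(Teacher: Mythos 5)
Your proof is correct and follows essentially the same route as the paper: a balanced binary tree of pairwise $\operatorname{lcm}(a,b)=ab/\gcd(a,b)$ computations, with the bitsize doubling at each level so that each of the $\log n$ levels costs $\O(n\sigma)$. You are slightly more explicit about the soft-$\O$ cost of GCD/multiplication/division (citing half-GCD) and about why the final bitsize is at most $n\sigma$ (the LCM divides the product), but the decomposition and accounting are the same as in the paper.
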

\begin{proof}
 We can divide $n$ intervals into $\frac{n}{2}$ pairs. Each pair is two integers. Note that if $n$ is odd, the last one can be regarded as a pair. For each pair, we compute the least common multiple of the two integers $a,b$. It is $a\, b/\gcd(a,b)$. The bit complexity is $2\O(\sigma)$. So the bit complexity of computing all the pairs is $n\O(\sigma)$.  The bit sizes of the results are all $2\sigma$.  We continue to divide the results into $\frac{n}{4}$ pairs. For each pair we compute its least common multiple with bit complexity $4\O(\sigma)$. So the bit complexity of computing all the pairs is $n\O(\sigma)$. And the bit sizes of the results are $ 4\sigma$.  Doing so, until we get the least common multiple of the $n$ integers, which we need $log(n)$ steps.  So the total bit complexity is  $log(n)\O(n\sigma)=\O(n\sigma)$. The bit size of the last least common multiple is $2^{log(n)}\sigma=n \sigma$.
\end{proof}

For a nonsingular matrix $M$ with rational entries such that the bit sizes of the entries are bounded by $\sigma$, we can rewrite $M=M_1 M_2$, where $M_1$ is a diagonal matrix and $M_2$ is a matrix with integer entries. We can compute the least common multiple of the denominators of the elements of each row. Set its inverse as the element of the related row of $M_1$. Each of the related element of $M$ of the row multiplies the least common multiple and set them as the related elements of $M_2$. The bit size of each element of $M_2$ is $n\sigma$, so as $M_1$. By Lemmas \ref{lem-minverse}, \ref{lem-nlcm}, we can find that the bit complexity of computing the inverse of $M$ is $\O(n^4\sigma+n^3\,log (\kappa(\|M_2\|))$ ( a well conditioned one is $\O(n^4\sigma)$)  and the bit sizes of the elements of $M^{-1}=M_2^{-1}M_1^{-1}$ is $2n\sigma$.

Now we consider the complexity to check the existence and the uniqueness of a real root inside a box.  When we compute the intersection between the space curve formed by the $n-1$ functions and the boundaries of a box. We need to check each face of the box to intersect the space curve. There are $2\,n$ faces. Each face is related to a zero-dimensional system with $n-1$ functions and $n-1$ variables. Recursively, we will do root finding of $2^{n-1}\,n!$ univariate polynomials. In order to get an approximating root, we can bisect the interval a fixed number times, say 10 times, if there is a root. We also need to multiply two square matrices with order $n$: One is $V$, the other is $J_F^{-1}(m(B))$.
\begin{lemma}\label{lem-existence}
  Let $F=(f_1,\ldots,f_n)\subset\Z[x_1,\ldots,x_n]$ and each $f_i$ be of magnitude $(d,\,\tau)$, and $m$ terms. 
  If $B=I_1\times\cdots I_n$ is a box with rational intervals such that the endpoints of $I_1,\ldots,I_m$ all with bitsize
  $\sigma$, then checking whether
  $F$ contains a unique real root in $B$ has a bit complexity of
  $\mathcal{O}(2^n n^{n+1}\,m\,(d\sigma+\tau))$.
\end{lemma}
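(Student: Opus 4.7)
The plan is to decompose the bit cost into three components matching the uniqueness/existence procedure developed above: (a) building the preconditioned system $G=V\,J_F^{-1}(m(B))\cdot F^T$, (b) verifying the S-M condition on $J_G^X(B)$, and (c) running the recursive existence check of Algorithm~\ref{alg-existence}. For (a), I first use Lemma~\ref{multivar-interval-evaluation} to evaluate the $n^2$ entries of $J_F$ at the rational midpoint $m(B)$ (whose coordinates have bitsize $\mathcal{O}(\sigma)$), costing $\O(n^2 m(d\sigma+\tau))$ bit operations and producing rationals of bitsize $\mathcal{O}(d\sigma+\tau)$; then apply Lemma~\ref{lem-minverse} (with a clear-denominators reduction justified by Lemma~\ref{lem-nlcm}) to invert the resulting matrix in $\O(n^4(d\sigma+\tau))$ operations, yielding entries of bitsize $\mathcal{O}(n(d\sigma+\tau))$; finally, forming $g_i=\sum_j(V\,J_F^{-1}(m(B)))_{ij}\,f_j$ gives $n$ polynomials with $\mathcal{O}(nm)$ terms, degree at most $d$, and coefficient bitsize $\mathcal{O}(n(d\sigma+\tau))$. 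For (b), the S-M check needs $n^2$ interval evaluations of partials of $G$ on $B$ (each $\O(n^2 m(d\sigma+\tau))$ by Lemma~\ref{multivar-interval-evaluation}) together with determinants of at most $O(n\,2^n)$ submatrices; both contributions are absorbed by the bound from part (c).

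The dominant cost (c) is handled by unrolling the recursion in Algorithm~\ref{alg-existence}. At each level we restrict one variable, spawning at most $2(n-k)$ subcalls at depth $k$ (one per face of the current box), so the total number of leaves is at most $2n\cdot 2(n-1)\cdots 2\cdot 2 = 2^{n-1}\,n!$. By Lemma~\ref{recursion}, every restricted subsystem remains an S-M system automatically, so no additional matrix inversion is needed below the root. At each node, a constant number of interval evaluations and sign tests via Lemma~\ref{vertex} suffice; successive one-variable substitutions by rationals of bitsize $\mathcal{O}(\sigma)$ preserve the coefficient bitsize of $\mathcal{O}(n(d\sigma+\tau))$ and the term count of $\mathcal{O}(nm)$ per equation. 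Each such evaluation costs $\O(d\cdot nm\cdot\sigma+nm\cdot n(d\sigma+\tau))=\O(n^2 m(d\sigma+\tau))$ by Lemma~\ref{multivar-interval-evaluation}, and the total is $2^{n-1}n!\cdot\O(n^2 m(d\sigma+\tau))$. Using the elementary estimate $n!\le n^{n-1}$ this simplifies to $\mathcal{O}(2^n\,n^{n+1}\,m(d\sigma+\tau))$, matching the claimed bound.

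The principal obstacle I anticipate is the careful bookkeeping of coefficient bitsizes through the preconditioning and through the successive one-variable specializations: a naive analysis would let the bitsize multiply by a factor of $n$ at every recursive level, overshooting the stated bound by an exponential factor. The crucial observation is that each substitution adds only $\mathcal{O}(d\sigma+\log m)$ to the bitsize and at most $\mathcal{O}(m)$ to the term count after collecting like monomials, so the blow-up across the $n-1$ levels remains linear in $n$ rather than exponential. A secondary subtlety is bounding the refinement loops in steps 3.b--3.c of Algorithm~\ref{alg-existence} by a fixed constant number of bisections per leaf; capping this refinement by the precision $\epsilon_b$ ensures that each leaf contributes only $O(1)$ interval evaluations, so the leaf-count argument above remains tight.
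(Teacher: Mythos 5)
Your proposal follows essentially the same route as the paper's proof: the same three-part decomposition (build $G=V J_F^{-1}(m(B)) F^T$, check the S-M/strong-monotone condition, then unroll the $2^{n-1}n!$-leaf recursion for existence), invoking Lemmas~\ref{multivar-interval-evaluation}, \ref{lem-minverse}, \ref{lem-nlcm} and \ref{recursion} in the same roles and identifying the recursive existence step as dominant. The only differences are small bookkeeping ones that cancel in the end—you carry $\O(n^2 m(d\sigma+\tau))$ per leaf with the tighter $n!\le n^{n-1}$, whereas the paper carries $\O(nm(d\sigma+\tau))$ per leaf with the looser $n!=\mathcal{O}(n^n)$—and both land on $\mathcal{O}(2^n n^{n+1} m(d\sigma+\tau))$.
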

\begin{proof} Denote the Jacobian matrix of $F$ w.r.t. $x_1,\ldots,x_n$ as $J_F$. The bitsize of the middle point of $B$, say $\P$, is $\sigma$. It is clear that the bitsize of each element of $J_F(\P)$ is $\mathcal{O}(d\sigma+\tau)$. The bit complexity to compute $J_F(\P)$ is $\mathcal{O}(n^2\,m\,d\sigma+n^2\,m\,\tau)$ by Lemma \ref{multivar-interval-evaluation}. Thus the bitsize of the elements of the inverse of $J_F^{-1}(\P)$ is $\mathcal{O}(n\,d\sigma+n\,\tau)$ and the bit complexity to compute $J_F^{-1}(\P)$ is $\mathcal{O}(n^5\sigma+n^4\tau)$ by the analysis below Lemma \ref{lem-nlcm}. The bit complexity to compute $V\,J_F^{-1}(\P)$ is $n^3\mathcal{O}(n\,d\sigma+n\,\tau)=\mathcal{O}(n^4\,d\sigma+n^4\,\tau)$, where $V$ is the S-M matrix we mentioned before whose elements are with bitsize $\mathcal{O}(1)$. And the bitsize of the elements of $V\,J_F^{-1}(\P)$  is $\mathcal{O}(n\,d\sigma+n\,\tau)$.

Let $G=(g_1,\ldots,g_n)=V\,J_F^{-1}(\P) F^T$. Then $J_G=V\,J_F^{-1}(\P) J_F$.
After we evaluate $J_F$ on $B$, denoted as $J_F(B)$, we have that each element of the matrix has a bitsize of  $\mathcal
  {O}(d\sigma+\tau)$ by Lemma \ref{multivar-interval-evaluation}. Notice that $f_i$ and $\frac{\partial f_i}{\partial x_j}$ have the same bitsize after evaluating on $B$.  In order to check whether the matrix $J_G(B)=V\,J_F^{-1}(\P) J_F(B)$ is strong monotonous over $B$, we compute the minors of the determinant of $J_G(B)$ step by step.  We compute the order $i+1$ minors with the result of the order $i$ minors until we get the determinant of $J_G(B)$, where changes from 1 to $n-1$. Totally, we can consider computing $n!$ products among $n$ intervals. From the way we check the strong monotonous matrix condition, we compute each product by multiplying the $n$ intervals one by one.  There are $4(n-1)$ multiplications. Notice that the bit size of $J_G(B)$ is $\mathcal{O}(n\,d\sigma+n\,\tau)$. The total complexity to get one product is
  {\tiny $$4*\mathcal
  {O}(n\,d\sigma+n\,\tau)+4*2\mathcal
  {O}(n\,d\sigma+n\,\tau)+\ldots+4*(n-1)\mathcal
  {O}(n\,d\sigma+n\,\tau)=2*n*(n-1)\mathcal
  {O}(n\,d\sigma+n\,\tau)=\mathcal
  {O}(n^3\,d\sigma+n^3\tau).$$}
  By Stirling's approximation, the total bit complexity to check the strong monotonous condition is $\mathcal{O}(n^n) \mathcal{O}(n^3\,d\sigma+n^3\tau)=\mathcal
  {O}(n^{n+3}\,d\sigma+n^{n+3}\tau)$. Notice that the uniqueness condition is already checked by the determinant of $J_G(B)$. To check the number of intersections between the space curve formed by $g_1,\ldots,g_{n-1}$ and the faces of $B$, we need to check whether $2^n\,n! $ univariate polynomials with magnitude $(d,n\,d\sigma+n\,\tau)$ and at most $n\,m$ terms have solutions in the related intervals.
  So for one root isolation on one interval, the bit complexity is $\mathcal{O}(n\,m(d\sigma+n\,d \sigma+n\tau))=\mathcal{O}(n\,m\,d\sigma+n\,m\tau)$ by Lemma \ref{multivar-interval-evaluation}.
 So the total bit complexity for the root isolation is $2^n n! \mathcal{O}(n\,m\,d\sigma+n\,m\tau)=\mathcal{O}(2^n n^{n+1}\,m\,(d\sigma+\tau))$. It is not difficult to find that this is the main part for the whole bit complexity when compared all the steps. Thus the total bit complexity of checking whether
  $F$ contains a unique real root in $B$ is $\mathcal{O}(2^n n^{n+1}\,m\,(d\sigma+\tau))$.
\end{proof}

We want to mention that there exist more efficient algorithm to compute the determinant of an interval matrix, see \cite{det_interval_matrix}. But it does not change the total bit complexity. The analysis is based on our implementation.

From the analysis above, we can directly deduce the following result.
\begin{theorem}\label{thm-complexity}
Isolating the real roots of a square system of polynomials with magnitude $(d,\tau)$ and $m$ terms with the algorithm mentioned above, one takes a bit complexity of
$\mathcal{O}((\frac{8}{\rho})^n n^{n+1}\,m\,(-d\, log(\rho)+\tau))$.
\end{theorem}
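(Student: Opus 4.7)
The plan is to directly combine the bound on the number of boxes generated by subdivision with the per-box cost estimated in Lemma~\ref{lem-existence}, together with the constant overhead of the $2^n$ coordinate transformations $x_i \mapsto 1/x_i$ needed to reduce the problem to $[-1,1]^n$.

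First I would fix notation: to capture all real roots of $\Sigma$ in $\R^n$, apply the $2^n$ coordinate changes described in the Realrootfinding remarks, so it suffices to bound the cost on each of the $2^n$ resulting systems restricted to the cube $B=[-1,1]^n$. For each such system, the subdivision stops once a box has width below~$\rho$, so the number of boxes produced is at most $(2/\rho)^n$, and every box endpoint is a dyadic rational of bitsize $\sigma = \mathcal{O}(-\log \rho)$. On each box the algorithm performs at most one call to \textbf{Exclusion} and one call to \textbf{Existence}; the latter (by inspection of Lemma~\ref{lem-existence}) dominates the former, since \textbf{Exclusion} reduces to a single multivariate interval evaluation, handled by Lemma~\ref{multivar-interval-evaluation} with cost $\mathcal{O}(n\,m(d\sigma+\tau))$, which is strictly smaller than the existence cost.

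Next I would substitute $\sigma=-\log \rho$ into the per-box bound of Lemma~\ref{lem-existence} to obtain
\[
\mathcal{O}\!\bigl(2^n\, n^{n+1}\, m\, (-d\log\rho + \tau)\bigr)
\]
per box. Multiplying by the number of boxes $(2/\rho)^n$ per transformed system and by the $2^n$ transformations yields
\[
2^n \cdot (2/\rho)^n \cdot \mathcal{O}\!\bigl(2^n\, n^{n+1}\, m\, (-d\log\rho + \tau)\bigr)
= \mathcal{O}\!\Bigl(\bigl(\tfrac{8}{\rho}\bigr)^n\, n^{n+1}\, m\, (-d\log\rho + \tau)\Bigr),
\]
which is exactly the claimed bound.

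There is no real obstacle here beyond bookkeeping; the only subtlety is to confirm that the existence test really is the dominant per-box cost. In particular, one should check that the construction of the preconditioned system $G = V\,J_F^{-1}(m(B))\cdot F^T$ (inverting an $n\times n$ matrix of bitsize $\mathcal{O}(d\sigma+\tau)$, costing $\mathcal{O}(n^5\sigma + n^4\tau)$ by the analysis following Lemma~\ref{lem-nlcm}) and the strong-monotonicity check on $J_G(B)$ (costing $\mathcal{O}(n^{n+3}(d\sigma+\tau))$ by Stirling) are both absorbed into the $2^n n^{n+1} m(d\sigma+\tau)$ term of Lemma~\ref{lem-existence}, whose bottleneck is the $2^n n!$ univariate root-isolation subproblems on the faces. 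Once this is noted, the theorem follows by the arithmetic above.
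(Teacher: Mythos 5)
Your proof is correct and follows exactly the paper's (terse) argument: the paper itself just asserts ``From the analysis above, we can directly deduce the following result,'' and what it has in mind is precisely the product you compute, namely $2^n$ coordinate transformations $\times$ $(2/\rho)^n$ boxes $\times$ the per-box cost $\mathcal{O}(2^n n^{n+1} m(d\sigma+\tau))$ from Lemma~\ref{lem-existence} with $\sigma=-\log\rho$, giving the $(8/\rho)^n$ factor. Your additional observation that the \textbf{Exclusion} call and the preconditioning/matrix-inversion steps are dominated by the existence test is a worthwhile sanity check that the paper leaves implicit.
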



\section{Experiments}
We implement our algorithm in C++, and we do some experiments under Linux with a computer of 64 AMD 3990X 2.90GHz CPU and 64 GB RAM. We also realize the parallel computation with MPI. Our code can be downloaded from http://mmrc.iss.ac.cn/ \~ \,jcheng/pai/RootFinding.tar.gz as well as a simple user guider from http://mmrc.iss.ac.cn/ \~ \,jcheng/pai/Pai User Guide.pdf.
We compare our algorithm with MK test \cite{GS01, Kioustelidis,Moore2} and Bertini \cite{bertini}. There are two other famous homotopy continuation softwares: PHCpack \cite{PHCpack} and Hom4ps \cite{Hom4ps-3}. We choose Bertini to compare just because it derives real roots directly.

For systems with small sizes, we check their roots with symbolic methods such as \cite{Cheng3} and we will point out if the results of some methods are not correct.

We use NiDj to represent the systems formed by polynomials with $i$ variables and degree $j$ in the tables. NiDjE means we scaling back the coordinates of the variables. For example, the polynomials in N2D5E case is got by substituting variables $(x_1,x_2)$ with $(100*x_1,100*x_2)$, and then we can just compute the roots in $[-1,1]^2$ to get the roots in $[-100,100]^2$ of the original system, which is more efficient in some cases. In the tables, Case means the type of systems. Terms means the number of terms of each polynomial in the system. Coef means the maximal absolute value of the coefficients of the polynomials in the systems. Range means the box to search the real roots for our method and MK test. But Bertini finds all the complex roots of the given systems, including the real roots out of the given box. Roots means the number of real roots of the systems. For our method, there are three parameters: The first one is the number of certified real roots of the system inside the box; The second one is the number of the suspected boxes of the system inside the box; The last one is the number of possible real roots of the system in all the suspected boxes with the method mentioned in Remark 6 of Algorithm 2.
MK contains only the first two parameters for their roots. Bertini shows only the real roots among all its complex roots, including the real roots out of the box.  Times means the computing times for the related examples in seconds. We take 5 systems for each case to get their average computing time.
``\textasciitilde{}" means we do not compute the examples. For Bertini, it means the example is out of its computing ability or the computing time is larger than 10000 seconds or their cases are NiDjE. For MK, there are no certified real roots, thus we do not compute the examples. In the examples, we usually set the precision $\rho=10^{-6}$. See the data in Table \ref{tab-ex}.


\newpage

{\scriptsize 
\begin{longtable}[]{@{}llllllllll@{}}
\toprule()
\multirow{2}{*}{Case} & \multirow{2}{*}{Terms} &\multirow{2}{*}{ Coeff} & \multirow{2}{*}{Range} & \multicolumn{3}{|c|}{Roots} & \multicolumn{3}{|c|}{Times(seconds)}\\
\cline{5-7} \cline{8-10}
&&&&Ours& MK& Bertini & Ours& MK& Bertini\\

\midrule()
\endhead

N2D5 & 10 & 10 & -100\textasciitilde100 & 3.0/0/0 & 1.2/11.6 & 3.0 &
0.391 & 0.633 & 0.012 \\
\hline
N2D5E & 10 & 10 & -1\textasciitilde1 & 3.0/0/0 & 1.0/10.6 & \textasciitilde{} & 0.212 & 0.324 & \textasciitilde{} \\
\hline
N2D9 & 10 & 10 & -100\textasciitilde100 & 6.2/49.8/0.2 & 1.8/77.4 & 6.4  & 7.148 & 8.232 & 0.222 \\
\hline
N2D9E & 10 & 10 & -1\textasciitilde1 & 6.4/139.4/0 & 1.8/190.2 & \textasciitilde{} & 13.193 & 14.227 & \textasciitilde{} \\
\hline
N2D51 & 10 & 10 & -100\textasciitilde100 & 7.4/15.6/0 & 0.4/53.4 & 7.8 & 20.862 & 17.265 & 1140.89 \\
\hline
N2D51E & 10 & 10 & -1\textasciitilde1 & 7.6/128.4/0 & 0.4/154.0 & \textasciitilde{} & 38.208 & 38.065 & \textasciitilde{} \\
\hline
N2D101 & 10 & 10 & -100\textasciitilde100 & 6.0/390.0/3.0 & 0.4/421.0 & \textasciitilde{} & 137.141 & 140.961 & \textgreater10000 \\
\hline
N2D101E & 10 & 10 & -1\textasciitilde1 & 5.6/274.4/1.2 & 0.2/290.2 & \textasciitilde{} & 91.37 & 97.727 & \textgreater10000 \\
\hline
N3D9 & 10 & 10 & -100\textasciitilde100 & 12.6/1493.4/0.2 & 0/2348.6 & 13.2 & 721.023 & 798.912 & 5.891 \\
\hline
N3D9E & 10 & 10 & -1\textasciitilde1 & 12.8/519.0/0 & 0/1067.2 & \textasciitilde{} & 173.859 & 207.422 & \textasciitilde{} \\
\hline
N3D51 & 10 & 10 & -2\textasciitilde2 & 5.2/89.4/0 & \textasciitilde{} &
\textasciitilde{} & 45.913 & \textasciitilde{} & \textgreater10000 \\
\hline
N3D101 & 10 & 10 & -2\textasciitilde2 & 8.0/44.2/0 & \textasciitilde{} &
\textasciitilde{} & 4079.618 & \textasciitilde{} & \textgreater10000 \\
\hline
N4D9 & 10 & 10 & -2\textasciitilde2 & 9.8/0/0 & \textasciitilde{} &
\textasciitilde{} & 338.795 & \textasciitilde{} & \textgreater10000 \\
\hline
N4D51 & 10 & 10 & -1\textasciitilde1 & 2.8/0/0 & \textasciitilde{} &
\textasciitilde{} & 131.885 & \textasciitilde{} & \textgreater10000 \\
\hline
N4D101 & 10 & 10 & -1\textasciitilde1 & 1.0/0/0 & \textasciitilde{} &
\textasciitilde{} & 72.618 & \textasciitilde{} & \textgreater10000 \\
\hline
N5D11 & 10 & 10 & -1\textasciitilde1 & 1.2/0/0 & \textasciitilde{} &
\textasciitilde{} & 322.770 & \textasciitilde{} & \textgreater10000 \\
\hline
N6D11 & 10 & 10 & -1\textasciitilde1 & 1.6/0/0 & \textasciitilde{} &
\textasciitilde{} & 8071.526 & \textasciitilde{} & \textgreater10000 \\
\bottomrule()

\caption{Comparing our method with MK and Bertini for polynomial systems.} \label{tab-ex}

\end{longtable}
}
{\noindent Remarks for Table 1}:
\begin{enumerate}
\item In case N2D51, one example has 7 roots and we get 5 of them and miss finding the other 2 from the suspected boxes. In the related case N2D51E, we directly find 7 certified roots for the example but another example miss one root from the suspected boxes.

\item In case N3D9, the number of the total average roots is 13.4. Bertini misses one root, one of whose coordinate is around 3,000,000. Our code finds all the real roots of the systems inside the box $[-100,100]^3$. There are totally 3 roots out of the box.

\item In case N3D101, the computing times for 3 of the 5 systems are less than 100 seconds. But one example takes 20157 seconds and there are 11 certified solutions and no suspected boxes. We find that several roots are very close to each other. One example has 17 certified roots and 162 suspected boxes, which takes 112.398 seconds.

\item In Table \ref{tab-ex}, the examples above N3D51 does not use parallel computing and the results are proved by symbolic computation. From N3D51, we use MPI parallel computing with 30 cores and the results are without proof with symbolic computation.
\end{enumerate}

{\tiny
\begin{table}[thbp]
\center
\begin{tabular}{|c|c|c|c|c|c|}
\hline
System $[g_1,g_2,g_3]$, $g_i=$
 &Terms
 &Times(second) & Roots\\
  \hline
 {\tiny $f_i$} &5 &   1.536  &4.2/0/0 \\
 \hline
{\tiny $f_i*(x_1^2+1)$} &10 &  3.964  &4.2/1.6/0 \\
 \hline
{\tiny $f_i*(x_1^2 + x_2^2 + x_3^2 + 1)$ } &20 & 8.470  &4.2/1.6/0 \\
 \hline
{\tiny $f_i*((x_1 + x_2)^2 + (x_1 + x_3)^2 + 1))$ }&30 & 355.113 &4.2/68.8/0 \\
 \hline
{\tiny $f_i*((x_1 + x_2 + 1)^2 + (x_1 + x_3 - 1)^2 + 1))$} &40 & 326.391   &4.2/166.6/0 \\
 \hline
{\tiny $f_i*((x_1 + x_2 + x_3 + 1)^2 + 1))$}&50 &  4229.643  &4.2/4882.8/0 \\
 \hline
\end{tabular}
\caption{The influence of the number of terms of the input polynomials to our method. The first system $[f_1,f_2,f_3]$ is with randomly generated polynomials of degree 11 and 5 terms. All the systems are expanded before computing. We consider the roots inside $[-1,1]^3$. 
} \label{tab-term}
\end{table}
}

\newpage
In order to analyse the influence of the number of the terms to our method, we design the examples in Table \ref{tab-term}. The systems all have the same solutions but the number of the terms of the polynomials are different. We can find that our method is sensitive to the number of terms of the polynomials in the systems. When the terms of the polynomials increase, the computing times increase and the suspected roots increase. This claim matches the complexity analysis in Lemma \ref{lem-existence}.

The examples in Table \ref{tab-coef} shows the influence of the bit size of the coefficients of the polynomials in the systems to our method. Increasing the bit size of the coefficients does not influence a lot to our method or to MK method but does influence a lot to Bertini since the trick step in Bertini is sensitive to the bit size of the coefficients. This phenomenon matches the bit complexity analysis of our algorithm, see Theorem \ref{thm-complexity}. We can also find that Bertini can not work for systems with large coefficients.

\begin{table}[thbp]
\center
\begin{tabular}{|c|c|c|c|c|c|c|}
\hline
 \multirow{2}{*}{System $[g_1,g_2,g_3]$, $g_i=$}
 & \multicolumn{3}{|c|}{Times(second)} & \multicolumn{3}{|c|}{Roots}\\
  \cline{2-4} \cline{5-7}
&Ours & MK& Bertini &Ours & MK & Bertini\\%
  \hline
 {\tiny $f_i$} &47.093 & 49.262 & 0.213 &4.8/138.2/0 & 0/262.4 & 5.0 \\
 \hline
{\tiny $f_i*2^{10}$} &45.397 & 48.985 & 0.636 &4.8/138.2/0 & 0/262.4 & 5.0 \\
 \hline
{\tiny $f_i*2^{50}$ } &45.362 & 49.254& \textasciitilde{} &4.8/138.2/0 & 0/262.4 & \textasciitilde{}\\
 \hline
{\tiny $f_i*2^{100}$ }&45.464 & 49.302&\textasciitilde{} &4.8/138.2/0 & 0/262.4 & \textasciitilde{}\\
 \hline
{\tiny $f_i*2^{200}$} &45.484 & 49.009 & \textasciitilde{}  &4.8/138.2/0 & 0/262.4 & \textasciitilde{} \\
 \hline
\end{tabular}
\caption{The influence of the bitsize of the coefficients  of the input polynomials to the methods. The first system $[f_1,f_2,f_3]$ is with randomly generated polynomials of degree 5 and 10 terms. The other systems are formed as shown in the table. The systems are expanded before computing. We consider the roots inside $[-100,100]^3$. There is 1 root, one of whose coordinate is out of $[-100,100]$ that is why the number of our roots is 4.8 but that of Bertini is 5.0. Bertini cannot work for systems with large coefficients.} \label{tab-coef}
\end{table}

We also check the influence of the number of real roots inside a box to our method, see Table \ref{tab-root}. If a system has more real roots inside a box, then our methods will take more times. It is reasonable since more roots mean that there are more boxes need to do the existence checking which is time-consuming. The number of real roots of a system almost does not influence the computing times of Bertini since the number of the total complex roots is unchanged under the situation.

\begin{table}[thbp]
\center
\begin{tabular}{|c|c|c|c|c|}
\hline
 \multirow{2}{*}{\# roots}
 & \multicolumn{2}{|c|}{Times} & \multicolumn{2}{|c|}{Roots}\\
 \cline{2-3} \cline{4-5}
 &Bertini & our method &Bertini & our method\\%
  \hline
  8&1.100&  42.839  &8&8/0/0 \\
 \hline
  16& 0.900 &  61.049  &16&16/0/0 \\
 \hline
  24&1.050 &  114.557 &24&24/0/0 \\
 \hline
 32 &1.180&  301.210 &32&32/0/0 \\
 \hline
  40 &0.900& 1082.670 &40&40/728/0 \\
 \hline
 48 &1.280& 982.576 &48&48/536/0 \\
 \hline
\end{tabular}
\caption{The influence of the number of roots of the input polynomials to the methods. The systems are $[f-a_i,g-b_i,h-c_i]$ such that they have 8, 16, 24, 32, 40, 48 real roots respectively, where $[a_i,b_i,c_i]$ are $[62,61,63]$, $[5,61,61]$, $[32,31,37]$, $[22,21,17]$, $[2,63,7]$, $[10,10,10]$ respectively and $f,g,h$ are the expansions of $ \left( 25\,{x}^{2}-2 \right)  \left( 25\,{y}^{2}-11 \right)  \left(
25\,{z}^{2}-5 \right)$,  $\left( 25\,{x}^{2}-11 \right)  \left( 25\,{y}^
{2}-5 \right)  \left( 25\,{z}^{2}-3 \right) , \left( 25\,{x}^{2}-5
 \right)  \left( 25\,{y}^{2}-2 \right)  \left( 25\,{z}^{2}-11 \right)
$ respectively.  All the roots are inside $[-1,1]^3$. } \label{tab-root}
\end{table}

We also check the systems with multiple zeros, see Table \ref{tab-multi}. Usually, systems with multiple real zeros will take more computing time since near the multiple zeros there exist many suspected boxes. To exclude the ones without real roots with the method mentioned in Remark 6 of Algorithm 2 
is time-consuming. And in Case multiN3D12, when excluding suspected boxes, one root is counted twice since it is very close to the boundaries of two suspected boxes.
Some multiple roots of the systems may be computed as several roots by Bertini, so the numbers of roots in Cases multiN2D6 and multiN2D12 are not exactly the numbers of the exact roots of the systems.

\begin{table}[thbp]
\center
\begin{tabular}{|c|c|c|c|c|c|}
\hline
 \multirow{2}{*}{Case}& \multirow{2}{*}{Exact roots}
 & \multicolumn{2}{|c|}{Roots} & \multicolumn{2}{|c|}{Times}\\
 \cline{3-4} \cline{5-6}
& &Ours & Bertini &Ours & Bertini\\%
  \hline
  multiN2D6&3.8 &2.4/8665/1.4 & 6.8 & 334.389 &  0.09 \\
  \hline
multiN2D12&6.4 &2.6/16535/3.8  & 9.8 & 6149.520 &  1.518 \\
  \hline
multiN3D6&6.4 &2.2/6744.2/4.2  & 6.4 & 264.546 &  2.446 \\
  \hline
multiN3D12&6.6 &1.0/16779.2/5.8 &  6.6 & 3115.749  & 1929.558 \\
 \hline
\end{tabular}
\caption{We test systems with multiple real zeros with our method and Bertini. For cases multiN2D6 and multiN2D12, we first randomly generate two polynomials $f$ and $g$ with 4 terms, then the polynomial system $\{f_1,f_2\}$ is got by: $f_1=f*g$ and $f_2=\frac{\partial f_1}{\partial x_2}$.
For cases multiN3D6 and multiN3D12, we first randomly generate two
polynomials $f$ and $g$ with 4 terms, then the polynomial system
$\{f_1,f_2,f_3\}$ is got by: $f_1=f*g, \, f_2=\frac{\partial f_1}{\partial x_1}$ and $f_3$ is a randomly generated polynomial.} \label{tab-multi}
\end{table}

We test the five systems N2D9 in Table \ref{tab-ex}  with our method for different precisions, see Table \ref{tab-pre}. We can find that the higher precision we use, the less suspected boxes we get and the more computing times we need.

\begin{table}[thbp]
\centering
{\small
\begin{tabular}{|c|c|c|c|c|}
\hline
$\rho$ & Certified roots & Suspected boxes & Refined roots & Times \\
\hline
$10^{-2}$ & 1/3/2/3/1 & 31/106/91/87/44 & 2/6/5/2/7 &
1.735/4.346/4.671/3.916/3.192 \\ \hline
$10^{-4}$ & 3/8/6/4/8 & 14/57/128/106/8 & 0/1/1/1/0 &
2.737/8.640/8.849/9.847/4.345 \\ \hline
$10^{-6}$ & 3/9/6/5/8 & 0/26/123/100/0 & 0/0/1/0/0 &
3.014/11.083/14.595/16.017/4.446 \\ \hline
$10^{-8}$ & 3/9/7/5/8 & 0/0/3/0/0 & 0/0/0/0/0 &
3.031/11.389/18.164/17.942/4.335 \\ \hline
$10^{-10}$ & 3/9/7/5/8 & 0/0/0/0/0 & 0/0/0/0/0 &
3.049/11.390/18.332/18.153/4.429 \\ \hline
\end{tabular}
}
\caption{The influence of the termination precisions to the 5 systems in N2D9 in Table \ref{tab-ex}.   }\label{tab-pre}
\end{table}

We check our implementation of parallel computing with MPI on N4D9 in Table \ref{tab-ex}.  We use 4, 8, 16, 32, 60 cores to compute the 5 systems and get the average computing time. The related data is given in Table \ref{tab-parallel}. It shows the speedup of the parallel computing of our code.
\begin{table}[thbp]
\center
\begin{tabular}{|c|c|c|c|c|c|}
\hline
Number of cores & 4 & 8& 16 & 32& 60 \\
\hline
Computing times & 1299.0 & 803.2 & 533.4&364.8&250.8\\
\hline
\end{tabular}
\caption{The influence of the number of cores of parallel computing with our method for N4D9 in Table \ref{tab-ex}.} \label{tab-parallel}
\end{table}

From Table \ref{tab-ex}, we can find that our method is sensitive to the number of variables and the degree of the polynomials also influence the computing times, this suits for the complexity analysis in Theorem \ref{thm-complexity}. For some examples whose roots are distributed in a very bad position, our method may take much time.


Compared to the three related methods, our method is complete for square nonlinear systems with only simple roots, as the MK method. But the interval Newton's method and the $\alpha$-theory method are not complete. Since the exclusion test of MK method and our method are the same, the computing times of two methods are very close if we do not exclude the suspected roots. But we have more certified roots. The existence checking method of MK seldom works for systems with more than two variables. But our method works for systems with 6 variables as shown in Table \ref{tab-ex}. The reason is that Miranda theorem works for evaluating the functions of the systems directly on some of the boundaries of the boxes to require that the result interval does not contain zero. For certifying a simple root, the MK method, our method and the interval Newton's method need only to compute the evaluations of the functions in the system and their order-one derivatives.  The $\alpha$-theory method needs compute higher order derivatives of each function in the system. The existence criteria of the interval Newton method, the $\alpha$-theory method and our method, all need the information of order-one derivatives of the functions. But the Miranda based method needs only the evaluation of the functions. When the box which contains a zero of the system becomes small, the functions are very close to zero. When evaluated on the boundaries of the box, which are also a box (or an interval), the intervals derived from the functions contain zero with high probability, especially for functions with more than two variables. The deeper subdivision does not change the situation and even makes it worse in practice because of the interval calculus. This is the reason why the Miranda based method works only for less variables systems with lower degrees. Our experiments support the claim. Our method avoids using functions directly but using order-one derivatives of the functions for the existence of a zero. It works for systems with more variables and high degrees.

Compared to Bertini, our method works well for polynomial systems with larger B\'{e}zout bound, higher degrees and less variables. Bertini works well for systems with many variables but not so larger B\'{e}zout bound. Our method can find roots of non-polynomial systems but Bertini can not. There are two other advantages of our method: One is that it can find roots of a system in a fixed local region. The other is that each of our certified root box contains exactly one real root of the system. Currently, our method can work only for systems with only several variables. To overcome this shortcoming is our future work.

We will use our code to solve two problems in applications.
\begin{example}
The equations of this example is from robotics and describes the inverse kinematics of an elbow manipulator. One can find the problem in \cite{Hkapur,Hong0}. We solve directly the original one without transforming it to an algebraic system. Thus we have only 6 variables but there are 12 variables in \cite{Hkapur,Hong0}. The following is the system given in \cite{Hkapur,Hong0}. 
{\small
\begin{eqnarray} \label{eqn-ex1}
&&s2\,c5\,s6-s3\,c5\,s6-s4\,c5\,s6+c2\,c6+c3\,c6+c4\,c6-.4077,\nonumber \\
&&c1\,c2\,s5+c1\,c3\,s5+c1\,c4\,s5+s1\,c5-1.9115,\nonumber \\
&& s2\,s5+s3\,s5+s4\,s5-1.9791,\nonumber \\
&& 3\,c1\,c2+2\,c1\,c3+c1\,c4-4.0616,\\
&& 3\,s1\,c2+2\,s1\,c3+s1\,c4-1.7172,\nonumber \\
&&3\,s2+2\,s3+s4-3.9701,\nonumber \\
&& si^2+ci^2-1, i=1,\ldots,6. \nonumber
\end{eqnarray}
}
We replace $si,ci$ with $\sin(6.3\,x_i), \cos(6.3\,x_i)$ for $i=1,\ldots,6$ in \bref{eqn-ex1}. Thus we can consider $x_i$ in $[0,1]$ such that $6.3 x_i$ covers $[0,2\pi]$. Doing so, we get all the solutions of the system. Computing the real roots of the new system in $[0,1]^6$ with precision $\rho=10^{-3}$ and 32 cores with MPI parallel computing, we can get 10 certified roots:
{\tiny
\begin{eqnarray*}
&&[[0.0625 , 0.0644531],[0.0957031 , 0.0976562],[0.148438 , 0.150391],[0.107422 , 0.109375],[0.277344 , 0.279297],[0.230469 , 0.232422]],\\
&&[[0.0625 , 0.0644531],[0.128906 , 0.130859],[0.0820312 , 0.0839844],[0.140625 , 0.142578],[0.275391 , 0.277344],[0.224609 , 0.226562]],\\
&&[[0.0625 , 0.0644531],[0.0957031 , 0.0976562],[0.148438 , 0.150391],[0.107422 , 0.109375],[0.277344 , 0.279297],[0.787109 , 0.789062]],\\
&&[[0.0625 , 0.0644531],[0.128906 , 0.130859],[0.0820312 , 0.0839844],[0.140625 , 0.142578],[0.275391 , 0.277344],[0.783203 , 0.785156]],\\
&&[[0.560547 , 0.5625],[0.400391 , 0.402344],[0.347656 , 0.349609],[0.390625 , 0.392578],[0.220703 , 0.222656],[0.289062 , 0.291016]],\\
&&[[0.560547 , 0.5625],[0.400391 , 0.402344],[0.347656 , 0.349609],[0.390625 , 0.392578],[0.220703 , 0.222656],[0.728516 , 0.730469]],\\
&&[[0.561523 , 0.5625],[0.368164 , 0.369141],[0.415039 , 0.416016],[0.357422 , 0.358398],[0.22168 , 0.222656],[0.285156 , 0.286133]],\\
&&[[0.561523 , 0.5625],[0.390625 , 0.391602],[0.394531 , 0.395508],[0.329102 , 0.330078],[0.304688 , 0.305664],[0.257812 , 0.258789]],\\
&&[[0.561523 , 0.5625],[0.368164 , 0.369141],[0.415039 , 0.416016],[0.357422 , 0.358398],[0.22168 , 0.222656],[0.724609 , 0.725586]],\\
&&[[0.561523 , 0.5625],[0.390625 , 0.391602],[0.394531 , 0.395508],[0.329102 , 0.330078],[0.304688 , 0.305664],[0.693359 , 0.694336]].
\end{eqnarray*}
}
Furthermore, we get 4116 suspected boxes and find none roots from them. We miss 6 real roots.  The computing time is 843.994 seconds.

If we set $\rho=10^{-6}$, we get exactly 16 certified roots and no suspected boxes which takes 2133.05 seconds:
{\tiny
\begin{eqnarray*}
&&[[0.0625 , 0.0644531],[0.0957031 , 0.0976562],[0.148438 , 0.150391],[0.107422 , 0.109375],[0.277344 , 0.279297],[0.230469 , 0.232422]],\\
&&[[0.0625 , 0.0644531],[0.128906 , 0.130859],[0.0820312 , 0.0839844],[0.140625 , 0.142578],[0.275391 , 0.277344],[0.224609 , 0.226562]],\\
&&[[0.0625 , 0.0644531],[0.0957031 , 0.0976562],[0.148438 , 0.150391],[0.107422 , 0.109375],[0.277344 , 0.279297],[0.787109 , 0.789062]],\\
&&[[0.0625 , 0.0644531],[0.128906 , 0.130859],[0.0820312 , 0.0839844],[0.140625 , 0.142578],[0.275391 , 0.277344],[0.783203 , 0.785156]],\\
&&[[0.560547 , 0.5625],[0.400391 , 0.402344],[0.347656 , 0.349609],[0.390625 , 0.392578],[0.220703 , 0.222656],[0.289062 , 0.291016]],\\
&&[[0.560547 , 0.5625],[0.400391 , 0.402344],[0.347656 , 0.349609],[0.390625 , 0.392578],[0.220703 , 0.222656],[0.728516 , 0.730469]],\\
&&[[0.561523 , 0.5625],[0.368164 , 0.369141],[0.415039 , 0.416016],[0.357422 , 0.358398],[0.22168 , 0.222656],[0.285156 , 0.286133]],\\
&&[[0.561523 , 0.5625],[0.390625 , 0.391602],[0.394531 , 0.395508],[0.329102 , 0.330078],[0.304688 , 0.305664],[0.257812 , 0.258789]],\\
&&[[0.561523 , 0.5625],[0.368164 , 0.369141],[0.415039 , 0.416016],[0.357422 , 0.358398],[0.22168 , 0.222656],[0.724609 , 0.725586]],\\
&&[[0.561523 , 0.5625],[0.390625 , 0.391602],[0.394531 , 0.395508],[0.329102 , 0.330078],[0.304688 , 0.305664],[0.693359 , 0.694336]],\\
&&[[0.562012 , 0.5625],[0.40332 , 0.403809],[0.371582 , 0.37207],[0.339844 , 0.340332],[0.308105 , 0.308594],[0.25293 , 0.253418]],\\
&&[[0.562012 , 0.5625],[0.40332 , 0.403809],[0.371582 , 0.37207],[0.339844 , 0.340332],[0.308105 , 0.308594],[0.688965 , 0.689453]],\\
&&[[0.0634766 , 0.0637207],[0.107178 , 0.107422],[0.103516 , 0.10376],[0.169189 , 0.169434],[0.193115 , 0.193359],[0.195068 , 0.195312]],\\
&&[[0.0634766 , 0.0637207],[0.0952148 , 0.095459],[0.126953 , 0.127197],[0.158691 , 0.158936],[0.19043 , 0.190674],[0.190674 , 0.190918]],\\
&&[[0.0634766 , 0.0637207],[0.107178 , 0.107422],[0.103516 , 0.10376],[0.169189 , 0.169434],[0.193115 , 0.193359],[0.756348 , 0.756592]],\\
&&[[0.0634766 , 0.0637207],[0.0952148 , 0.095459],[0.126953 , 0.127197],[0.158691 , 0.158936],[0.19043 , 0.190674],[0.751465 , 0.751709]].
\end{eqnarray*}
}

\end{example}

\begin{example} The following problem is the inverse
position problem for a six-revolute-joint problem in mechanics, one can find it in \cite{Hkapur,morgan}.
The defining equations in the reference are as below.
{\small
\begin{eqnarray*}
&&a_{{{\it i3}}}{\it x_2}\,{\it x_3}+a_{{{\it i4}}}{\it x_2}\,{\it x_4}+a_{{
{\it il}}}{\it x_1}\,{\it x_3}+a_{{{\it i2}}}{\it x_1}\,{\it x_4}+a_{{{
\it i5}}}{\it x_5}\,{\it x_7}+a_{{{\it i6}}}{\it x_5}\,{\it x_8}+a_{{{\it
i7}}}{\it x_6}\,{\it x_7}+a_{{{\it i8}}}{\it x_6}\,{\it x_8}\\
&&+a_{{{\it i9}}
}{\it x_1}+a_{{{\it i10}}}{\it x_2}+a_{{{\it i11}}}{\it x_3}+a_{{{\it i12
}}}{\it x_4}+a_{{{\it i13}}}{\it x_5}+a_{{{\it i14}}}{\it x_6}+a_{{{\it
i15}}}{\it x_7}+a_{{{\it i16}}}{\it x_8}+a_{{{\it i17}}},\\
&&x_i^2+x_{i+1}^2-1, i=1,3,5,7.
\end{eqnarray*}
}
There are 8 equations and 8 variables. The values of $a_{i,j}, j=1,\ldots,17$ can be found in \cite{Hkapur}.
We replace $x_i,x_{i+1}$ with $\sin(6.3\,y_{\frac{i+1}{2}}), \cos(6.3\,y_{\frac{i+1}{2}})$ for $i=1,3,5,7$ in the first equation above.
Then we can get a system with 4 equations and 4 variables. Computing its real roots in $[0,1]^4$ with precision $\rho=10^{-3}$  and 32 cores with MPI parallel computing, we get 9 certified roots:
{\tiny
\begin{eqnarray*}
&&[[0.199219 , 0.203125],[0.195312 , 0.199219],[0.703125 , 0.707031],[0.167969 , 0.171875]],\\
&&[[0.175781 , 0.177734],[0.228516 , 0.230469],[0.375 , 0.376953],[0.609375 , 0.611328]],\\
&&[[0.189453 , 0.191406],[0.871094 , 0.873047],[0.0703125 , 0.0722656],[0.496094 , 0.498047]],\\
&&[[0.917969 , 0.919922],[0.152344 , 0.154297],[0.355469 , 0.357422],[0.630859 , 0.632812]],\\
&&[[0.902344 , 0.904297],[0.0566406 , 0.0585938],[0.662109 , 0.664062],[0.179688 , 0.181641]],\\
&&[[0.859375 , 0.861328],[0.345703 , 0.347656],[0.748047 , 0.75],[0.148438 , 0.150391]],\\
&&[[0.712891 , 0.714844],[0.728516 , 0.730469],[0.746094 , 0.748047],[0.257812 , 0.259766]],\\
&&[[0.0722656 , 0.0732422],[0.379883 , 0.380859],[0.566406 , 0.567383],[0.387695 , 0.388672]],\\
&&[[0.21875 , 0.219727],[0.960938 , 0.961914],[0.0576172 , 0.0585938],[0.513672 , 0.514648]].
\end{eqnarray*}
}
and 2143 suspected boxes. We did not find roots from the suspected boxes. It takes 322.695 seconds.

If we set $\rho=10^{-6}$, we get exactly 12 certified roots and no suspected boxes which takes 902.642 seconds:
{\tiny
\begin{eqnarray*}
&&[[0.199219 , 0.203125],[0.195312 , 0.199219],[0.703125 , 0.707031],[0.167969 , 0.171875]],\\
&&[[0.175781 , 0.177734],[0.228516 , 0.230469],[0.375 , 0.376953],[0.609375 , 0.611328]],\\
&&[[0.189453 , 0.191406],[0.871094 , 0.873047],[0.0703125 , 0.0722656],[0.496094 , 0.498047]],\\
&&[[0.917969 , 0.919922],[0.152344 , 0.154297],[0.355469 , 0.357422],[0.630859 , 0.632812]],\\
&&[[0.902344 , 0.904297],[0.0566406 , 0.0585938],[0.662109 , 0.664062],[0.179688 , 0.181641]],\\
&&[[0.859375 , 0.861328],[0.345703 , 0.347656],[0.748047 , 0.75],[0.148438 , 0.150391]],\\
&&[[0.712891 , 0.714844],[0.728516 , 0.730469],[0.746094 , 0.748047],[0.257812 , 0.259766]],\\
&&[[0.0722656 , 0.0732422],[0.379883 , 0.380859],[0.566406 , 0.567383],[0.387695 , 0.388672]],\\
&&[[0.21875 , 0.219727],[0.960938 , 0.961914],[0.0576172 , 0.0585938],[0.513672 , 0.514648]],\\
&&[[0.0625 , 0.0629883],[0.385742 , 0.38623],[0.567871 , 0.568359],[0.391602 , 0.39209]],\\
&&[[0.0993652 , 0.0996094],[0.712646 , 0.712891],[0.292236 , 0.29248],[0.515625 , 0.515869]],\\
&&[[0.656242 , 0.65625],[0.773285 , 0.773293],[0.321335 , 0.321342],[0.549934 , 0.549942]].\\
\end{eqnarray*}
}

\end{example}
This two examples show that our method works for non-polynomial systems and gives certified solutions.

\section{Conclusion}
In this paper, we propose a numerical method to isolate real zeros of a zero-dimensional multivariate square nonlinear system. We present the concept of the O-M system and the S-M system for a multivariate nonlinear system in an $n$-D box. Based on that, a new existence criterion of a real zero of a system inside a box which is different from Miranda based method is presented, it is much easier to be satisfied than Miranda based method. The uniqueness of a real zero of a system inside a box presented in the paper is related to the existence condition and it contains the traditional Jacobian test. For nonlinear systems, we use the exclusion test to get candidate boxes and for polynomial systems, we can use the bounding polynomials to get the candidate boxes. Then we check the uniqueness and existence conditions for each candidate box. If it succeeds, we get an isolating box of the system. If not, we split these candidate boxes until they satisfy the conditions or their widths reach a given precision. Our method is complete for systems with only finite simple real roots inside a box. We implemented the presented algorithms which shows it works well. The shortcoming of our current method is that we can not solving systems with so many variables.  In the future, we will overcome this shortcoming and consider real zero isolating of high dimensional systems.
\bibliographystyle{plain}

\bibliography{ref_nb_rootfinding}

\end{document}